\documentclass[aps,prl,twocolumn,superscriptaddress,10pt]{revtex4-2}

\usepackage[colorlinks=true,allcolors=blue]{hyperref}
\usepackage{url}
\usepackage[utf8]{inputenc}
\usepackage{amsmath,amssymb,amsthm}
\usepackage[normalem]{ulem}
\usepackage{graphicx}
\usepackage{diagbox}
\usepackage{booktabs}
\usepackage{longtable}
\usepackage{float}
\usepackage{microtype}
\usepackage{cleveref}
\usepackage{xcolor}
\usepackage{outlines}
\usepackage{tikz}
\usetikzlibrary{decorations.pathreplacing,calc,intersections,quantikz}

\usepackage{silence}
\usepackage[T1]{fontenc}

\newcommand{\dv}[1]{{\fontfamily{DejaVuSans-TLF}\selectfont #1}}

\newtheorem{theorem}{Theorem}
\newtheorem{lemma}[theorem]{Lemma}

\newtheorem{proposition}[theorem]{Proposition}
\theoremstyle{definition}
\newtheorem{definition}[theorem]{Definition}

\newcommand{\qparentfour}{https://algassert.com/quirk\#circuit=\%7B\%22cols\%22\%3A\%5B\%5B\%22H\%22\%2C\%22H\%22\%2C\%22H\%22\%2C\%22H\%22\%5D\%2C\%5B\%22zpar\%22\%2C\%22zpar\%22\%2C\%22zpar\%22\%2C\%22zpar\%22\%2C\%22\%E2\%88\%9Ai\%22\%5D\%2C\%5B\%22zpar\%22\%2C1\%2C\%22zpar\%22\%2C\%22zpar\%22\%2C\%22\%E2\%88\%9Ai\%22\%5D\%2C\%5B\%22zpar\%22\%2C\%22zpar\%22\%2C1\%2C\%22zpar\%22\%2C\%22\%E2\%88\%9Ai\%22\%5D\%2C\%5B\%22zpar\%22\%2C\%22zpar\%22\%2C\%22zpar\%22\%2C1\%2C\%22\%E2\%88\%9Ai\%22\%5D\%2C\%5B\%22zpar\%22\%2C\%22zpar\%22\%2C1\%2C1\%2C\%22\%E2\%88\%9Ai\%22\%5D\%2C\%5B\%22zpar\%22\%2C1\%2C\%22zpar\%22\%2C1\%2C\%22\%E2\%88\%9Ai\%22\%5D\%2C\%5B\%22zpar\%22\%2C1\%2C1\%2C\%22zpar\%22\%2C\%22\%E2\%88\%9Ai\%22\%5D\%2C\%5B\%22~fu1l\%22\%5D\%2C\%5B\%22Bloch\%22\%5D\%2C\%5B1\%2C\%22zpar\%22\%2C\%22zpar\%22\%2C\%22zpar\%22\%2C\%22\%E2\%88\%9Ai\%22\%5D\%2C\%5B1\%2C\%22zpar\%22\%2C\%22zpar\%22\%2C1\%2C\%22\%E2\%88\%9Ai\%22\%5D\%2C\%5B1\%2C\%22zpar\%22\%2C1\%2C\%22zpar\%22\%2C\%22\%E2\%88\%9Ai\%22\%5D\%2C\%5B1\%2C\%22~fu1l\%22\%5D\%2C\%5B1\%2C\%22Bloch\%22\%5D\%2C\%5B1\%2C1\%2C\%22zpar\%22\%2C\%22zpar\%22\%2C\%22\%E2\%88\%9Ai\%22\%5D\%2C\%5B1\%2C1\%2C\%22~fu1l\%22\%5D\%2C\%5B1\%2C1\%2C\%22Bloch\%22\%5D\%2C\%5B1\%2C1\%2C1\%2C\%22~fu1l\%22\%5D\%2C\%5B1\%2C1\%2C1\%2C\%22Bloch\%22\%5D\%2C\%5B\%22H\%22\%2C\%22H\%22\%2C\%22H\%22\%2C\%22H\%22\%5D\%2C\%5B\%22\%7C0\%E2\%9F\%A9\%E2\%9F\%A80\%7C\%22\%2C\%22\%7C0\%E2\%9F\%A9\%E2\%9F\%A80\%7C\%22\%2C\%22\%7C0\%E2\%9F\%A9\%E2\%9F\%A80\%7C\%22\%2C\%22\%7C0\%E2\%9F\%A9\%E2\%9F\%A80\%7C\%22\%5D\%5D\%2C\%22gates\%22\%3A\%5B\%7B\%22id\%22\%3A\%22~dvdi\%22\%2C\%22matrix\%22\%3A\%22\%7B\%7B\%E2\%88\%9A\%C2\%BD-\%E2\%88\%9A\%C2\%BDi\%2C0\%7D\%2C\%7B0\%2C\%E2\%88\%9A\%C2\%BD\%2B\%E2\%88\%9A\%C2\%BDi\%7D\%7D\%22\%7D\%2C\%7B\%22id\%22\%3A\%22~2gmh\%22\%2C\%22name\%22\%3A\%22Z(pi\%2F4)\%22\%2C\%22matrix\%22\%3A\%22\%7B\%7B\%E2\%88\%9A\%C2\%BD-\%E2\%88\%9A\%C2\%BDi\%2C0\%7D\%2C\%7B0\%2C\%E2\%88\%9A\%C2\%BD\%2B\%E2\%88\%9A\%C2\%BDi\%7D\%7D\%22\%7D\%2C\%7B\%22id\%22\%3A\%22~fu1l\%22\%2C\%22name\%22\%3A\%22Z(pi\%2F8)\%22\%2C\%22matrix\%22\%3A\%22\%7B\%7B0.9238795-0.3826834i\%2C0\%7D\%2C\%7B0\%2C0.9238795\%2B0.3826834i\%7D\%7D\%22\%7D\%5D\%7D}
\newcommand{\qaaazthreetwo}{https://algassert.com/quirk\#circuit=\%7B\%22cols\%22\%3A\%5B\%5B1\%2C1\%2C\%22H\%22\%2C\%22H\%22\%2C\%22H\%22\%2C\%22H\%22\%5D\%2C\%5B1\%2C1\%2C\%22zpar\%22\%2C1\%2C\%22zpar\%22\%2C\%22zpar\%22\%2C\%22\%E2\%88\%9Ai\%22\%5D\%2C\%5B1\%2C1\%2C\%22zpar\%22\%2C\%22zpar\%22\%2C1\%2C\%22zpar\%22\%2C\%22\%E2\%88\%9Ai\%22\%5D\%2C\%5B1\%2C1\%2C\%22zpar\%22\%2C1\%2C1\%2C\%22zpar\%22\%2C\%22\%E2\%88\%9Ai\%22\%5D\%2C\%5B1\%2C1\%2C\%22zpar\%22\%2C\%22zpar\%22\%2C\%22zpar\%22\%2C\%22zpar\%22\%2C\%22\%E2\%88\%9Ai\%22\%5D\%2C\%5B1\%2C1\%2C\%22zpar\%22\%2C\%22zpar\%22\%2C\%22zpar\%22\%2C1\%2C\%22\%E2\%88\%9Ai\%22\%5D\%2C\%5B1\%2C1\%2C\%22zpar\%22\%2C\%22zpar\%22\%2C1\%2C1\%2C\%22\%E2\%88\%9Ai\%22\%5D\%2C\%5B1\%2C1\%2C\%22zpar\%22\%2C1\%2C\%22zpar\%22\%2C1\%2C\%22\%E2\%88\%9Ai\%22\%5D\%2C\%5B1\%2C1\%2C\%22~d9b7\%22\%5D\%2C\%5B1\%2C1\%2C\%22H\%22\%5D\%2C\%5B1\%2C1\%2C\%22Chance\%22\%5D\%2C\%5B1\%2C1\%2C\%22\%7C0\%E2\%9F\%A9\%E2\%9F\%A80\%7C\%22\%5D\%2C\%5B1\%2C1\%2C1\%2C\%22X\%5E\%C2\%BD\%22\%5D\%2C\%5B1\%2C1\%2C1\%2C\%22\%E2\%80\%A2\%22\%2C\%22X\%22\%2C\%22X\%22\%5D\%2C\%5B1\%2C1\%2C1\%2C\%22\%E2\%8A\%96\%22\%2C\%22Z\%22\%2C\%22Z\%22\%5D\%2C\%5B1\%2C1\%2C1\%2C\%22Z\%5E-\%C2\%BC\%22\%5D\%2C\%5B1\%2C1\%2C1\%2C\%22\%E2\%8A\%96\%22\%2C\%22Z\%22\%2C\%22Z\%22\%5D\%5D\%2C\%22gates\%22\%3A\%5B\%7B\%22id\%22\%3A\%22~dvdi\%22\%2C\%22matrix\%22\%3A\%22\%7B\%7B\%E2\%88\%9A\%C2\%BD-\%E2\%88\%9A\%C2\%BDi\%2C0\%7D\%2C\%7B0\%2C\%E2\%88\%9A\%C2\%BD\%2B\%E2\%88\%9A\%C2\%BDi\%7D\%7D\%22\%7D\%2C\%7B\%22id\%22\%3A\%22~2gmh\%22\%2C\%22name\%22\%3A\%22Z(pi\%2F4)\%22\%2C\%22matrix\%22\%3A\%22\%7B\%7B\%E2\%88\%9A\%C2\%BD-\%E2\%88\%9A\%C2\%BDi\%2C0\%7D\%2C\%7B0\%2C\%E2\%88\%9A\%C2\%BD\%2B\%E2\%88\%9A\%C2\%BDi\%7D\%7D\%22\%7D\%2C\%7B\%22id\%22\%3A\%22~ntmg\%22\%2C\%22name\%22\%3A\%22Z(pi\%2F4)\%22\%2C\%22matrix\%22\%3A\%22\%7B\%7B0.9238795-0.3826834i\%2C0\%7D\%2C\%7B0\%2C0.9238795\%2B0.3826834i\%7D\%7D\%22\%7D\%2C\%7B\%22id\%22\%3A\%22~d9b7\%22\%2C\%22name\%22\%3A\%22Z(pi\%2F8)\%22\%2C\%22matrix\%22\%3A\%22\%7B\%7B0.9238795-0.3826834i\%2C0\%7D\%2C\%7B0\%2C0.9238795\%2B0.3826834i\%7D\%7D\%22\%7D\%2C\%7B\%22id\%22\%3A\%22~703g\%22\%2C\%22name\%22\%3A\%22Z(-\%5C\%5Cpi\%2F8)\%22\%2C\%22matrix\%22\%3A\%22\%7B\%7B0.9238795\%2B0.3826834i\%2C0\%7D\%2C\%7B0\%2C0.9238795-0.3826834i\%7D\%7D\%22\%7D\%2C\%7B\%22id\%22\%3A\%22~i2pe\%22\%2C\%22name\%22\%3A\%22Z(-pi\%2F4)\%22\%2C\%22matrix\%22\%3A\%22\%7B\%7B\%E2\%88\%9A\%C2\%BD\%2B\%E2\%88\%9A\%C2\%BDi\%2C0\%7D\%2C\%7B0\%2C\%E2\%88\%9A\%C2\%BD-\%E2\%88\%9A\%C2\%BDi\%7D\%7D\%22\%7D\%5D\%7D}

\newcommand{\qtwelvetwotwo}{https://algassert.com/quirk\#circuit=\%7B\%22cols\%22\%3A\%5B\%5B1\%2C1\%2C\%22H\%22\%2C\%22H\%22\%2C\%22H\%22\%2C\%22H\%22\%5D\%2C\%5B1\%2C1\%2C\%22zpar\%22\%2C1\%2C\%22zpar\%22\%2C\%22zpar\%22\%2C\%22\%E2\%88\%9Ai\%22\%5D\%2C\%5B1\%2C1\%2C\%22zpar\%22\%2C\%22zpar\%22\%2C1\%2C\%22zpar\%22\%2C\%22\%E2\%88\%9Ai\%22\%5D\%2C\%5B1\%2C1\%2C\%22zpar\%22\%2C1\%2C1\%2C\%22zpar\%22\%2C\%22\%E2\%88\%9Ai\%22\%5D\%2C\%5B1\%2C1\%2C\%22zpar\%22\%2C\%22zpar\%22\%2C\%22zpar\%22\%2C\%22zpar\%22\%2C\%22\%E2\%88\%9Ai\%22\%5D\%2C\%5B1\%2C1\%2C\%22zpar\%22\%2C\%22zpar\%22\%2C\%22zpar\%22\%2C1\%2C\%22\%E2\%88\%9Ai\%22\%5D\%2C\%5B1\%2C1\%2C\%22zpar\%22\%2C\%22zpar\%22\%2C1\%2C1\%2C\%22\%E2\%88\%9Ai\%22\%5D\%2C\%5B1\%2C1\%2C\%22zpar\%22\%2C1\%2C\%22zpar\%22\%2C1\%2C\%22\%E2\%88\%9Ai\%22\%5D\%2C\%5B1\%2C1\%2C1\%2C\%22zpar\%22\%2C\%22zpar\%22\%2C1\%2C\%22\%E2\%88\%9Ai\%22\%5D\%2C\%5B1\%2C1\%2C1\%2C\%22zpar\%22\%2C\%22zpar\%22\%2C\%22zpar\%22\%2C\%22\%E2\%88\%9Ai\%22\%5D\%2C\%5B1\%2C1\%2C1\%2C\%22zpar\%22\%2C1\%2C\%22zpar\%22\%2C\%22\%E2\%88\%9Ai\%22\%5D\%2C\%5B1\%2C1\%2C\%22~d9b7\%22\%2C\%22~d9b7\%22\%5D\%2C\%5B1\%2C1\%2C\%22H\%22\%2C\%22H\%22\%5D\%2C\%5B1\%2C1\%2C\%22\%7C0\%E2\%9F\%A9\%E2\%9F\%A80\%7C\%22\%2C\%22\%7C0\%E2\%9F\%A9\%E2\%9F\%A80\%7C\%22\%5D\%2C\%5B1\%2C1\%2C1\%2C1\%2C\%22zpar\%22\%2C\%22zpar\%22\%2C\%22\%E2\%88\%9Ai\%22\%5D\%5D\%2C\%22gates\%22\%3A\%5B\%7B\%22id\%22\%3A\%22~dvdi\%22\%2C\%22matrix\%22\%3A\%22\%7B\%7B\%E2\%88\%9A\%C2\%BD-\%E2\%88\%9A\%C2\%BDi\%2C0\%7D\%2C\%7B0\%2C\%E2\%88\%9A\%C2\%BD\%2B\%E2\%88\%9A\%C2\%BDi\%7D\%7D\%22\%7D\%2C\%7B\%22id\%22\%3A\%22~2gmh\%22\%2C\%22name\%22\%3A\%22Z(pi\%2F4)\%22\%2C\%22matrix\%22\%3A\%22\%7B\%7B\%E2\%88\%9A\%C2\%BD-\%E2\%88\%9A\%C2\%BDi\%2C0\%7D\%2C\%7B0\%2C\%E2\%88\%9A\%C2\%BD\%2B\%E2\%88\%9A\%C2\%BDi\%7D\%7D\%22\%7D\%2C\%7B\%22id\%22\%3A\%22~ntmg\%22\%2C\%22name\%22\%3A\%22Z(pi\%2F4)\%22\%2C\%22matrix\%22\%3A\%22\%7B\%7B0.9238795-0.3826834i\%2C0\%7D\%2C\%7B0\%2C0.9238795\%2B0.3826834i\%7D\%7D\%22\%7D\%2C\%7B\%22id\%22\%3A\%22~d9b7\%22\%2C\%22name\%22\%3A\%22Z(pi\%2F8)\%22\%2C\%22matrix\%22\%3A\%22\%7B\%7B0.9238795-0.3826834i\%2C0\%7D\%2C\%7B0\%2C0.9238795\%2B0.3826834i\%7D\%7D\%22\%7D\%2C\%7B\%22id\%22\%3A\%22~703g\%22\%2C\%22name\%22\%3A\%22Z(-\%5C\%5Cpi\%2F8)\%22\%2C\%22matrix\%22\%3A\%22\%7B\%7B0.9238795\%2B0.3826834i\%2C0\%7D\%2C\%7B0\%2C0.9238795-0.3826834i\%7D\%7D\%22\%7D\%2C\%7B\%22id\%22\%3A\%22~i2pe\%22\%2C\%22name\%22\%3A\%22Z(-pi\%2F4)\%22\%2C\%22matrix\%22\%3A\%22\%7B\%7B\%E2\%88\%9A\%C2\%BD\%2B\%E2\%88\%9A\%C2\%BDi\%2C0\%7D\%2C\%7B0\%2C\%E2\%88\%9A\%C2\%BD-\%E2\%88\%9A\%C2\%BDi\%7D\%7D\%22\%7D\%5D\%7D}

\newcommand{\qelevenoneone}{https://algassert.com/quirk\#circuit=\%7B\%22cols\%22\%3A\%5B\%5B1\%2C1\%2C\%22H\%22\%2C\%22H\%22\%2C\%22H\%22\%2C\%22H\%22\%5D\%2C\%5B1\%2C1\%2C\%22zpar\%22\%2C1\%2C\%22zpar\%22\%2C\%22zpar\%22\%2C\%22\%E2\%88\%9Ai\%22\%5D\%2C\%5B1\%2C1\%2C\%22zpar\%22\%2C\%22zpar\%22\%2C1\%2C\%22zpar\%22\%2C\%22\%E2\%88\%9Ai\%22\%5D\%2C\%5B1\%2C1\%2C\%22zpar\%22\%2C1\%2C1\%2C\%22zpar\%22\%2C\%22\%E2\%88\%9Ai\%22\%5D\%2C\%5B1\%2C1\%2C\%22zpar\%22\%2C\%22zpar\%22\%2C\%22zpar\%22\%2C\%22zpar\%22\%2C\%22\%E2\%88\%9Ai\%22\%5D\%2C\%5B1\%2C1\%2C\%22zpar\%22\%2C\%22zpar\%22\%2C\%22zpar\%22\%2C1\%2C\%22\%E2\%88\%9Ai\%22\%5D\%2C\%5B1\%2C1\%2C\%22zpar\%22\%2C\%22zpar\%22\%2C1\%2C1\%2C\%22\%E2\%88\%9Ai\%22\%5D\%2C\%5B1\%2C1\%2C\%22zpar\%22\%2C1\%2C\%22zpar\%22\%2C1\%2C\%22\%E2\%88\%9Ai\%22\%5D\%2C\%5B1\%2C1\%2C1\%2C\%22X\%5E\%C2\%BD\%22\%5D\%2C\%5B1\%2C1\%2C1\%2C\%22\%E2\%80\%A2\%22\%2C\%22X\%22\%2C\%22X\%22\%5D\%2C\%5B1\%2C1\%2C1\%2C\%22\%E2\%8A\%96\%22\%2C\%22Z\%22\%2C\%22Z\%22\%5D\%2C\%5B1\%2C1\%2C1\%2C\%22zpar\%22\%2C1\%2C1\%2C\%22\%E2\%88\%9A-i\%22\%5D\%2C\%5B1\%2C1\%2C1\%2C\%22\%E2\%8A\%96\%22\%2C\%22Z\%22\%2C\%22Z\%22\%5D\%2C\%5B1\%2C1\%2C1\%2C\%22zpar\%22\%2C1\%2C1\%2C\%22\%E2\%88\%9Ai\%22\%5D\%2C\%5B1\%2C1\%2C1\%2C1\%2C\%22zpar\%22\%2C1\%2C\%22\%E2\%88\%9Ai\%22\%5D\%2C\%5B1\%2C1\%2C1\%2C1\%2C1\%2C\%22zpar\%22\%2C\%22\%E2\%88\%9Ai\%22\%5D\%2C\%5B1\%2C1\%2C1\%2C\%22Z\%5E\%C2\%BD\%22\%2C\%22Z\%5E\%C2\%BD\%22\%2C\%22Z\%5E\%C2\%BD\%22\%5D\%2C\%5B1\%2C1\%2C1\%2C\%22H\%22\%2C\%22H\%22\%2C\%22H\%22\%5D\%2C\%5B1\%2C1\%2C1\%2C\%22X\%22\%2C\%22X\%22\%2C\%22X\%22\%5D\%2C\%5B1\%2C1\%2C1\%2C\%22\%7C0\%E2\%9F\%A9\%E2\%9F\%A80\%7C\%22\%2C\%22\%7C0\%E2\%9F\%A9\%E2\%9F\%A80\%7C\%22\%2C\%22\%7C0\%E2\%9F\%A9\%E2\%9F\%A80\%7C\%22\%5D\%5D\%2C\%22gates\%22\%3A\%5B\%7B\%22id\%22\%3A\%22~dvdi\%22\%2C\%22matrix\%22\%3A\%22\%7B\%7B\%E2\%88\%9A\%C2\%BD-\%E2\%88\%9A\%C2\%BDi\%2C0\%7D\%2C\%7B0\%2C\%E2\%88\%9A\%C2\%BD\%2B\%E2\%88\%9A\%C2\%BDi\%7D\%7D\%22\%7D\%2C\%7B\%22id\%22\%3A\%22~2gmh\%22\%2C\%22name\%22\%3A\%22Z(pi\%2F4)\%22\%2C\%22matrix\%22\%3A\%22\%7B\%7B\%E2\%88\%9A\%C2\%BD-\%E2\%88\%9A\%C2\%BDi\%2C0\%7D\%2C\%7B0\%2C\%E2\%88\%9A\%C2\%BD\%2B\%E2\%88\%9A\%C2\%BDi\%7D\%7D\%22\%7D\%2C\%7B\%22id\%22\%3A\%22~ntmg\%22\%2C\%22name\%22\%3A\%22Z(pi\%2F4)\%22\%2C\%22matrix\%22\%3A\%22\%7B\%7B0.9238795-0.3826834i\%2C0\%7D\%2C\%7B0\%2C0.9238795\%2B0.3826834i\%7D\%7D\%22\%7D\%2C\%7B\%22id\%22\%3A\%22~d9b7\%22\%2C\%22name\%22\%3A\%22Z(pi\%2F8)\%22\%2C\%22matrix\%22\%3A\%22\%7B\%7B0.9238795-0.3826834i\%2C0\%7D\%2C\%7B0\%2C0.9238795\%2B0.3826834i\%7D\%7D\%22\%7D\%2C\%7B\%22id\%22\%3A\%22~703g\%22\%2C\%22name\%22\%3A\%22Z(-\%5C\%5Cpi\%2F8)\%22\%2C\%22matrix\%22\%3A\%22\%7B\%7B0.9238795\%2B0.3826834i\%2C0\%7D\%2C\%7B0\%2C0.9238795-0.3826834i\%7D\%7D\%22\%7D\%2C\%7B\%22id\%22\%3A\%22~i2pe\%22\%2C\%22name\%22\%3A\%22Z(-pi\%2F4)\%22\%2C\%22matrix\%22\%3A\%22\%7B\%7B\%E2\%88\%9A\%C2\%BD\%2B\%E2\%88\%9A\%C2\%BDi\%2C0\%7D\%2C\%7B0\%2C\%E2\%88\%9A\%C2\%BD-\%E2\%88\%9A\%C2\%BDi\%7D\%7D\%22\%7D\%5D\%7D}
\newcommand{\qtentwotwo}{https://algassert.com/quirk\#circuit=\%7B\%22cols\%22\%3A\%5B\%5B1\%2C1\%2C\%22H\%22\%2C\%22H\%22\%2C\%22H\%22\%2C\%22H\%22\%5D\%2C\%5B1\%2C1\%2C\%22zpar\%22\%2C1\%2C\%22zpar\%22\%2C\%22zpar\%22\%2C\%22\%E2\%88\%9Ai\%22\%5D\%2C\%5B1\%2C1\%2C\%22zpar\%22\%2C\%22zpar\%22\%2C1\%2C\%22zpar\%22\%2C\%22\%E2\%88\%9Ai\%22\%5D\%2C\%5B1\%2C1\%2C\%22zpar\%22\%2C1\%2C1\%2C\%22zpar\%22\%2C\%22\%E2\%88\%9Ai\%22\%5D\%2C\%5B1\%2C1\%2C\%22zpar\%22\%2C\%22zpar\%22\%2C\%22zpar\%22\%2C\%22zpar\%22\%2C\%22\%E2\%88\%9Ai\%22\%5D\%2C\%5B1\%2C1\%2C\%22zpar\%22\%2C\%22zpar\%22\%2C\%22zpar\%22\%2C1\%2C\%22\%E2\%88\%9Ai\%22\%5D\%2C\%5B1\%2C1\%2C\%22zpar\%22\%2C\%22zpar\%22\%2C1\%2C1\%2C\%22\%E2\%88\%9Ai\%22\%5D\%2C\%5B1\%2C1\%2C\%22zpar\%22\%2C1\%2C\%22zpar\%22\%2C1\%2C\%22\%E2\%88\%9Ai\%22\%5D\%2C\%5B1\%2C1\%2C\%22~d9b7\%22\%2C\%22X\%5E\%C2\%BD\%22\%5D\%2C\%5B1\%2C1\%2C\%22H\%22\%5D\%2C\%5B1\%2C1\%2C\%22\%7C0\%E2\%9F\%A9\%E2\%9F\%A80\%7C\%22\%5D\%2C\%5B1\%2C1\%2C1\%2C\%22\%E2\%80\%A2\%22\%2C\%22X\%22\%2C\%22X\%22\%5D\%2C\%5B1\%2C1\%2C1\%2C\%22\%E2\%8A\%96\%22\%2C\%22Z\%22\%2C\%22Z\%22\%5D\%2C\%5B1\%2C1\%2C1\%2C\%22Z\%5E-\%C2\%BC\%22\%5D\%2C\%5B1\%2C1\%2C1\%2C\%22\%E2\%8A\%96\%22\%2C\%22Z\%22\%2C\%22Z\%22\%5D\%2C\%5B1\%2C1\%2C1\%2C\%22~d9b7\%22\%5D\%2C\%5B1\%2C1\%2C1\%2C\%22Z\%5E\%C2\%BD\%22\%5D\%2C\%5B1\%2C1\%2C1\%2C\%22H\%22\%5D\%2C\%5B1\%2C1\%2C1\%2C\%22X\%22\%5D\%2C\%5B1\%2C1\%2C1\%2C\%22\%7C0\%E2\%9F\%A9\%E2\%9F\%A80\%7C\%22\%5D\%5D\%2C\%22gates\%22\%3A\%5B\%7B\%22id\%22\%3A\%22~dvdi\%22\%2C\%22matrix\%22\%3A\%22\%7B\%7B\%E2\%88\%9A\%C2\%BD-\%E2\%88\%9A\%C2\%BDi\%2C0\%7D\%2C\%7B0\%2C\%E2\%88\%9A\%C2\%BD\%2B\%E2\%88\%9A\%C2\%BDi\%7D\%7D\%22\%7D\%2C\%7B\%22id\%22\%3A\%22~2gmh\%22\%2C\%22name\%22\%3A\%22Z(pi\%2F4)\%22\%2C\%22matrix\%22\%3A\%22\%7B\%7B\%E2\%88\%9A\%C2\%BD-\%E2\%88\%9A\%C2\%BDi\%2C0\%7D\%2C\%7B0\%2C\%E2\%88\%9A\%C2\%BD\%2B\%E2\%88\%9A\%C2\%BDi\%7D\%7D\%22\%7D\%2C\%7B\%22id\%22\%3A\%22~ntmg\%22\%2C\%22name\%22\%3A\%22Z(pi\%2F4)\%22\%2C\%22matrix\%22\%3A\%22\%7B\%7B0.9238795-0.3826834i\%2C0\%7D\%2C\%7B0\%2C0.9238795\%2B0.3826834i\%7D\%7D\%22\%7D\%2C\%7B\%22id\%22\%3A\%22~d9b7\%22\%2C\%22name\%22\%3A\%22Z(pi\%2F8)\%22\%2C\%22matrix\%22\%3A\%22\%7B\%7B0.9238795-0.3826834i\%2C0\%7D\%2C\%7B0\%2C0.9238795\%2B0.3826834i\%7D\%7D\%22\%7D\%2C\%7B\%22id\%22\%3A\%22~703g\%22\%2C\%22name\%22\%3A\%22Z(-\%5C\%5Cpi\%2F8)\%22\%2C\%22matrix\%22\%3A\%22\%7B\%7B0.9238795\%2B0.3826834i\%2C0\%7D\%2C\%7B0\%2C0.9238795-0.3826834i\%7D\%7D\%22\%7D\%2C\%7B\%22id\%22\%3A\%22~i2pe\%22\%2C\%22name\%22\%3A\%22Z(-pi\%2F4)\%22\%2C\%22matrix\%22\%3A\%22\%7B\%7B\%E2\%88\%9A\%C2\%BD\%2B\%E2\%88\%9A\%C2\%BDi\%2C0\%7D\%2C\%7B0\%2C\%E2\%88\%9A\%C2\%BD-\%E2\%88\%9A\%C2\%BDi\%7D\%7D\%22\%7D\%5D\%7D}

\newcommand{\qparentfive}{https://algassert.com/quirk\#circuit=\%7B\%22cols\%22\%3A\%5B\%5B\%22H\%22\%2C\%22H\%22\%2C\%22H\%22\%2C\%22H\%22\%2C\%22H\%22\%5D\%2C\%5B\%22zpar\%22\%2C\%22zpar\%22\%2C\%22zpar\%22\%2C\%22zpar\%22\%2C\%22zpar\%22\%2C\%22\%E2\%88\%9Ai\%22\%5D\%2C\%5B\%22zpar\%22\%2C\%22zpar\%22\%2C\%22zpar\%22\%2C1\%2C1\%2C\%22\%E2\%88\%9Ai\%22\%5D\%2C\%5B\%22zpar\%22\%2C\%22zpar\%22\%2C1\%2C\%22zpar\%22\%2C1\%2C\%22\%E2\%88\%9Ai\%22\%5D\%2C\%5B\%22zpar\%22\%2C1\%2C\%22zpar\%22\%2C\%22zpar\%22\%2C1\%2C\%22\%E2\%88\%9Ai\%22\%5D\%2C\%5B\%22zpar\%22\%2C\%22zpar\%22\%2C1\%2C1\%2C\%22zpar\%22\%2C\%22\%E2\%88\%9Ai\%22\%5D\%2C\%5B\%22zpar\%22\%2C1\%2C\%22zpar\%22\%2C1\%2C\%22zpar\%22\%2C\%22\%E2\%88\%9Ai\%22\%5D\%2C\%5B\%22zpar\%22\%2C1\%2C1\%2C\%22zpar\%22\%2C\%22zpar\%22\%2C\%22\%E2\%88\%9Ai\%22\%5D\%2C\%5B1\%2C1\%2C\%22zpar\%22\%2C\%22zpar\%22\%2C\%22zpar\%22\%2C\%22\%E2\%88\%9Ai\%22\%5D\%2C\%5B1\%2C\%22zpar\%22\%2C\%22zpar\%22\%2C1\%2C\%22zpar\%22\%2C\%22\%E2\%88\%9Ai\%22\%5D\%2C\%5B1\%2C\%22zpar\%22\%2C1\%2C\%22zpar\%22\%2C\%22zpar\%22\%2C\%22\%E2\%88\%9Ai\%22\%5D\%2C\%5B1\%2C\%22zpar\%22\%2C\%22zpar\%22\%2C\%22zpar\%22\%2C1\%2C\%22\%E2\%88\%9Ai\%22\%5D\%2C\%5B\%22~d9b7\%22\%2C\%22~d9b7\%22\%2C\%22~d9b7\%22\%2C\%22~d9b7\%22\%2C\%22~d9b7\%22\%5D\%2C\%5B\%22H\%22\%2C\%22H\%22\%2C\%22H\%22\%2C\%22H\%22\%2C\%22H\%22\%5D\%2C\%5B\%22\%7C0\%E2\%9F\%A9\%E2\%9F\%A80\%7C\%22\%2C\%22\%7C0\%E2\%9F\%A9\%E2\%9F\%A80\%7C\%22\%2C\%22\%7C0\%E2\%9F\%A9\%E2\%9F\%A80\%7C\%22\%2C\%22\%7C0\%E2\%9F\%A9\%E2\%9F\%A80\%7C\%22\%2C\%22\%7C0\%E2\%9F\%A9\%E2\%9F\%A80\%7C\%22\%5D\%5D\%2C\%22gates\%22\%3A\%5B\%7B\%22id\%22\%3A\%22~dvdi\%22\%2C\%22matrix\%22\%3A\%22\%7B\%7B\%E2\%88\%9A\%C2\%BD-\%E2\%88\%9A\%C2\%BDi\%2C0\%7D\%2C\%7B0\%2C\%E2\%88\%9A\%C2\%BD\%2B\%E2\%88\%9A\%C2\%BDi\%7D\%7D\%22\%7D\%2C\%7B\%22id\%22\%3A\%22~2gmh\%22\%2C\%22name\%22\%3A\%22Z(pi\%2F4)\%22\%2C\%22matrix\%22\%3A\%22\%7B\%7B\%E2\%88\%9A\%C2\%BD-\%E2\%88\%9A\%C2\%BDi\%2C0\%7D\%2C\%7B0\%2C\%E2\%88\%9A\%C2\%BD\%2B\%E2\%88\%9A\%C2\%BDi\%7D\%7D\%22\%7D\%2C\%7B\%22id\%22\%3A\%22~ntmg\%22\%2C\%22name\%22\%3A\%22Z(pi\%2F4)\%22\%2C\%22matrix\%22\%3A\%22\%7B\%7B0.9238795-0.3826834i\%2C0\%7D\%2C\%7B0\%2C0.9238795\%2B0.3826834i\%7D\%7D\%22\%7D\%2C\%7B\%22id\%22\%3A\%22~d9b7\%22\%2C\%22name\%22\%3A\%22Z(pi\%2F8)\%22\%2C\%22matrix\%22\%3A\%22\%7B\%7B0.9238795-0.3826834i\%2C0\%7D\%2C\%7B0\%2C0.9238795\%2B0.3826834i\%7D\%7D\%22\%7D\%2C\%7B\%22id\%22\%3A\%22~703g\%22\%2C\%22name\%22\%3A\%22Z(-\%5C\%5Cpi\%2F8)\%22\%2C\%22matrix\%22\%3A\%22\%7B\%7B0.9238795\%2B0.3826834i\%2C0\%7D\%2C\%7B0\%2C0.9238795-0.3826834i\%7D\%7D\%22\%7D\%2C\%7B\%22id\%22\%3A\%22~i2pe\%22\%2C\%22name\%22\%3A\%22Z(-pi\%2F4)\%22\%2C\%22matrix\%22\%3A\%22\%7B\%7B\%E2\%88\%9A\%C2\%BD\%2B\%E2\%88\%9A\%C2\%BDi\%2C0\%7D\%2C\%7B0\%2C\%E2\%88\%9A\%C2\%BD-\%E2\%88\%9A\%C2\%BDi\%7D\%7D\%22\%7D\%5D\%7D}
\newcommand{\qaaazfourtwo}{https://algassert.com/quirk\#circuit=\%7B\%22cols\%22\%3A\%5B\%5B\%22H\%22\%2C\%22H\%22\%2C\%22H\%22\%2C\%22H\%22\%2C\%22H\%22\%5D\%2C\%5B\%22zpar\%22\%2C\%22zpar\%22\%2C\%22zpar\%22\%2C\%22zpar\%22\%2C\%22zpar\%22\%2C\%22\%E2\%88\%9Ai\%22\%5D\%2C\%5B\%22zpar\%22\%2C\%22zpar\%22\%2C\%22zpar\%22\%2C1\%2C1\%2C\%22\%E2\%88\%9Ai\%22\%5D\%2C\%5B\%22zpar\%22\%2C\%22zpar\%22\%2C1\%2C\%22zpar\%22\%2C1\%2C\%22\%E2\%88\%9Ai\%22\%5D\%2C\%5B\%22zpar\%22\%2C1\%2C\%22zpar\%22\%2C\%22zpar\%22\%2C1\%2C\%22\%E2\%88\%9Ai\%22\%5D\%2C\%5B\%22zpar\%22\%2C\%22zpar\%22\%2C1\%2C1\%2C\%22zpar\%22\%2C\%22\%E2\%88\%9Ai\%22\%5D\%2C\%5B\%22zpar\%22\%2C1\%2C\%22zpar\%22\%2C1\%2C\%22zpar\%22\%2C\%22\%E2\%88\%9Ai\%22\%5D\%2C\%5B\%22zpar\%22\%2C1\%2C1\%2C\%22zpar\%22\%2C\%22zpar\%22\%2C\%22\%E2\%88\%9Ai\%22\%5D\%2C\%5B\%22~d9b7\%22\%5D\%2C\%5B\%22H\%22\%5D\%2C\%5B\%22\%7C0\%E2\%9F\%A9\%E2\%9F\%A80\%7C\%22\%5D\%2C\%5B1\%2C\%22Bloch\%22\%2C\%22Bloch\%22\%2C\%22Bloch\%22\%2C\%22Bloch\%22\%5D\%2C\%5B1\%2C\%22X\%5E\%C2\%BD\%22\%5D\%2C\%5B1\%2C\%22\%E2\%80\%A2\%22\%2C\%22X\%22\%2C\%22X\%22\%2C\%22X\%22\%5D\%2C\%5B1\%2C\%22H\%22\%5D\%2C\%5B1\%2C\%22\%7C0\%E2\%9F\%A9\%E2\%9F\%A80\%7C\%22\%5D\%2C\%5B1\%2C1\%2C\%22\%E2\%8A\%96\%22\%2C\%22Z\%22\%2C\%22Z\%22\%5D\%2C\%5B1\%2C1\%2C\%22Z\%5E-\%C2\%BC\%22\%5D\%2C\%5B1\%2C1\%2C\%22\%E2\%8A\%96\%22\%2C\%22Z\%22\%2C\%22Z\%22\%5D\%5D\%2C\%22gates\%22\%3A\%5B\%7B\%22id\%22\%3A\%22~dvdi\%22\%2C\%22matrix\%22\%3A\%22\%7B\%7B\%E2\%88\%9A\%C2\%BD-\%E2\%88\%9A\%C2\%BDi\%2C0\%7D\%2C\%7B0\%2C\%E2\%88\%9A\%C2\%BD\%2B\%E2\%88\%9A\%C2\%BDi\%7D\%7D\%22\%7D\%2C\%7B\%22id\%22\%3A\%22~2gmh\%22\%2C\%22name\%22\%3A\%22Z(pi\%2F4)\%22\%2C\%22matrix\%22\%3A\%22\%7B\%7B\%E2\%88\%9A\%C2\%BD-\%E2\%88\%9A\%C2\%BDi\%2C0\%7D\%2C\%7B0\%2C\%E2\%88\%9A\%C2\%BD\%2B\%E2\%88\%9A\%C2\%BDi\%7D\%7D\%22\%7D\%2C\%7B\%22id\%22\%3A\%22~ntmg\%22\%2C\%22name\%22\%3A\%22Z(pi\%2F4)\%22\%2C\%22matrix\%22\%3A\%22\%7B\%7B0.9238795-0.3826834i\%2C0\%7D\%2C\%7B0\%2C0.9238795\%2B0.3826834i\%7D\%7D\%22\%7D\%2C\%7B\%22id\%22\%3A\%22~d9b7\%22\%2C\%22name\%22\%3A\%22Z(pi\%2F8)\%22\%2C\%22matrix\%22\%3A\%22\%7B\%7B0.9238795-0.3826834i\%2C0\%7D\%2C\%7B0\%2C0.9238795\%2B0.3826834i\%7D\%7D\%22\%7D\%2C\%7B\%22id\%22\%3A\%22~703g\%22\%2C\%22name\%22\%3A\%22Z(-\%5C\%5Cpi\%2F8)\%22\%2C\%22matrix\%22\%3A\%22\%7B\%7B0.9238795\%2B0.3826834i\%2C0\%7D\%2C\%7B0\%2C0.9238795-0.3826834i\%7D\%7D\%22\%7D\%2C\%7B\%22id\%22\%3A\%22~i2pe\%22\%2C\%22name\%22\%3A\%22Z(-pi\%2F4)\%22\%2C\%22matrix\%22\%3A\%22\%7B\%7B\%E2\%88\%9A\%C2\%BD\%2B\%E2\%88\%9A\%C2\%BDi\%2C0\%7D\%2C\%7B0\%2C\%E2\%88\%9A\%C2\%BD-\%E2\%88\%9A\%C2\%BDi\%7D\%7D\%22\%7D\%5D\%7D}

\newcommand{\qtwelvethreetwo}{https://algassert.com/quirk\#circuit=\%7B\%22cols\%22\%3A\%5B\%5B\%22H\%22\%2C\%22H\%22\%2C\%22H\%22\%2C\%22H\%22\%2C\%22H\%22\%5D\%2C\%5B\%22zpar\%22\%2C\%22zpar\%22\%2C\%22zpar\%22\%2C\%22zpar\%22\%2C\%22zpar\%22\%2C\%22\%E2\%88\%9Ai\%22\%5D\%2C\%5B\%22zpar\%22\%2C\%22zpar\%22\%2C\%22zpar\%22\%2C1\%2C1\%2C\%22\%E2\%88\%9Ai\%22\%5D\%2C\%5B\%22zpar\%22\%2C\%22zpar\%22\%2C1\%2C\%22zpar\%22\%2C1\%2C\%22\%E2\%88\%9Ai\%22\%5D\%2C\%5B\%22zpar\%22\%2C\%22zpar\%22\%2C1\%2C1\%2C\%22zpar\%22\%2C\%22\%E2\%88\%9Ai\%22\%5D\%2C\%5B\%22zpar\%22\%2C1\%2C\%22zpar\%22\%2C\%22zpar\%22\%2C1\%2C\%22\%E2\%88\%9Ai\%22\%5D\%2C\%5B\%22zpar\%22\%2C1\%2C\%22zpar\%22\%2C1\%2C\%22zpar\%22\%2C\%22\%E2\%88\%9Ai\%22\%5D\%2C\%5B\%22zpar\%22\%2C1\%2C1\%2C\%22zpar\%22\%2C\%22zpar\%22\%2C\%22\%E2\%88\%9Ai\%22\%5D\%2C\%5B\%22~d9b7\%22\%5D\%2C\%5B\%22H\%22\%5D\%2C\%5B\%22\%7C0\%E2\%9F\%A9\%E2\%9F\%A80\%7C\%22\%5D\%2C\%5B\%22Bloch\%22\%2C\%22Bloch\%22\%2C\%22Bloch\%22\%2C\%22Bloch\%22\%2C\%22Bloch\%22\%5D\%2C\%5B1\%2C\%22zpar\%22\%2C1\%2C\%22zpar\%22\%2C\%22zpar\%22\%2C\%22\%E2\%88\%9Ai\%22\%5D\%2C\%5B1\%2C\%22zpar\%22\%2C\%22zpar\%22\%2C1\%2C\%22zpar\%22\%2C\%22\%E2\%88\%9Ai\%22\%5D\%2C\%5B1\%2C\%22zpar\%22\%2C\%22zpar\%22\%2C\%22zpar\%22\%2C1\%2C\%22\%E2\%88\%9Ai\%22\%5D\%2C\%5B1\%2C\%22~d9b7\%22\%5D\%2C\%5B1\%2C\%22H\%22\%5D\%2C\%5B1\%2C\%22\%7C0\%E2\%9F\%A9\%E2\%9F\%A80\%7C\%22\%5D\%2C\%5B1\%2C1\%2C\%22zpar\%22\%2C\%22zpar\%22\%2C\%22zpar\%22\%2C\%22\%E2\%88\%9Ai\%22\%5D\%5D\%2C\%22gates\%22\%3A\%5B\%7B\%22id\%22\%3A\%22~dvdi\%22\%2C\%22matrix\%22\%3A\%22\%7B\%7B\%E2\%88\%9A\%C2\%BD-\%E2\%88\%9A\%C2\%BDi\%2C0\%7D\%2C\%7B0\%2C\%E2\%88\%9A\%C2\%BD\%2B\%E2\%88\%9A\%C2\%BDi\%7D\%7D\%22\%7D\%2C\%7B\%22id\%22\%3A\%22~2gmh\%22\%2C\%22name\%22\%3A\%22Z(pi\%2F4)\%22\%2C\%22matrix\%22\%3A\%22\%7B\%7B\%E2\%88\%9A\%C2\%BD-\%E2\%88\%9A\%C2\%BDi\%2C0\%7D\%2C\%7B0\%2C\%E2\%88\%9A\%C2\%BD\%2B\%E2\%88\%9A\%C2\%BDi\%7D\%7D\%22\%7D\%2C\%7B\%22id\%22\%3A\%22~ntmg\%22\%2C\%22name\%22\%3A\%22Z(pi\%2F4)\%22\%2C\%22matrix\%22\%3A\%22\%7B\%7B0.9238795-0.3826834i\%2C0\%7D\%2C\%7B0\%2C0.9238795\%2B0.3826834i\%7D\%7D\%22\%7D\%2C\%7B\%22id\%22\%3A\%22~d9b7\%22\%2C\%22name\%22\%3A\%22Z(pi\%2F8)\%22\%2C\%22matrix\%22\%3A\%22\%7B\%7B0.9238795-0.3826834i\%2C0\%7D\%2C\%7B0\%2C0.9238795\%2B0.3826834i\%7D\%7D\%22\%7D\%2C\%7B\%22id\%22\%3A\%22~703g\%22\%2C\%22name\%22\%3A\%22Z(-\%5C\%5Cpi\%2F8)\%22\%2C\%22matrix\%22\%3A\%22\%7B\%7B0.9238795\%2B0.3826834i\%2C0\%7D\%2C\%7B0\%2C0.9238795-0.3826834i\%7D\%7D\%22\%7D\%2C\%7B\%22id\%22\%3A\%22~i2pe\%22\%2C\%22name\%22\%3A\%22Z(-pi\%2F4)\%22\%2C\%22matrix\%22\%3A\%22\%7B\%7B\%E2\%88\%9A\%C2\%BD\%2B\%E2\%88\%9A\%C2\%BDi\%2C0\%7D\%2C\%7B0\%2C\%E2\%88\%9A\%C2\%BD-\%E2\%88\%9A\%C2\%BDi\%7D\%7D\%22\%7D\%5D\%7D}
\newcommand{\qfourteentwotwo}{https://algassert.com/quirk\#circuit=\%7B\%22cols\%22\%3A\%5B\%5B\%22H\%22\%2C\%22H\%22\%2C\%22H\%22\%2C\%22H\%22\%2C\%22H\%22\%5D\%2C\%5B\%22zpar\%22\%2C\%22zpar\%22\%2C\%22zpar\%22\%2C\%22zpar\%22\%2C\%22zpar\%22\%2C\%22\%E2\%88\%9Ai\%22\%5D\%2C\%5B\%22zpar\%22\%2C\%22zpar\%22\%2C\%22zpar\%22\%2C1\%2C1\%2C\%22\%E2\%88\%9Ai\%22\%5D\%2C\%5B\%22zpar\%22\%2C\%22zpar\%22\%2C1\%2C\%22zpar\%22\%2C1\%2C\%22\%E2\%88\%9Ai\%22\%5D\%2C\%5B\%22zpar\%22\%2C\%22zpar\%22\%2C1\%2C1\%2C\%22zpar\%22\%2C\%22\%E2\%88\%9Ai\%22\%5D\%2C\%5B\%22zpar\%22\%2C1\%2C\%22zpar\%22\%2C\%22zpar\%22\%2C1\%2C\%22\%E2\%88\%9Ai\%22\%5D\%2C\%5B\%22zpar\%22\%2C1\%2C\%22zpar\%22\%2C1\%2C\%22zpar\%22\%2C\%22\%E2\%88\%9Ai\%22\%5D\%2C\%5B\%22zpar\%22\%2C1\%2C1\%2C\%22zpar\%22\%2C\%22zpar\%22\%2C\%22\%E2\%88\%9Ai\%22\%5D\%2C\%5B1\%2C\%22zpar\%22\%2C\%22zpar\%22\%2C1\%2C\%22zpar\%22\%2C\%22\%E2\%88\%9Ai\%22\%5D\%2C\%5B1\%2C\%22zpar\%22\%2C1\%2C\%22zpar\%22\%2C\%22zpar\%22\%2C\%22\%E2\%88\%9Ai\%22\%5D\%2C\%5B1\%2C\%22zpar\%22\%2C\%22zpar\%22\%2C\%22zpar\%22\%2C1\%2C\%22\%E2\%88\%9Ai\%22\%5D\%2C\%5B\%22~d9b7\%22\%2C\%22~d9b7\%22\%5D\%2C\%5B\%22H\%22\%2C\%22H\%22\%5D\%2C\%5B\%22\%7C0\%E2\%9F\%A9\%E2\%9F\%A80\%7C\%22\%2C\%22\%7C0\%E2\%9F\%A9\%E2\%9F\%A80\%7C\%22\%5D\%2C\%5B\%22Bloch\%22\%2C\%22Bloch\%22\%2C\%22Bloch\%22\%2C\%22Bloch\%22\%2C\%22Bloch\%22\%5D\%2C\%5B1\%2C1\%2C\%22zpar\%22\%2C\%22zpar\%22\%2C\%22zpar\%22\%2C\%22\%E2\%88\%9Ai\%22\%5D\%2C\%5B1\%2C1\%2C\%22~d9b7\%22\%5D\%2C\%5B1\%2C1\%2C\%22H\%22\%5D\%5D\%2C\%22gates\%22\%3A\%5B\%7B\%22id\%22\%3A\%22~dvdi\%22\%2C\%22matrix\%22\%3A\%22\%7B\%7B\%E2\%88\%9A\%C2\%BD-\%E2\%88\%9A\%C2\%BDi\%2C0\%7D\%2C\%7B0\%2C\%E2\%88\%9A\%C2\%BD\%2B\%E2\%88\%9A\%C2\%BDi\%7D\%7D\%22\%7D\%2C\%7B\%22id\%22\%3A\%22~2gmh\%22\%2C\%22name\%22\%3A\%22Z(pi\%2F4)\%22\%2C\%22matrix\%22\%3A\%22\%7B\%7B\%E2\%88\%9A\%C2\%BD-\%E2\%88\%9A\%C2\%BDi\%2C0\%7D\%2C\%7B0\%2C\%E2\%88\%9A\%C2\%BD\%2B\%E2\%88\%9A\%C2\%BDi\%7D\%7D\%22\%7D\%2C\%7B\%22id\%22\%3A\%22~ntmg\%22\%2C\%22name\%22\%3A\%22Z(pi\%2F4)\%22\%2C\%22matrix\%22\%3A\%22\%7B\%7B0.9238795-0.3826834i\%2C0\%7D\%2C\%7B0\%2C0.9238795\%2B0.3826834i\%7D\%7D\%22\%7D\%2C\%7B\%22id\%22\%3A\%22~d9b7\%22\%2C\%22name\%22\%3A\%22Z(pi\%2F8)\%22\%2C\%22matrix\%22\%3A\%22\%7B\%7B0.9238795-0.3826834i\%2C0\%7D\%2C\%7B0\%2C0.9238795\%2B0.3826834i\%7D\%7D\%22\%7D\%2C\%7B\%22id\%22\%3A\%22~703g\%22\%2C\%22name\%22\%3A\%22Z(-\%5C\%5Cpi\%2F8)\%22\%2C\%22matrix\%22\%3A\%22\%7B\%7B0.9238795\%2B0.3826834i\%2C0\%7D\%2C\%7B0\%2C0.9238795-0.3826834i\%7D\%7D\%22\%7D\%2C\%7B\%22id\%22\%3A\%22~i2pe\%22\%2C\%22name\%22\%3A\%22Z(-pi\%2F4)\%22\%2C\%22matrix\%22\%3A\%22\%7B\%7B\%E2\%88\%9A\%C2\%BD\%2B\%E2\%88\%9A\%C2\%BDi\%2C0\%7D\%2C\%7B0\%2C\%E2\%88\%9A\%C2\%BD-\%E2\%88\%9A\%C2\%BDi\%7D\%7D\%22\%7D\%5D\%7D}
\newcommand{\qsummaryfifteen}{https://algassert.com/quirk\#circuit=\%7B\%22cols\%22\%3A\%5B\%5B\%22H\%22\%2C\%22H\%22\%2C\%22H\%22\%2C\%22H\%22\%2C\%22H\%22\%5D\%2C\%5B\%22zpar\%22\%2C\%22zpar\%22\%2C\%22zpar\%22\%2C\%22zpar\%22\%2C\%22zpar\%22\%2C\%22\%E2\%88\%9Ai\%22\%5D\%2C\%5B\%22zpar\%22\%2C\%22zpar\%22\%2C\%22zpar\%22\%2C1\%2C1\%2C\%22\%E2\%88\%9Ai\%22\%5D\%2C\%5B\%22zpar\%22\%2C\%22zpar\%22\%2C1\%2C\%22zpar\%22\%2C1\%2C\%22\%E2\%88\%9Ai\%22\%5D\%2C\%5B\%22zpar\%22\%2C\%22zpar\%22\%2C1\%2C1\%2C\%22zpar\%22\%2C\%22\%E2\%88\%9Ai\%22\%5D\%2C\%5B\%22zpar\%22\%2C1\%2C\%22zpar\%22\%2C\%22zpar\%22\%2C1\%2C\%22\%E2\%88\%9Ai\%22\%5D\%2C\%5B\%22zpar\%22\%2C1\%2C\%22zpar\%22\%2C1\%2C\%22zpar\%22\%2C\%22\%E2\%88\%9Ai\%22\%5D\%2C\%5B\%22zpar\%22\%2C1\%2C1\%2C\%22zpar\%22\%2C\%22zpar\%22\%2C\%22\%E2\%88\%9Ai\%22\%5D\%2C\%5B1\%2C\%22zpar\%22\%2C\%22zpar\%22\%2C1\%2C\%22zpar\%22\%2C\%22\%E2\%88\%9Ai\%22\%5D\%2C\%5B1\%2C\%22zpar\%22\%2C1\%2C\%22zpar\%22\%2C\%22zpar\%22\%2C\%22\%E2\%88\%9Ai\%22\%5D\%2C\%5B1\%2C\%22zpar\%22\%2C\%22zpar\%22\%2C\%22zpar\%22\%2C1\%2C\%22\%E2\%88\%9Ai\%22\%5D\%2C\%5B\%22~d9b7\%22\%2C\%22~d9b7\%22\%5D\%2C\%5B\%22H\%22\%2C\%22H\%22\%5D\%2C\%5B\%22\%7C0\%E2\%9F\%A9\%E2\%9F\%A80\%7C\%22\%2C\%22\%7C0\%E2\%9F\%A9\%E2\%9F\%A80\%7C\%22\%5D\%2C\%5B\%22Bloch\%22\%2C\%22Bloch\%22\%2C\%22Bloch\%22\%2C\%22Bloch\%22\%2C\%22Bloch\%22\%5D\%2C\%5B1\%2C1\%2C\%22zpar\%22\%2C\%22zpar\%22\%2C\%22zpar\%22\%2C\%22\%E2\%88\%9Ai\%22\%5D\%2C\%5B1\%2C1\%2C\%22~d9b7\%22\%5D\%2C\%5B1\%2C1\%2C\%22H\%22\%5D\%2C\%5B1\%2C1\%2C\%22\%7C0\%E2\%9F\%A9\%E2\%9F\%A80\%7C\%22\%5D\%2C\%5B1\%2C1\%2C1\%2C\%22~d9b7\%22\%5D\%2C\%5B1\%2C1\%2C1\%2C\%22H\%22\%5D\%2C\%5B1\%2C1\%2C1\%2C\%22\%7C0\%E2\%9F\%A9\%E2\%9F\%A80\%7C\%22\%5D\%5D\%2C\%22gates\%22\%3A\%5B\%7B\%22id\%22\%3A\%22~dvdi\%22\%2C\%22matrix\%22\%3A\%22\%7B\%7B\%E2\%88\%9A\%C2\%BD-\%E2\%88\%9A\%C2\%BDi\%2C0\%7D\%2C\%7B0\%2C\%E2\%88\%9A\%C2\%BD\%2B\%E2\%88\%9A\%C2\%BDi\%7D\%7D\%22\%7D\%2C\%7B\%22id\%22\%3A\%22~2gmh\%22\%2C\%22name\%22\%3A\%22Z(pi\%2F4)\%22\%2C\%22matrix\%22\%3A\%22\%7B\%7B\%E2\%88\%9A\%C2\%BD-\%E2\%88\%9A\%C2\%BDi\%2C0\%7D\%2C\%7B0\%2C\%E2\%88\%9A\%C2\%BD\%2B\%E2\%88\%9A\%C2\%BDi\%7D\%7D\%22\%7D\%2C\%7B\%22id\%22\%3A\%22~ntmg\%22\%2C\%22name\%22\%3A\%22Z(pi\%2F4)\%22\%2C\%22matrix\%22\%3A\%22\%7B\%7B0.9238795-0.3826834i\%2C0\%7D\%2C\%7B0\%2C0.9238795\%2B0.3826834i\%7D\%7D\%22\%7D\%2C\%7B\%22id\%22\%3A\%22~d9b7\%22\%2C\%22name\%22\%3A\%22Z(pi\%2F8)\%22\%2C\%22matrix\%22\%3A\%22\%7B\%7B0.9238795-0.3826834i\%2C0\%7D\%2C\%7B0\%2C0.9238795\%2B0.3826834i\%7D\%7D\%22\%7D\%2C\%7B\%22id\%22\%3A\%22~703g\%22\%2C\%22name\%22\%3A\%22Z(-\%5C\%5Cpi\%2F8)\%22\%2C\%22matrix\%22\%3A\%22\%7B\%7B0.9238795\%2B0.3826834i\%2C0\%7D\%2C\%7B0\%2C0.9238795-0.3826834i\%7D\%7D\%22\%7D\%2C\%7B\%22id\%22\%3A\%22~i2pe\%22\%2C\%22name\%22\%3A\%22Z(-pi\%2F4)\%22\%2C\%22matrix\%22\%3A\%22\%7B\%7B\%E2\%88\%9A\%C2\%BD\%2B\%E2\%88\%9A\%C2\%BDi\%2C0\%7D\%2C\%7B0\%2C\%E2\%88\%9A\%C2\%BD-\%E2\%88\%9A\%C2\%BDi\%7D\%7D\%22\%7D\%5D\%7D}

\newcommand{\qparentsix}{https://algassert.com/quirk\#circuit=\%7B\%22cols\%22\%3A\%5B\%5B\%22H\%22\%2C\%22H\%22\%2C\%22H\%22\%2C\%22H\%22\%2C\%22H\%22\%2C\%22H\%22\%5D\%2C\%5B\%22zpar\%22\%2C\%22zpar\%22\%2C1\%2C\%22zpar\%22\%2C\%22zpar\%22\%2C\%22zpar\%22\%2C\%22\%E2\%88\%9Ai\%22\%5D\%2C\%5B\%22zpar\%22\%2C1\%2C\%22zpar\%22\%2C\%22zpar\%22\%2C\%22zpar\%22\%2C\%22zpar\%22\%2C\%22\%E2\%88\%9Ai\%22\%5D\%2C\%5B\%22zpar\%22\%2C\%22zpar\%22\%2C1\%2C1\%2C1\%2C\%22zpar\%22\%2C\%22\%E2\%88\%9Ai\%22\%5D\%2C\%5B\%22zpar\%22\%2C1\%2C\%22zpar\%22\%2C1\%2C1\%2C\%22zpar\%22\%2C\%22\%E2\%88\%9Ai\%22\%5D\%2C\%5B\%22zpar\%22\%2C\%22zpar\%22\%2C1\%2C1\%2C\%22zpar\%22\%2C1\%2C\%22\%E2\%88\%9Ai\%22\%5D\%2C\%5B\%22zpar\%22\%2C1\%2C\%22zpar\%22\%2C1\%2C\%22zpar\%22\%2C1\%2C\%22\%E2\%88\%9Ai\%22\%5D\%2C\%5B\%22zpar\%22\%2C\%22zpar\%22\%2C1\%2C\%22zpar\%22\%2C1\%2C1\%2C\%22\%E2\%88\%9Ai\%22\%5D\%2C\%5B\%22zpar\%22\%2C1\%2C\%22zpar\%22\%2C\%22zpar\%22\%2C1\%2C1\%2C\%22\%E2\%88\%9Ai\%22\%5D\%2C\%5B\%22H\%22\%5D\%2C\%5B\%22Chance\%22\%5D\%2C\%5B\%22\%7C0\%E2\%9F\%A9\%E2\%9F\%A80\%7C\%22\%5D\%2C\%5B1\%2C\%22zpar\%22\%2C\%22zpar\%22\%2C\%22zpar\%22\%2C\%22zpar\%22\%2C\%22zpar\%22\%2C\%22\%E2\%88\%9Ai\%22\%5D\%2C\%5B1\%2C\%22zpar\%22\%2C\%22zpar\%22\%2C1\%2C1\%2C\%22zpar\%22\%2C\%22\%E2\%88\%9Ai\%22\%5D\%2C\%5B1\%2C\%22zpar\%22\%2C\%22zpar\%22\%2C1\%2C\%22zpar\%22\%2C1\%2C\%22\%E2\%88\%9Ai\%22\%5D\%2C\%5B1\%2C\%22zpar\%22\%2C\%22zpar\%22\%2C\%22zpar\%22\%2C1\%2C1\%2C\%22\%E2\%88\%9Ai\%22\%5D\%2C\%5B1\%2C\%22H\%22\%2C\%22H\%22\%5D\%2C\%5B1\%2C\%22Chance\%22\%2C\%22Chance\%22\%5D\%2C\%5B1\%2C\%22\%7C0\%E2\%9F\%A9\%E2\%9F\%A80\%7C\%22\%2C\%22\%7C0\%E2\%9F\%A9\%E2\%9F\%A80\%7C\%22\%5D\%2C\%5B1\%2C1\%2C1\%2C\%22Bloch\%22\%2C\%22Bloch\%22\%2C\%22Bloch\%22\%5D\%2C\%5B1\%2C1\%2C1\%2C\%22zpar\%22\%2C\%22zpar\%22\%2C\%22zpar\%22\%2C\%22\%E2\%88\%9Ai\%22\%5D\%2C\%5B\%22\%E2\%80\%A6\%22\%2C\%22\%E2\%80\%A6\%22\%5D\%5D\%2C\%22gates\%22\%3A\%5B\%7B\%22id\%22\%3A\%22~dvdi\%22\%2C\%22matrix\%22\%3A\%22\%7B\%7B\%E2\%88\%9A\%C2\%BD-\%E2\%88\%9A\%C2\%BDi\%2C0\%7D\%2C\%7B0\%2C\%E2\%88\%9A\%C2\%BD\%2B\%E2\%88\%9A\%C2\%BDi\%7D\%7D\%22\%7D\%2C\%7B\%22id\%22\%3A\%22~2gmh\%22\%2C\%22name\%22\%3A\%22Z(pi\%2F4)\%22\%2C\%22matrix\%22\%3A\%22\%7B\%7B\%E2\%88\%9A\%C2\%BD-\%E2\%88\%9A\%C2\%BDi\%2C0\%7D\%2C\%7B0\%2C\%E2\%88\%9A\%C2\%BD\%2B\%E2\%88\%9A\%C2\%BDi\%7D\%7D\%22\%7D\%2C\%7B\%22id\%22\%3A\%22~ntmg\%22\%2C\%22name\%22\%3A\%22Z(pi\%2F4)\%22\%2C\%22matrix\%22\%3A\%22\%7B\%7B0.9238795-0.3826834i\%2C0\%7D\%2C\%7B0\%2C0.9238795\%2B0.3826834i\%7D\%7D\%22\%7D\%2C\%7B\%22id\%22\%3A\%22~d9b7\%22\%2C\%22name\%22\%3A\%22Z(pi\%2F8)\%22\%2C\%22matrix\%22\%3A\%22\%7B\%7B0.9238795-0.3826834i\%2C0\%7D\%2C\%7B0\%2C0.9238795\%2B0.3826834i\%7D\%7D\%22\%7D\%2C\%7B\%22id\%22\%3A\%22~703g\%22\%2C\%22name\%22\%3A\%22Z(-\%5C\%5Cpi\%2F8)\%22\%2C\%22matrix\%22\%3A\%22\%7B\%7B0.9238795\%2B0.3826834i\%2C0\%7D\%2C\%7B0\%2C0.9238795-0.3826834i\%7D\%7D\%22\%7D\%2C\%7B\%22id\%22\%3A\%22~i2pe\%22\%2C\%22name\%22\%3A\%22Z(-pi\%2F4)\%22\%2C\%22matrix\%22\%3A\%22\%7B\%7B\%E2\%88\%9A\%C2\%BD\%2B\%E2\%88\%9A\%C2\%BDi\%2C0\%7D\%2C\%7B0\%2C\%E2\%88\%9A\%C2\%BD-\%E2\%88\%9A\%C2\%BDi\%7D\%7D\%22\%7D\%5D\%7D}

\newcommand{\qfourteensixtwo}
{https://algassert.com/quirk\#circuit={\%22cols\%22:[[\%22H\%22,\%22H\%22,\%22H\%22,\%22H\%22,\%22H\%22,\%22H\%22,\%22H\%22],[1,1,1,\%22zpar\%22,1,1,1,\%22i\%22],[1,1,1,1,\%22zpar\%22,1,1,\%22i\%22],[1,1,1,1,1,1,\%22zpar\%22,\%22√-i\%22],[1,1,1,\%22zpar\%22,1,1,\%22zpar\%22,\%22√-i\%22],[1,1,1,1,\%22zpar\%22,1,\%22zpar\%22,\%22√-i\%22],[1,1,1,1,1,\%22zpar\%22,\%22zpar\%22,\%22√i\%22],[1,1,1,\%22zpar\%22,1,\%22zpar\%22,\%22zpar\%22,\%22√i\%22],[1,1,1,1,\%22zpar\%22,\%22zpar\%22,\%22zpar\%22,\%22√i\%22],[\%22zpar\%22,1,1,\%22zpar\%22,\%22zpar\%22,1,\%22zpar\%22,\%22√i\%22],[1,\%22zpar\%22,1,\%22zpar\%22,\%22zpar\%22,1,\%22zpar\%22,\%22√i\%22],[1,1,\%22zpar\%22,\%22zpar\%22,\%22zpar\%22,1,\%22zpar\%22,\%22√i\%22],[1,1,1,\%22zpar\%22,\%22zpar\%22,\%22zpar\%22,\%22zpar\%22,\%22√i\%22],[\%22zpar\%22,\%22zpar\%22,1,\%22zpar\%22,\%22zpar\%22,1,\%22zpar\%22,\%22√i\%22],[\%22zpar\%22,1,\%22zpar\%22,\%22zpar\%22,\%22zpar\%22,1,\%22zpar\%22,\%22√i\%22],[1,\%22zpar\%22,\%22zpar\%22,\%22zpar\%22,\%22zpar\%22,1,\%22zpar\%22,\%22√i\%22],[\%22zpar\%22,\%22zpar\%22,\%22zpar\%22,\%22zpar\%22,\%22zpar\%22,1,\%22zpar\%22,\%22√i\%22],[1,1,1,1,1,1,\%22H\%22],[1,1,1,1,1,1,\%22|0⟩⟨0|\%22]]}}
\newcommand{\qeighteenfourtwo}{https://algassert.com/quirk\#circuit=\%7B\%22cols\%22\%3A\%5B\%5B\%22H\%22\%2C\%22H\%22\%2C\%22H\%22\%2C\%22H\%22\%2C\%22H\%22\%2C\%22H\%22\%5D\%2C\%5B1\%2C1\%2C1\%2C1\%2C\%22zpar\%22\%2C1\%2C\%22\%E2\%88\%9Ai\%22\%5D\%2C\%5B1\%2C1\%2C1\%2C1\%2C1\%2C\%22zpar\%22\%2C\%22\%E2\%88\%9Ai\%22\%5D\%2C\%5B1\%2C1\%2C1\%2C\%22zpar\%22\%2C\%22zpar\%22\%2C1\%2C\%22\%E2\%88\%9Ai\%22\%5D\%2C\%5B1\%2C1\%2C1\%2C\%22zpar\%22\%2C1\%2C\%22zpar\%22\%2C\%22\%E2\%88\%9Ai\%22\%5D\%2C\%5B\%22zpar\%22\%2C1\%2C\%22zpar\%22\%2C1\%2C\%22zpar\%22\%2C1\%2C\%22\%E2\%88\%9A-i\%22\%5D\%2C\%5B\%22zpar\%22\%2C1\%2C\%22zpar\%22\%2C1\%2C1\%2C\%22zpar\%22\%2C\%22\%E2\%88\%9A-i\%22\%5D\%2C\%5B\%22zpar\%22\%2C1\%2C1\%2C1\%2C\%22zpar\%22\%2C\%22zpar\%22\%2C\%22\%E2\%88\%9Ai\%22\%5D\%2C\%5B1\%2C\%22zpar\%22\%2C\%22zpar\%22\%2C1\%2C\%22zpar\%22\%2C1\%2C\%22\%E2\%88\%9A-i\%22\%5D\%2C\%5B1\%2C\%22zpar\%22\%2C\%22zpar\%22\%2C1\%2C1\%2C\%22zpar\%22\%2C\%22\%E2\%88\%9A-i\%22\%5D\%2C\%5B1\%2C\%22zpar\%22\%2C1\%2C1\%2C\%22zpar\%22\%2C\%22zpar\%22\%2C\%22\%E2\%88\%9Ai\%22\%5D\%2C\%5B1\%2C1\%2C\%22zpar\%22\%2C\%22zpar\%22\%2C\%22zpar\%22\%2C1\%2C\%22\%E2\%88\%9Ai\%22\%5D\%2C\%5B1\%2C1\%2C\%22zpar\%22\%2C\%22zpar\%22\%2C1\%2C\%22zpar\%22\%2C\%22\%E2\%88\%9Ai\%22\%5D\%2C\%5B1\%2C1\%2C\%22zpar\%22\%2C1\%2C\%22zpar\%22\%2C\%22zpar\%22\%2C\%22\%E2\%88\%9Ai\%22\%5D\%2C\%5B1\%2C1\%2C1\%2C\%22zpar\%22\%2C\%22zpar\%22\%2C\%22zpar\%22\%2C\%22\%E2\%88\%9Ai\%22\%5D\%2C\%5B\%22zpar\%22\%2C\%22zpar\%22\%2C\%22zpar\%22\%2C1\%2C\%22zpar\%22\%2C1\%2C\%22\%E2\%88\%9A-i\%22\%5D\%2C\%5B\%22zpar\%22\%2C\%22zpar\%22\%2C\%22zpar\%22\%2C1\%2C1\%2C\%22zpar\%22\%2C\%22\%E2\%88\%9A-i\%22\%5D\%2C\%5B\%22zpar\%22\%2C\%22zpar\%22\%2C1\%2C1\%2C\%22zpar\%22\%2C\%22zpar\%22\%2C\%22\%E2\%88\%9Ai\%22\%5D\%2C\%5B1\%2C1\%2C\%22zpar\%22\%2C\%22zpar\%22\%2C\%22zpar\%22\%2C\%22zpar\%22\%2C\%22\%E2\%88\%9Ai\%22\%5D\%2C\%5B1\%2C1\%2C1\%2C1\%2C\%22zpar\%22\%2C\%22zpar\%22\%2C\%22i\%22\%5D\%2C\%5B1\%2C1\%2C1\%2C1\%2C\%22H\%22\%5D\%2C\%5B1\%2C1\%2C1\%2C1\%2C\%22\%7C0\%E2\%9F\%A9\%E2\%9F\%A80\%7C\%22\%5D\%2C\%5B1\%2C1\%2C1\%2C1\%2C1\%2C\%22H\%22\%5D\%2C\%5B1\%2C1\%2C1\%2C1\%2C1\%2C\%22\%7C0\%E2\%9F\%A9\%E2\%9F\%A80\%7C\%22\%5D\%2C\%5B\%22zpar\%22\%2C\%22zpar\%22\%2C1\%2C1\%2C1\%2C1\%2C\%22\%E2\%88\%9Ai\%22\%5D\%2C\%5B1\%2C1\%2C\%22zpar\%22\%2C\%22zpar\%22\%2C1\%2C1\%2C\%22\%E2\%88\%9Ai\%22\%5D\%5D\%7D}
\newcommand{\qtwentyfoursixtwo}{https://algassert.com/quirk\#circuit=\%7B\%22cols\%22\%3A\%5B\%5B\%22H\%22\%2C\%22H\%22\%2C\%22H\%22\%2C\%22H\%22\%2C\%22H\%22\%2C\%22H\%22\%2C\%22H\%22\%2C\%22H\%22\%5D\%2C\%5B1\%2C1\%2C1\%2C1\%2C1\%2C1\%2C\%22zpar\%22\%2C1\%2C\%22\%E2\%88\%9Ai\%22\%5D\%2C\%5B1\%2C1\%2C1\%2C1\%2C1\%2C1\%2C1\%2C\%22zpar\%22\%2C\%22\%E2\%88\%9Ai\%22\%5D\%2C\%5B1\%2C1\%2C1\%2C1\%2C1\%2C1\%2C\%22zpar\%22\%2C\%22zpar\%22\%2C\%22\%E2\%88\%9Ai\%22\%5D\%2C\%5B\%22zpar\%22\%2C1\%2C1\%2C1\%2C1\%2C1\%2C\%22zpar\%22\%2C1\%2C\%22\%E2\%88\%9Ai\%22\%5D\%2C\%5B\%22zpar\%22\%2C1\%2C1\%2C1\%2C1\%2C1\%2C1\%2C\%22zpar\%22\%2C\%22\%E2\%88\%9Ai\%22\%5D\%2C\%5B1\%2C\%22zpar\%22\%2C1\%2C1\%2C1\%2C1\%2C\%22zpar\%22\%2C1\%2C\%22\%E2\%88\%9Ai\%22\%5D\%2C\%5B1\%2C\%22zpar\%22\%2C1\%2C1\%2C1\%2C1\%2C1\%2C\%22zpar\%22\%2C\%22\%E2\%88\%9Ai\%22\%5D\%2C\%5B1\%2C1\%2C\%22zpar\%22\%2C1\%2C1\%2C1\%2C\%22zpar\%22\%2C1\%2C\%22\%E2\%88\%9Ai\%22\%5D\%2C\%5B1\%2C1\%2C\%22zpar\%22\%2C1\%2C1\%2C1\%2C1\%2C\%22zpar\%22\%2C\%22\%E2\%88\%9Ai\%22\%5D\%2C\%5B1\%2C1\%2C1\%2C\%22zpar\%22\%2C1\%2C1\%2C\%22zpar\%22\%2C1\%2C\%22\%E2\%88\%9Ai\%22\%5D\%2C\%5B1\%2C1\%2C1\%2C\%22zpar\%22\%2C1\%2C1\%2C1\%2C\%22zpar\%22\%2C\%22\%E2\%88\%9Ai\%22\%5D\%2C\%5B1\%2C1\%2C1\%2C1\%2C\%22zpar\%22\%2C1\%2C\%22zpar\%22\%2C1\%2C\%22\%E2\%88\%9Ai\%22\%5D\%2C\%5B1\%2C1\%2C1\%2C1\%2C\%22zpar\%22\%2C1\%2C1\%2C\%22zpar\%22\%2C\%22\%E2\%88\%9Ai\%22\%5D\%2C\%5B1\%2C1\%2C1\%2C1\%2C1\%2C\%22zpar\%22\%2C\%22zpar\%22\%2C1\%2C\%22\%E2\%88\%9Ai\%22\%5D\%2C\%5B1\%2C1\%2C1\%2C1\%2C1\%2C\%22zpar\%22\%2C1\%2C\%22zpar\%22\%2C\%22\%E2\%88\%9Ai\%22\%5D\%2C\%5B\%22zpar\%22\%2C1\%2C1\%2C1\%2C1\%2C1\%2C\%22zpar\%22\%2C\%22zpar\%22\%2C\%22\%E2\%88\%9Ai\%22\%5D\%2C\%5B1\%2C\%22zpar\%22\%2C1\%2C1\%2C1\%2C1\%2C\%22zpar\%22\%2C\%22zpar\%22\%2C\%22\%E2\%88\%9Ai\%22\%5D\%2C\%5B1\%2C1\%2C\%22zpar\%22\%2C1\%2C1\%2C1\%2C\%22zpar\%22\%2C\%22zpar\%22\%2C\%22\%E2\%88\%9Ai\%22\%5D\%2C\%5B1\%2C1\%2C1\%2C\%22zpar\%22\%2C1\%2C1\%2C\%22zpar\%22\%2C\%22zpar\%22\%2C\%22\%E2\%88\%9Ai\%22\%5D\%2C\%5B1\%2C1\%2C1\%2C1\%2C\%22zpar\%22\%2C1\%2C\%22zpar\%22\%2C\%22zpar\%22\%2C\%22\%E2\%88\%9Ai\%22\%5D\%2C\%5B1\%2C1\%2C1\%2C1\%2C1\%2C\%22zpar\%22\%2C\%22zpar\%22\%2C\%22zpar\%22\%2C\%22\%E2\%88\%9Ai\%22\%5D\%2C\%5B\%22zpar\%22\%2C\%22zpar\%22\%2C\%22zpar\%22\%2C\%22zpar\%22\%2C\%22zpar\%22\%2C\%22zpar\%22\%2C\%22zpar\%22\%2C1\%2C\%22\%E2\%88\%9Ai\%22\%5D\%2C\%5B\%22zpar\%22\%2C\%22zpar\%22\%2C\%22zpar\%22\%2C\%22zpar\%22\%2C\%22zpar\%22\%2C\%22zpar\%22\%2C1\%2C\%22zpar\%22\%2C\%22\%E2\%88\%9Ai\%22\%5D\%2C\%5B\%22zpar\%22\%2C\%22zpar\%22\%2C\%22zpar\%22\%2C\%22zpar\%22\%2C\%22zpar\%22\%2C\%22zpar\%22\%2C\%22zpar\%22\%2C\%22zpar\%22\%2C\%22\%E2\%88\%9Ai\%22\%5D\%2C\%5B1\%2C1\%2C1\%2C1\%2C1\%2C1\%2C\%22H\%22\%5D\%2C\%5B1\%2C1\%2C1\%2C1\%2C1\%2C1\%2C\%22\%7C0\%E2\%9F\%A9\%E2\%9F\%A80\%7C\%22\%5D\%2C\%5B1\%2C1\%2C1\%2C1\%2C1\%2C1\%2C1\%2C\%22H\%22\%5D\%2C\%5B1\%2C1\%2C1\%2C1\%2C1\%2C1\%2C1\%2C\%22\%7C0\%E2\%9F\%A9\%E2\%9F\%A80\%7C\%22\%5D\%5D\%7D}

\newcommand{\qeighteenfivetwo}{https://algassert.com/quirk\#circuit=\%7B\%22cols\%22\%3A\%5B\%5B\%22H\%22\%2C\%22H\%22\%2C\%22H\%22\%2C\%22H\%22\%2C\%22H\%22\%2C\%22H\%22\%2C\%22H\%22\%5D\%2C\%5B1\%2C1\%2C1\%2C1\%2C1\%2C\%22zpar\%22\%2C1\%2C\%22\%E2\%88\%9Ai\%22\%5D\%2C\%5B\%22zpar\%22\%2C1\%2C1\%2C1\%2C1\%2C1\%2C\%22zpar\%22\%2C\%22\%E2\%88\%9A\%22\%5D\%2C\%5B1\%2C\%22zpar\%22\%2C1\%2C1\%2C1\%2C1\%2C\%22zpar\%22\%2C\%22\%E2\%88\%9A\%22\%5D\%2C\%5B1\%2C1\%2C\%22zpar\%22\%2C1\%2C1\%2C1\%2C\%22zpar\%22\%2C\%22\%E2\%88\%9A\%22\%5D\%2C\%5B1\%2C1\%2C1\%2C\%22zpar\%22\%2C1\%2C\%22zpar\%22\%2C1\%2C\%22\%E2\%88\%9Ai\%22\%5D\%2C\%5B1\%2C1\%2C1\%2C\%22zpar\%22\%2C1\%2C1\%2C\%22zpar\%22\%2C\%22\%E2\%88\%9Ai\%22\%5D\%2C\%5B1\%2C1\%2C1\%2C1\%2C\%22zpar\%22\%2C\%22zpar\%22\%2C1\%2C\%22\%E2\%88\%9Ai\%22\%5D\%2C\%5B1\%2C1\%2C1\%2C1\%2C\%22zpar\%22\%2C1\%2C\%22zpar\%22\%2C\%22\%E2\%88\%9Ai\%22\%5D\%2C\%5B1\%2C1\%2C1\%2C1\%2C1\%2C\%22zpar\%22\%2C\%22zpar\%22\%2C\%22\%E2\%88\%9Ai\%22\%5D\%2C\%5B\%22zpar\%22\%2C\%22zpar\%22\%2C1\%2C1\%2C1\%2C1\%2C\%22zpar\%22\%2C\%22\%E2\%88\%9A\%22\%5D\%2C\%5B\%22zpar\%22\%2C1\%2C\%22zpar\%22\%2C1\%2C1\%2C1\%2C\%22zpar\%22\%2C\%22\%E2\%88\%9A\%22\%5D\%2C\%5B1\%2C\%22zpar\%22\%2C\%22zpar\%22\%2C1\%2C1\%2C1\%2C\%22zpar\%22\%2C\%22\%E2\%88\%9A\%22\%5D\%2C\%5B1\%2C1\%2C1\%2C\%22zpar\%22\%2C\%22zpar\%22\%2C\%22zpar\%22\%2C1\%2C\%22\%E2\%88\%9Ai\%22\%5D\%2C\%5B1\%2C1\%2C1\%2C\%22zpar\%22\%2C\%22zpar\%22\%2C1\%2C\%22zpar\%22\%2C\%22\%E2\%88\%9Ai\%22\%5D\%2C\%5B1\%2C1\%2C1\%2C\%22zpar\%22\%2C1\%2C\%22zpar\%22\%2C\%22zpar\%22\%2C\%22\%E2\%88\%9Ai\%22\%5D\%2C\%5B1\%2C1\%2C1\%2C1\%2C\%22zpar\%22\%2C\%22zpar\%22\%2C\%22zpar\%22\%2C\%22\%E2\%88\%9Ai\%22\%5D\%2C\%5B\%22zpar\%22\%2C\%22zpar\%22\%2C\%22zpar\%22\%2C1\%2C1\%2C1\%2C\%22zpar\%22\%2C\%22\%E2\%88\%9A\%22\%5D\%2C\%5B1\%2C1\%2C1\%2C\%22zpar\%22\%2C\%22zpar\%22\%2C\%22zpar\%22\%2C\%22zpar\%22\%2C\%22\%E2\%88\%9Ai\%22\%5D\%2C\%5B1\%2C1\%2C1\%2C1\%2C1\%2C\%22H\%22\%5D\%2C\%5B1\%2C1\%2C1\%2C1\%2C1\%2C\%22\%7C0\%E2\%9F\%A9\%E2\%9F\%A80\%7C\%22\%5D\%2C\%5B1\%2C1\%2C1\%2C1\%2C1\%2C1\%2C\%22H\%22\%5D\%2C\%5B1\%2C1\%2C1\%2C1\%2C1\%2C1\%2C\%22\%7C0\%E2\%9F\%A9\%E2\%9F\%A80\%7C\%22\%5D\%2C\%5B1\%2C1\%2C1\%2C\%22zpar\%22\%2C\%22zpar\%22\%2C1\%2C1\%2C\%22\%E2\%88\%9Ai\%22\%5D\%2C\%5B\%22zpar\%22\%2C\%22zpar\%22\%2C\%22zpar\%22\%2C1\%2C1\%2C1\%2C1\%2C\%22\%E2\%88\%9Ai\%22\%5D\%2C\%5B\%22zpar\%22\%2C\%22zpar\%22\%2C1\%2C1\%2C1\%2C1\%2C1\%2C\%22\%E2\%88\%9Ai\%22\%5D\%2C\%5B\%22zpar\%22\%2C1\%2C\%22zpar\%22\%2C1\%2C1\%2C1\%2C1\%2C\%22\%E2\%88\%9Ai\%22\%5D\%2C\%5B1\%2C\%22zpar\%22\%2C\%22zpar\%22\%2C1\%2C1\%2C1\%2C1\%2C\%22\%E2\%88\%9Ai\%22\%5D\%5D\%7D}
\newcommand{\qtwentyfourtwo}{https://algassert.com/quirk\#circuit=\%7B\%22cols\%22\%3A\%5B\%5B\%22H\%22\%2C\%22H\%22\%2C\%22H\%22\%2C\%22H\%22\%2C\%22H\%22\%2C\%22H\%22\%2C\%22H\%22\%2C1\%5D\%2C\%5B1\%2C1\%2C1\%2C1\%2C\%22zpar\%22\%2C1\%2C1\%2C\%22\%5Cu221a-i\%22\%5D\%2C\%5B1\%2C1\%2C1\%2C1\%2C1\%2C\%22zpar\%22\%2C1\%2C\%22\%5Cu221a-i\%22\%5D\%2C\%5B1\%2C1\%2C1\%2C1\%2C1\%2C1\%2C\%22zpar\%22\%2C\%22\%5Cu221a-i\%22\%5D\%2C\%5B1\%2C1\%2C1\%2C1\%2C\%22zpar\%22\%2C\%22zpar\%22\%2C\%22zpar\%22\%2C\%22\%5Cu221a-i\%22\%5D\%2C\%5B\%22zpar\%22\%2C1\%2C1\%2C1\%2C\%22zpar\%22\%2C\%22zpar\%22\%2C1\%2C\%22\%5Cu221ai\%22\%5D\%2C\%5B\%22zpar\%22\%2C1\%2C1\%2C1\%2C\%22zpar\%22\%2C1\%2C\%22zpar\%22\%2C\%22\%5Cu221ai\%22\%5D\%2C\%5B\%22zpar\%22\%2C1\%2C1\%2C1\%2C1\%2C\%22zpar\%22\%2C\%22zpar\%22\%2C\%22\%5Cu221ai\%22\%5D\%2C\%5B1\%2C\%22zpar\%22\%2C1\%2C1\%2C\%22zpar\%22\%2C\%22zpar\%22\%2C1\%2C\%22\%5Cu221ai\%22\%5D\%2C\%5B1\%2C\%22zpar\%22\%2C1\%2C1\%2C\%22zpar\%22\%2C1\%2C\%22zpar\%22\%2C\%22\%5Cu221ai\%22\%5D\%2C\%5B1\%2C\%22zpar\%22\%2C1\%2C1\%2C1\%2C\%22zpar\%22\%2C\%22zpar\%22\%2C\%22\%5Cu221ai\%22\%5D\%2C\%5B1\%2C1\%2C\%22zpar\%22\%2C1\%2C\%22zpar\%22\%2C\%22zpar\%22\%2C1\%2C\%22\%5Cu221ai\%22\%5D\%2C\%5B1\%2C1\%2C\%22zpar\%22\%2C1\%2C\%22zpar\%22\%2C1\%2C\%22zpar\%22\%2C\%22\%5Cu221ai\%22\%5D\%2C\%5B1\%2C1\%2C\%22zpar\%22\%2C1\%2C1\%2C\%22zpar\%22\%2C\%22zpar\%22\%2C\%22\%5Cu221ai\%22\%5D\%2C\%5B1\%2C1\%2C1\%2C\%22zpar\%22\%2C\%22zpar\%22\%2C\%22zpar\%22\%2C1\%2C\%22\%5Cu221ai\%22\%5D\%2C\%5B1\%2C1\%2C1\%2C\%22zpar\%22\%2C\%22zpar\%22\%2C1\%2C\%22zpar\%22\%2C\%22\%5Cu221ai\%22\%5D\%2C\%5B1\%2C1\%2C1\%2C\%22zpar\%22\%2C1\%2C\%22zpar\%22\%2C\%22zpar\%22\%2C\%22\%5Cu221ai\%22\%5D\%2C\%5B\%22zpar\%22\%2C\%22zpar\%22\%2C\%22zpar\%22\%2C\%22zpar\%22\%2C\%22zpar\%22\%2C1\%2C1\%2C\%22\%5Cu221ai\%22\%5D\%2C\%5B\%22zpar\%22\%2C\%22zpar\%22\%2C\%22zpar\%22\%2C\%22zpar\%22\%2C1\%2C\%22zpar\%22\%2C1\%2C\%22\%5Cu221ai\%22\%5D\%2C\%5B\%22zpar\%22\%2C\%22zpar\%22\%2C\%22zpar\%22\%2C\%22zpar\%22\%2C1\%2C1\%2C\%22zpar\%22\%2C\%22\%5Cu221ai\%22\%5D\%2C\%5B\%22zpar\%22\%2C\%22zpar\%22\%2C\%22zpar\%22\%2C\%22zpar\%22\%2C\%22zpar\%22\%2C\%22zpar\%22\%2C\%22zpar\%22\%2C\%22\%5Cu221ai\%22\%5D\%2C\%5B1\%2C1\%2C1\%2C1\%2C\%22H\%22\%2C1\%2C1\%2C1\%5D\%2C\%5B1\%2C1\%2C1\%2C1\%2C\%22\%7C0\%5Cu27e9\%5Cu27e80\%7C\%22\%2C1\%2C1\%2C1\%5D\%2C\%5B1\%2C1\%2C1\%2C1\%2C1\%2C\%22H\%22\%2C1\%2C1\%5D\%2C\%5B1\%2C1\%2C1\%2C1\%2C1\%2C\%22\%7C0\%5Cu27e9\%5Cu27e80\%7C\%22\%2C1\%2C1\%5D\%2C\%5B1\%2C1\%2C1\%2C1\%2C1\%2C1\%2C\%22H\%22\%2C1\%5D\%2C\%5B1\%2C1\%2C1\%2C1\%2C1\%2C1\%2C\%22\%7C0\%5Cu27e9\%5Cu27e80\%7C\%22\%2C1\%5D\%5D\%7D
}

\newcommand{\quirk}[2]{\href{#1}{\textsc{#2}}}

\newcommand{\CZ}{\mathrm{CZ}}
\newcommand{\CS}{\mathrm{CS}}
\newcommand{\CCZ}{\mathrm{CCZ}}
\newcommand{\CkZ}[1]{\mathrm{C}^{#1}\mathrm{Z}}

\newcommand\IBM{%
    IBM Quantum,
    IBM T.~J. Watson Research Center,
    Yorktown Heights, NY 10598, USA}

\begin{document}

\title{Borrowed Identities: Malleable Distillation Factories and a Unified Numerical Search}

\author{Shraddha Singh}
\thanks{Present address: \IBM}
\thanks{Corresponding address: shraggygkp@gmail.com}
\affiliation{Google Quantum AI}
\affiliation{Department of Applied Physics, Yale University,
             New Haven, Connecticut 06511, USA}
\affiliation{Yale Quantum Institute, Yale University,
             New Haven, Connecticut 06511, USA}
\author{Craig Gidney}
\affiliation{Google Quantum AI}
\author{Cody Jones}
\affiliation{Google Quantum AI}

%% ---------------------------------------------------------------
%%  Abstract
%% ---------------------------------------------------------------
\begin{abstract}
Magic-state distillation is one of the leading overheads in fault-tolerant
quantum computation. Existing methods for finding distillation factories require
a transversal gate to act correctly on the entire codespace, a constraint that
limits both generality and search efficiency. We introduce a strictly weaker
\emph{borrowed-identity} condition, requiring only that the distillation circuit
act as the identity on a single input state. It applies uniformly across all
levels of the Clifford hierarchy and unifies, within a single level, factories
that distill different magic states---for example, the $\ket{T}$, $\ket{\CS}$,
and $\ket{\CCZ}$ factories. The condition is closed-form for symmetric
circuits---recovering the quantum Reed--Muller family at every level---and linear
for two-group circuits, so a brute-force search runs in seconds and recovers,
within its range, all distance-2 factories known from code-construction
approaches, including entangled-output and multi-output factories previously
outside the scope of any single numerical search. The same circuits are
\emph{malleable}: one parent encodes many factories, so the output magic-state
type is chosen at compile time rather than fixed by a hard-coded design. The
framework also reaches beyond CSS codes to synthillation and non-CSS catalytic
factories.
\end{abstract}
\maketitle

%% ===============================================================
%% ===============================================================
%% ===============================================================

Fault-tolerant quantum computation (FTQC) requires a universal gate set, the Clifford group with a non-Clifford gate~\cite{bravyi2005universal}. Yet, most practical quantum error-correcting codes admit a constant-depth fault-tolerant implementation only for the  Clifford group~\cite{eastin2009restrictions}. The standard route to universality is gate teleportation~\cite{gottesman1999demonstrating}: a non-Clifford gate
$U$ is applied by consuming a magic state $\ket{U}$. For the
$T=e^{-i\pi Z/8}$ gate, the relevant resource is
$\ket{T}=T\ket{+}=\ket{0}+e^{i\pi/4}\ket{1}$ (unnormalized).
Magic-state distillation (MSD)~\cite{bravyi2005universal,rodriguez2025experimental} purifies
many low-fidelity logical copies of this state, prepared by injection~\cite{singh2022high,gidney2023cleaner,liu2026situ,li2015magic,gupta2024encoding,zhang2025constant} or cultivation techniques~\cite{gidney2024magic,jacoby2025magic,sahay2026fold,rosenfeld2025magic}, into fewer high-fidelity ones using
only Clifford operations, with overhead
\begin{equation}
  O\!\left(\log^\gamma(1/\epsilon)\right),\quad
  \gamma=\log_d\!\left(\tfrac{N}{k}\right),
  \label{eq:overhead}
\end{equation}
where $N$ inputs yield $k$ outputs at distance $d$ and target error
rate $\epsilon$. Low $\gamma$ is a desirable property for reducing FTQC overhead. Despite advances in high-quality magic-state preparation~\cite{singh2022high,gidney2024magic,jacoby2025magic,sahay2026fold,rosenfeld2025magic}, MSD remains a crucial component of projected fault-tolerant architectures~\cite{gidney2025factor,zhou2025resource}.

Prior constructions of distillation factories with practically relevant $[[N,k,d]]$ parameters broadly operate in the \emph{Heisenberg picture},
demanding that a transversal gate act correctly on every logical
state, the constraint of generalized triorthogonality~\cite{bravyi2012magic,rengaswamy2020optimality,haah2018codes,bombin2007exact,eastin2013distilling,krishna2019towards,wills2024constant} or quasitransversality~\cite{campbell2017unified},
making the corresponding searches non-linear matrix problems barring limited exceptions~\cite{meier2012magic,jones2013multilevel}. A complementary perspective on distillation was offered by
Litinski~\cite{litinski2019magic}, who observed that the factories
$[[15,1,3]]$, $[[14,2,2]]$, and $[[20,4,2]]$ can be represented via spacetime dual of the respective transversal implementation on CSS codes with a common circuit
structure: they are all derived from identity circuits on $\ket{+}^{\otimes n}$
built from multi-qubit $Z$-rotations. These circuit examples were not given an algebraic formulation and no systematic search was built
on it.

In this Letter, we formalize this observation of Ref.~\cite{litinski2019magic} as a
\emph{borrowed-identity condition} in the Schr\"odinger picture,
asking only that the circuit act as identity on a specific input
state through the collective phase action of its gates, with no requirement that it act as identity on the full codespace -- the identity is borrowed from the input state. A distillation factory is then obtained from a borrowed identity on $n$
qubits using $N$ gates by removing all gates acting solely on $k$ qubits: the $k$
affected qubits are left in a desired magic state $\ket{\theta}$ and serve as distilled outputs, while the remaining
$n - k$ qubits are measured in the $X$ basis for error detection. On conventions: we use $n$ for the circuit's qubit count and $N$ for its gate count---the number of $X$-stabilizers ($n-k$) and input magic states ($N$), respectively, in the spacetime-dual CSS code~\footnote{most of the literature uses $n$ for our $N$}. Under this spacetime rotation---exchanging qubits with stabilizer ancillae and phase rotations with magic-state inputs---the $n$-qubit borrowed identity of $N$ multi-qubit $Z$-rotations becomes an $N$-qubit CSS code with $n-k$ $X$-stabilizers on which $T^{\otimes N}$ is a logical Clifford-hierarchy gate, and $\mathcal{C}\ket{+}^{\otimes n}=e^{i\phi}\ket{+}^{\otimes n}$ becomes all $X$-stabilizer ancillae returning to $\ket{+}$.

We derive the conditions under which a borrowed identity serves as a distillation factory; unlike all prior frameworks, they are
parameterized by $\theta = \pi/2^l$ and hold at every level $l$ of the Clifford hierarchy without modification: the same algebraic
machinery searches for $S$-gate factories ($l=2$), $T$-gate
factories ($l=3$), and higher-level rotation factories ($l \ge 4$)
simultaneously. The relaxation has three concrete payoffs. First, a closed-form
master equation (Theorem~\ref{thm:main}) recovers the quantum Reed--Muller
code~\cite{PhysRevA.54.1862} $[[2^{l+1}-1, 1, 3]]$
analytically at every $l$. Second, extending the construction to circuits with entangled outputs (like $\ket{\CZ}$, or $\ket{\CS}$ and $\ket{\CCZ}$ magic states) recovers the
generalized triorthogonal family~\cite{haah2018codes,jones2013low}, for example, the hypercube quantum code family $[[2^l,l,2]]$~\cite{11096910,webster2022xp} such as $[[8,3,2]]$ ($T$-to-CCZ) and $[[16,4,2]]$ ($T$-to-CCCZ); and the $T$-to-CS factory $[[12,2,2]]$~\cite{webster2023transversal} (equivalent to the $4\,\CS\!\to\!1\,\CS$ factory derived from AG codes~\cite{anqi2026qec}), providing the first systematic search over these factories for arbitrary $l$. A two-group asymmetric circuit yields a phase condition \emph{linear over $\mathbb{Z}_{2^{l+1}}$} (Theorem~\ref{thm:asymmetric-master}), with verification costing $O(k(n-k))$ phase computations, growing only polynomially in $n$ and $k$. Third, we show malleable circuits for arbitrary $l$ that unify factories distilling different magic states at the same Clifford-hierarchy level (for example, the $T$-to-$T$, $T$-to-CS, and $T$-to-CCZ factories at $l=3$).

An unoptimized brute-force Python implementation on the asymmetric circuit search conditions, run on a standard
laptop (Apple M4, 16 GB RAM), recovers in $\sim 9$ seconds every distance-$2$ factory within
the swept range $l \in \{2,3,4\}$, $k \le 7$, $n \le 11$, and gate-weight parameters $s_\mathrm{total}, s_O \le 7$ (21{,}920 valid factories in total)~\footnote{The count of valid borrowed-identity solutions---admissible (parameter-tuple, sign) configurations, including trivial (stabilizer) outputs and counting each factory once per realizing $(n,s)$ tuple---so it exceeds the number of distinct factories.}. Unlike traditional distillation conditions parametrized by $N$, our search is parametrized by $n$, which stays small even for $[[N,k,d]]$ factories with large $N$. The result includes well-known factories that no prior single framework captures simultaneously: the H-code ($l=2$) and Bravyi--Haah ($l=3$) families, all distance-$2$ entries of the Nezami--Haah triorthogonal-code catalogue~\cite{nezami2022classification} at the largest $k$ for each $N$, and the small distance-$2$ AG-codes~\cite{anqi2026qec} (see Fig.~\ref{fig:search-results}). The $k \le 7$ ceiling is
a property of the swept range and not of the framework: since the
runtime scales only polynomially in $n$
and $k$, any distance-$2$ factory at larger $k$ is reachable
by widening the sweep. 

%% ===============================================================
%% ===============================================================
\vspace{10pt}
\paragraph*{Framework---}
%\section{Framework}
\label{sec:framework}
For $\theta=\pi/2^l$, $l\in\mathbb{Z}_{\ge 1}$, the \emph{magic
state} is $\ket{\theta}\coloneqq\ket{0}+e^{i2\theta}\ket{1}$
(unnormalized). The weight-$w$ phase rotations to be used in the circuit are 
\begin{equation}
Z^{\otimes w}(\theta)\coloneqq
  \exp\!\left(i\theta(I-\bigotimes_{j=1}^{w}Z_j)\right), 
  \label{eq:gate_def}
\end{equation}
 which apply phase $e^{i2\theta}$ to basis
states with an odd number of $1$s among the $w$
designated qubits. The gates $Z^{\otimes w}(\pi/2^l)$ are diagonal elements of the
$l$-th level of the Clifford hierarchy
$\mathbb{C}_l$~\cite{gottesman1999demonstrating}, and the
diagonal subgroup of $\mathbb{C}_l$ is closed under products~\cite{cui2017diagonal}.

\begin{definition}[Borrowed identity]
\label{def:borrowed}
A circuit $\mathcal{C}$ on $n$ qubits, built from parity phase
gates with $\theta = \pi/2^l$, is a \emph{borrowed identity} if
$\mathcal{C}\ket{+}^{\otimes n} = e^{i\phi}\ket{+}^{\otimes n}$
for some global phase $\phi$. This framework formalizes the construction of Ref.~\cite{litinski2019magic} to re-envision distillation factories in the Schr\"odinger picture, where only the state of the qubits is tracked, unlike the Heisenberg picture (see App.~\ref{app:automorphisms}), where the transversality of an operation is tracked. Because this circuit is built entirely from gates in the diagonal
$\mathbb{C}_l$ subgroup, the
$k$-qubit output state of the factory, obtained by removing gates from the borrowed identity that act solely on the output qubits, is necessarily a resource
state that teleports a diagonal $\mathcal{C}_l$ gate using conditional operations from $\mathcal{C}_{l-1}$. A distance-2 factory is guaranteed with such a borrowed identity construction since all phase rotations have dependency on a non-output (check) qubit which will detect any single error in the circuit upon measurement in the $X$ basis.
\end{definition} 

Our framework changes the search from generalized triorthogonal codes to borrowed-identity circuits, treated below in three settings: symmetric (closed-form conditions), two-group (brute-force search), and malleable circuits.
 
\paragraph{Symmetric circuits---}
%\section{Symmetric circuits}
\label{sec:symmetric}
When a circuit is invariant under the full symmetric group $S_n$
on the $n$ qubits~\footnote{$S_n$ acts by relabeling qubit registers, not by physically permuting hardware: qubits with identical circuit connectivity are interchangeable for the borrowed-identity condition.}, the gate set
is specified by allowed weights $\mathcal{W}\subseteq\{1,\ldots,n\}$
with all $\binom{n}{w}$ gates of each $w\in\mathcal{W}$ included.
States of equal Hamming weight $i$ accumulate identical phase, so it
suffices to track the Dicke sector
$\ket{\binom{n}{i}}\coloneqq\sum_{|\mathbf{x}|=i}\ket{\mathbf{x}}$.

\begin{lemma}[Dicke-sector phase accumulation]
\label{lem:dicke}
For the above symmetric circuit, every element of $\ket{\binom{n}{i}}$
accumulates the total phase
\begin{equation}
  \Phi(i)\equiv 2^i\theta\sum_{\substack{w\in\mathcal{W}\\w\ge i}}
  \binom{n-i}{w-i}\pmod{2\pi}.
  \label{eq:dicke_phase}
\end{equation}
\end{lemma}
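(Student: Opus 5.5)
The plan is to compute the phase that a single computational basis state $\ket{\mathbf{x}}$ with $|\mathbf{x}|=i$ picks up from the symmetric gate set, and then organize the sum by Hamming-weight sectors. The key tool is the standard arithmetic expansion of a parity: for bits $x_1,\dots,x_w$,
\begin{equation}
  x_1\oplus\cdots\oplus x_w=\sum_{\emptyset\neq T\subseteq\{1,\dots,w\}}(-2)^{|T|-1}\prod_{j\in T}x_j ,
\end{equation}
which I would prove by induction on $w$ from $a\oplus b=a+b-2ab$. By Eq.~\eqref{eq:gate_def}, the gate $Z^{\otimes w}(\theta)$ on support $S$ contributes phase $2\theta\,(\bigoplus_{j\in S}x_j)$. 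Substituting the expansion turns the phase of $\ket{\mathbf{x}}$ into a sum of monomials $\prod_{j\in T}x_j$, one for each nonempty $T\subseteq S$, with coefficient $2^{|T|}\theta(-1)^{|T|-1}$.

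Next I would collect terms across the whole symmetric circuit. A fixed subset $T$ of size $j$ lies in exactly $\binom{n-j}{w-j}$ supports of weight $w$. So the total coefficient of the monomial $\prod_{T}x$ is $(-1)^{j-1}2^{j}\theta\sum_{w\in\mathcal{W},\,w\ge j}\binom{n-j}{w-j}$. Up to the sign $(-1)^{j-1}$, this is exactly the quantity $\Phi(j)$ of Eq.~\eqref{eq:dicke_phase}. It depends only on $j$, which is the symmetry statement that all states in a Dicke sector behave identically.

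For $\ket{\mathbf{x}}$ of weight $i$, the surviving monomials are those with $T\subseteq\mathrm{supp}(\mathbf{x})$. The term from $T=\mathrm{supp}(\mathbf{x})$ is the new phase that first appears at sector $i$, and it carries magnitude $2^i\theta\sum_{w\ge i}\binom{n-i}{w-i}$. This is precisely $\Phi(i)$, with the reduction mod $2\pi$ coming from $\theta=\pi/2^l$. The remaining terms come from proper subsets, so the total phase is the inclusion--exclusion combination
\begin{equation}
  \sum_{j\le i}\binom{i}{j}(-1)^{j-1}\Phi(j).
\end{equation}

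The main obstacle is reconciling this with the literal wording ``accumulates the total phase $\Phi(i)$''. As I compute it, the full phase on a weight-$i$ state is the alternating sum above and not $\Phi(i)$ alone; for example, a weight-$1$ gate set applied to a weight-$2$ state gives phase $4\theta$, while $\Phi(2)=0$. So I cannot prove the statement as a claim about the total phase, and I would not present it as one. What I can establish is that $\Phi(i)$ is the sector-$i$ contribution: the Möbius coefficient that weight $i$ adds beyond all lower sectors. The sign $(-1)^{i-1}$ is immaterial only when $2\Phi(i)\equiv0\pmod{2\pi}$. If the paper's Theorem~\ref{thm:main} requires the condition that all sector phases vanish mod $2\pi$, then the two readings agree: all $\Phi(j)\equiv0$ forces every total phase to vanish, and the converse follows by Möbius inversion on $j\le i$. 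I would therefore present the lemma as the per-sector statement and note this equivalence explicitly.
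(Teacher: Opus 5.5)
Your argument is correct, and it reaches the same exact identity as the paper by a different route. The paper starts from direct overlap counting, $\Phi(i)=2\theta\sum_{m\,\mathrm{odd}}\binom{i}{m}\sum_{w}\binom{n-i}{w-m}$. It then iterates Pascal's rule to rewrite each $\binom{n-i}{w-m}$ in terms of $\binom{n-j}{w-j}$, refactors with $\binom{i}{m}\binom{i-m}{j-m}=\binom{i}{j}\binom{j}{m}$, and evaluates the parity-restricted inner sum by odd extraction. The result is exactly your $\sum_{j\le i}\binom{i}{j}(-1)^{j-1}\Phi(j)$. You obtain it in one step from the multilinear expansion of the parity, needing only the count $\binom{n-j}{w-j}$ of weight-$w$ supports that contain a fixed $j$-subset.

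Your reading of the statement is also right. The paper's inductive step explicitly assumes $\Phi(j)\equiv 0\pmod{2\pi}$ for all $j<i$ in order to discard the $j<i$ terms, so what it proves is the conditional version; that hypothesis is written out only in Lemma~\ref{lem:dicke-asym}. The paper's evaluation $C(i,i)=2^i\theta$ also omits the factor $(-1)^{i+1}$, which reappears in Eq.~\eqref{eq:phase-asym}. Your counterexample with $\mathcal{W}=\{1\}$ violates exactly that hypothesis at $i=1$.

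Your route gives an unconditional, induction-free formula in which both the hypothesis and the sign are visible. It also transfers verbatim to the two-group case, since a subset with $j_O$ output and $j_S$ check qubits lies in $\binom{k-j_O}{w_O-j_O}\binom{n-k-j_S}{w_S-j_S}$ supports of type $(w_O,w_S)$, and to the Newton expansion used for output classification. The paper's route stays in Dicke-sector counting and folds the borrowed-identity condition into the induction, which is how Theorem~\ref{thm:main} consumes it.

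You should still state the conditional form explicitly: if $\Phi(j)\equiv0$ for all $j<i$, then every weight-$i$ basis state acquires exactly $(-1)^{i-1}\Phi(i)$. That is the version the paper proves and uses, and it follows in one line from your expansion.
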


The full inductive proof is given in App.~\ref{app:dicke_proof}.

\begin{theorem}[Symmetric master condition]
\label{thm:main}
A symmetric circuit with gate set $\mathcal{W}$ and $\binom{n}{w}$ gates of each weight is a borrowed identity if and only if
\begin{equation}
  \sum_{\substack{w\in\mathcal{W}\\w\ge i}}\binom{n-i}{w-i}
  \in 2^{l-i+1}\mathbb{Z}
  \quad\text{for all }i\in\{1,\ldots,l\}.
  \label{eq:master-thm}
\end{equation}
For $i>l$, the prefactor in Eq.~(\ref{eq:dicke_phase}) is $2^i\theta= 2m\pi$ where $m$ is an integer; the condition is
satisfied automatically. The distance of the code is determined by the minimum number ($d$) of gates whose combination has an odd overlap on not more than $k$ qubits.
\end{theorem}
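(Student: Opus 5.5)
The approach is to write the diagonal phase of the circuit as a multilinear polynomial in the computational-basis bits and read off the borrowed-identity condition coefficient by coefficient. First, since $\mathcal{C}$ is diagonal, $\mathcal{C}\ket{\mathbf{x}}=e^{if(\mathbf{x})}\ket{\mathbf{x}}$ for a phase function $f:\{0,1\}^n\to\mathbb{R}/2\pi\mathbb{Z}$. Because $\ket{+}^{\otimes n}$ is the uniform superposition with all amplitudes nonzero, $\mathcal{C}\ket{+}^{\otimes n}=e^{i\phi}\ket{+}^{\otimes n}$ holds if and only if $f(\mathbf{x})\equiv f(\mathbf{0})=\phi$ for every $\mathbf{x}$. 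The task is therefore to characterize when $f$ is constant modulo $2\pi$.

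Second, I expand each gate in monomials. The parity indicator on a support $W$ is
\begin{equation*}
\tfrac12\Bigl(1-\prod_{j\in W}(1-2x_j)\Bigr)=\sum_{\emptyset\ne S\subseteq W}(-2)^{|S|-1}\prod_{j\in S}x_j .
\end{equation*}
Summing $2\theta$ times this over all gates of the symmetric circuit gives
\begin{equation*}
f(\mathbf{x})=\sum_{S\ne\emptyset}c_{|S|}\prod_{j\in S}x_j,\qquad c_i=\pm 2^{i}\theta\,A_i,\qquad A_i=\sum_{\substack{w\in\mathcal{W}\\ w\ge i}}\binom{n-i}{w-i}.
\end{equation*}
Here $A_i$ counts the gates whose support contains a fixed $i$-set $S$. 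These coefficients $c_i$ are exactly the Dicke-sector increments $\Phi(i)$ of Lemma~\ref{lem:dicke}, up to sign, which is irrelevant modulo $2\pi$ once we ask for vanishing. So this step can either be quoted from Lemma~\ref{lem:dicke} or re-derived in one line as above.

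Third, and this is the step needing care, I show that $f$ is constant modulo $2\pi$ if and only if every coefficient with $S\ne\emptyset$ vanishes modulo $2\pi$. The direction ``$\Leftarrow$'' is immediate. For ``$\Rightarrow$'' I would use Möbius inversion on the Boolean lattice:
\begin{equation*}
c_S=\sum_{T\subseteq S}(-1)^{|S|-|T|}f(\mathbf{1}_T).
\end{equation*}
This expresses each coefficient as an integer combination of values of $f$. If all values equal $\phi$, the combination for $S\ne\emptyset$ is $\phi\sum_{T\subseteq S}(-1)^{|S|-|T|}=0$, which remains valid modulo $2\pi$ because the coefficients are integers. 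Hence the borrowed-identity condition is equivalent to $2^i\theta A_i\in 2\pi\mathbb{Z}$ for all $i\ge1$. Substituting $\theta=\pi/2^l$, this becomes $A_i\in 2^{l-i+1}\mathbb{Z}$, which is Eq.~(\ref{eq:master-thm}). For $i>l$ the prefactor $2^i\theta$ is already a multiple of $2\pi$, so those conditions hold automatically, as stated.

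Finally, the distance remark is a characterization rather than an equivalence, and I would justify it by the spacetime-dual picture. A $Z$-type fault on a set of input gates goes undetected precisely when the product of those rotations commutes with all $X$-checks on the $n-k$ non-output qubits. That happens exactly when the chosen gates have even overlap on every check qubit, so that their odd overlap is confined to at most the $k$ output qubits. The minimum size $d$ of such a gate combination is therefore the distance. I expect the only genuine obstacle to be the ``modulo $2\pi$'' subtlety in the third step; the integrality of the Möbius coefficients resolves it.
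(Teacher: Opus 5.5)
Your argument is correct, but it reaches the condition by a different route than the paper.

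The paper works with the phase $\Phi(i)$ that a weight-$i$ basis state actually acquires. It counts odd overlaps directly (Eq.~(\ref{eq:dicke_phase1})), iterates Pascal's recurrence, and regroups using $\binom{i}{m}\binom{i-m}{j-m}=\binom{i}{j}\binom{j}{m}$ and the odd-extraction identity. It then inducts on $i$, assuming $\Phi(j)\equiv0$ for $j<i$ so that only the top term survives. Lemma~\ref{lem:dicke} is therefore really a conditional statement, and the ``iff'' comes from running that induction in both directions.

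You instead expand the parity indicator, which gives the exact, unconditional monomial coefficient $c_S=(-1)^{|S|-1}2^{|S|}\theta A_{|S|}$. You then settle the modulo-$2\pi$ issue in one step through the integrality of M\"obius inversion.

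The two arguments are the same triangular change of basis seen differently. The paper's regrouping $\Phi(i)=\sum_{j\le i}C(i,j)A_j$, with $C(i,j)=\binom{i}{j}(-1)^{j-1}2^{j}\theta$, is exactly your $f(\mathbf{1}_T)=\sum_{S\subseteq T}c_S$ for $|T|=i$. Its induction is a step-by-step version of your unimodular inversion.

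Your version is more general. Symmetry enters only when $A_{|S|}$ is counted, so with signs and arbitrary supports the coefficient of $x^S$ is $(-1)^{|S|-1}2^{|S|}\theta\sum_{g\supseteq S}\sigma_g$. This yields Lemma~\ref{lem:dicke-asym} and Theorem~\ref{thm:asymmetric-master}, and the linear extended condition of App.~\ref{app:symfree}, without the ``assuming lower sectors vanish'' proviso. The paper's version keeps the Dicke-sector phases themselves, which is the form reused for output classification by finite differences in App.~\ref{app:output-extraction}.

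Two small caveats:
\begin{itemize}
\item Do not ``quote'' Lemma~\ref{lem:dicke} for $c_i$. The lemma's $\Phi(i)$ is the phase of a weight-$i$ state, i.e.\ $\sum_{j\le i}\binom{i}{j}c_j$, and it reduces to $c_i$ modulo $2\pi$ only under the inductive hypothesis. Your one-line derivation is what keeps the argument non-circular.
\item The paper asserts the distance remark without proof, so your dual-code justification adds something. There you must require the odd overlap to be a \emph{nonempty} subset of the output qubits. Combinations with empty symmetric difference leave no net error and must not be counted; an example is the gates on $\{1\}$, $\{2\}$, $\{1,2\}$ when $1,2\in\mathcal{W}$.
\end{itemize}
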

For $\mathcal{W}=\{w:w\equiv 1\pmod{s}\}$, increasing $s$, the weight separation parameter, straightforwardly increases the output qubits to yield $[[N-s,s,2]]$, for every choice of $s$, where $N$ is the total number of gates in the borrowed-identity for an admissible $\mathcal{W}$ satisfying Eq.~\eqref{eq:master-thm}. For example, when $s\in\{1,2\}$,
Eq.~\eqref{eq:master-thm} admits closed-form solutions. Taking
$\mathcal{W}=\{1,2,\ldots,n\}$ ($s=1$) gives $n_{\min}=l+1$ with $N=2^{n_{\min}}-1$ yielding the
code parameters $[[2^{l+1}-2,1,2]]$; taking odd weights~\footnote{Even weights cannot yield $k=1$ factories} only ($s=2$) gives $n_{\min}=l+2$ while the total number of gates is given by $2^{n_{\min}-1}$. This circuit allows removing two weight-$1$ gates from the circuit to yield $[[2^{l+1}-2,2,2]]$, recovering the smallest Bravyi--Haah code $[[14,2,2]]$ for $l=3$ and the smallest H-code  $[[6,2,2]]$~\cite{jones2013multilevel} for $l=2$ (the $l=1$ member $[[2,2,2]]$ is degenerate; the $[[4,2,2]]$ Iceberg code~\cite{PhysRevA.54.4741} instead appears as the $l=2$, degree-$2$ two-group member, see Table~\ref{tab:canonical}).
The solution for $s=2$ can be extended to yield
$[[2^{l+1}-1,1,3]]$ if we only remove a single weight-1 gate, recovering the Steane
code $[[7,1,3]]$~\cite{steane1996multiple} at $l=2$ and the
$[[15,1,3]]$ distillation factory~\cite{bravyi2005universal} at $l=3$. Factories with $s\ge 3$ have no closed-form solution and require numerical search ($k\ge3$, $d=2$ fall here); increasing $s$ does not raise the distance, which saturates at $d=3$ (App.~\ref{app:dicke_proof}). The $s=2$ result thus gives distance-$3$ distillation at every Clifford level---the quantum Reed--Muller family~\cite{PhysRevA.54.1862}, here derived without reference to any codespace~\cite{gong2024computation,luo2020fault,kubica2015universal}. The more efficient entangled-output and higher-rate factories, however, require relaxing the full $S_n$ symmetry.

\paragraph{Two-group symmetric circuits---}
%\section{Two-group symmetric circuits}
\label{sec:asymmetric}
\begin{figure*}[!htbp]
\centering
\includegraphics[width=\textwidth]{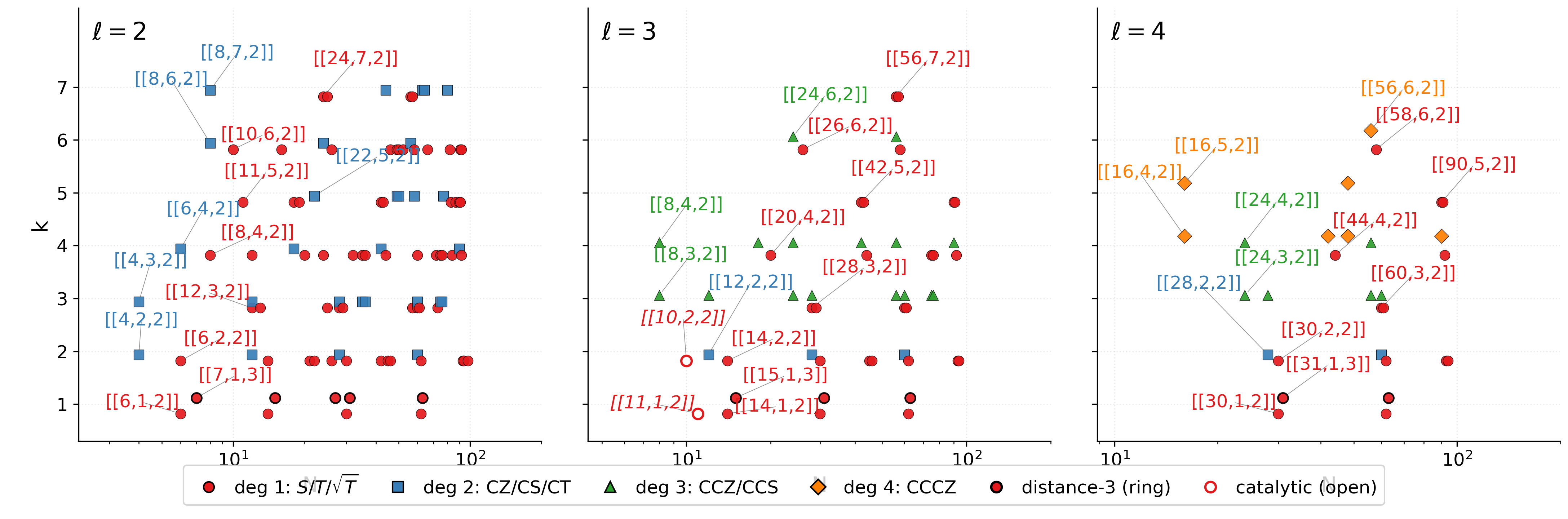}
\caption{\textbf{Distillation factories recovered by the two-group search.}
Each marker is a factory at $(N,k)$: $N$ magic states of type $Z^{\otimes w}(\pi/2^l)$
yielding a $k$-qubit output, at Clifford-hierarchy levels $l\in\{2,3,4\}$ (three
panels). Marker shape and color encode the output type as listed in the legend (see App.~\ref{app:output-extraction} for output classification). All markers are distance-$2$ except the
bold-ringed $k=1$ markers, which are distance-$3$. Filled markers are produced by
the two-group search over the swept range $k\le7$, $n\le11$,
$s_\mathrm{total},s_O\le7$ (App.~\ref{app:asymmetric_search}); open markers are the non-CSS factories, which the search does not produce but fall under the borrowed identity framework. In each column the top
panel shows $N\le100$ with the canonical (smallest-$N$) representative annotated
per degree and $k$; the full extent ($N>1000$) is shown in Fig.~\ref{fig:search-ext}, with parameter tuples for $N\le30$ factories listed in Tables~\ref{tab:asymmetric-l2}--\ref{tab:asymmetric-l4} for explicit circuit construction.}
\label{fig:search-results}
\end{figure*}

%% ===============================================================
%In the previous section, we focused on the case of symmetric borrowed-indentity circuits where all qubits were interchangeable. Now, we 
To simplify search on asymmetric circuits, we partition the $n$ qubits into $k$ \emph{output} qubits
($\mathcal{O}$) and $n - k$ \emph{check} qubits ($\mathcal{S}$),
keeping qubits within each group interchangeable
($S_k \times S_{n - k}$ symmetry). We denote the sets of
admissible gate weights on each group by $\mathcal{W}_O$ and
$\mathcal{W}_S$, and the set of admissible total weights by
$\mathcal{W}_\mathrm{total}(\equiv \mathcal{W})$. The three weight-separation
parameters $(s_\mathrm{total}, s_O, s_S)$ for these sets,
\begin{equation}
    \mathcal{W}_i=\{w:w\equiv 1\pmod{s_i}\}\quad i\in\{O,S,\mathrm{total}\},
\end{equation}
can be
chosen independently. A gate is labeled by the pair
$(w_O, w_S)$ with $w_O \in \mathcal{W}_O$,
$w_S \in \mathcal{W}_S$,
$w_O + w_S \in \mathcal{W}_\mathrm{total}$. Each gate type
$(w_O, w_S) \in \mathcal{W}$ carries a sign
$\sigma_{w_O, w_S} \in \{+1, -1\}$: gates of type
$(w_O, w_S)$ implement
$Z^{\otimes (w_O + w_S)}(\sigma_{w_O, w_S}\,\theta)$, i.e., the
rotation angle is $+\theta$ when $\sigma_{w_O, w_S} = +1$ and
$-\theta$ when $\sigma_{w_O, w_S} = -1$. The sign assignment is
part of the circuit specification.

By the $S_k \times S_{n-k}$ symmetry, the phase accumulated by
a basis state depends only on the Dicke-sector indices
$(i_O, i_S)$: the Hamming weights on the output and check
blocks. We denote this phase $\Phi(i_O, i_S)$.

\begin{lemma}[Asymmetric Dicke-sector phase]
\label{lem:dicke-asym}
Assuming the borrowed-identity condition is satisfied at all
$(i_O', i_S')$ with $i_O' + i_S' < i_O + i_S$, the phase at
Dicke sector $(i_O, i_S)$ is
\begin{align}
  \Phi(i_O, i_S) \equiv\ &(-1)^{i_O + i_S + 1}\, 2^{i_O + i_S}\,\theta
  \!\!\!\!\sum_{(w_O, w_S) \in \mathcal{W}}\!\!\!\!
  \sigma_{w_O, w_S}\notag\\
  &\times \binom{k - i_O}{w_O - i_O}\binom{n - k - i_S}{w_S - i_S}
  \pmod{2\pi}.
  \label{eq:phase-asym}
\end{align}
\end{lemma}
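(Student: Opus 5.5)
The plan is to compute the phase of a single computational basis state $\ket{\mathbf{x}}$ directly, expand it by inclusion--exclusion over subsets of $\mathrm{supp}(\mathbf{x})$, and then use the inductive hypothesis to remove every term except the top one.

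\emph{Step 1: phase of one basis state.} Fix $\ket{\mathbf{x}}$ in sector $(i_O,i_S)$ and write $X=\mathrm{supp}(\mathbf{x})$. By Eq.~\eqref{eq:gate_def}, a gate with support $T$ and sign $\sigma_T$ contributes $\sigma_T\theta\,(1-(-1)^{|X\cap T|})$. Hence the total phase is
\[
  \varphi(X)=\theta\sum_{T}\sigma_T\bigl(1-(-1)^{|X\cap T|}\bigr),
\]
where the sum runs over all gates.

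\emph{Step 2: subset expansion.} For $m=|X\cap T|$ I use the binomial identity $(-1)^m=(1-2)^m=\sum_{j\ge0}\binom{m}{j}(-2)^j$, which gives
\[
  1-(-1)^m=\sum_{j\ge1}\binom{m}{j}(-1)^{j+1}2^j .
\]
Here $\binom{m}{j}$ counts the $j$-subsets $A\subseteq X\cap T$. Exchanging the order of summation then gives
\[
  \varphi(X)=\sum_{\emptyset\neq A\subseteq X}\psi(A),\qquad
  \psi(A)=(-1)^{|A|+1}2^{|A|}\theta\sum_{T\supseteq A}\sigma_T .
\]
By the $S_k\times S_{n-k}$ symmetry, $\sum_{T\supseteq A}\sigma_T$ depends only on the output and check weights $(a_O,a_S)$ of $A$. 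The number of gates of type $(w_O,w_S)$ that contain $A$ is $\binom{k-a_O}{w_O-a_O}\binom{n-k-a_S}{w_S-a_S}$. Consequently $\psi(X)$ equals exactly the right-hand side of Eq.~\eqref{eq:phase-asym}.

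\emph{Step 3: induction via Möbius inversion.} The all-zero state $\ket{0\cdots0}$ acquires no phase, and it lies in the support of $\ket{+}^{\otimes n}$. So the borrowed-identity condition at a sector means that the state's phase is $\equiv 0 \pmod{2\pi}$. The hypothesis therefore gives $\varphi(B)\equiv0$ for every $B$ with $|B|<|X|$.

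Möbius inversion on the Boolean lattice gives $\psi(A)=\sum_{\emptyset\neq B\subseteq A}(-1)^{|A|-|B|}\varphi(B)$. The coefficients are integers, so this identity remains valid modulo $2\pi$. Therefore $\psi(A)\equiv0$ for every proper nonempty $A\subsetneq X$. Substituting into Step 2 leaves $\varphi(X)\equiv\psi(X)$, which is the claim.

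\emph{Main obstacle.} The step needing most care is the bookkeeping in Step 3. One must check that the hypothesis, stated per sector $(i_O',i_S')$ with $i_O'+i_S'<i_O+i_S$, covers every proper subset $B$: this holds because each such $B$ lies in exactly such a sector. One must also confirm that working modulo $2\pi$ commutes with the integer inversion. The global phase needs no separate argument, since the zero state fixes $\phi=0$. A simpler induction on $|X|$ gives the same result: by Step 2, $\psi(A)=\varphi(A)-\sum_{\emptyset\neq B\subsetneq A}\psi(B)$, and each $\psi(B)$ vanishes by the inductive hypothesis.
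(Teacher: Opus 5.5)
Your proof is correct, and its skeleton matches the paper's. Both write the phase at $(i_O,i_S)$ as $\sum_{(0,0)\neq(j_O,j_S)\le(i_O,i_S)}\binom{i_O}{j_O}\binom{i_S}{j_S}\,\psi(j_O,j_S)$, an integer combination of the closed-form expressions at lower sectors. Both then use the inductive hypothesis to kill every term except the top one.

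You differ in how you obtain that expansion. The paper works with sector counts:
\begin{itemize}
\item it starts from the parity-restricted overlap sum over $(m_O,m_S)$;
\item it iterates Pascal's recurrence $i_O-m_O$ and $i_S-m_S$ times on the two blocks;
\item it refactors with $\binom{i}{m}\binom{i-m}{j-m}=\binom{i}{j}\binom{j}{m}$;
\item it evaluates the two-block odd-extraction sum $2^{j_O+j_S-1}$.
\end{itemize}
You get the same identity for each basis state in one line, from $1-(-1)^m=\sum_{j\ge1}\binom{m}{j}(-1)^{j+1}2^j$ and an exchange of sums over subsets $A\subseteq X\cap T$. Grouping your subsets $A$ by type recovers the multiplicities $\binom{i_O}{j_O}\binom{i_S}{j_S}$.

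Your route buys two things:
\begin{itemize}
\item The $S_k\times S_{n-k}$ symmetry enters only at the end, when you count the gates containing $A$. So the decomposition holds for arbitrary parity-rotation circuits, including with $\sigma_T$ replaced by the integer coefficients of the extended formalism in App.~\ref{app:symfree}.
\item Since M\"obius inversion runs in both directions, it gives both directions of Theorem~\ref{thm:asymmetric-master} at once, in the form $2^{|A|}\sum_{T\supseteq A}\sigma_T\equiv 0 \pmod{2^{l+1}}$ for every nonempty $A$. This is the divisibility form one expects from triorthogonality (App.~\ref{app:automorphisms}).
\end{itemize}
The paper's route instead stays within the Dicke-sector bookkeeping that it reuses for Lemma~\ref{lem:dicke} and for the Newton expansion of App.~\ref{app:output-extraction}.

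One point you should make explicit. Your Step~2 sum excludes $A=\emptyset$, so your conclusion holds only for $X\neq\emptyset$, i.e.\ $i_O+i_S\ge 1$. At $(0,0)$ the closed form equals $-\theta$ times the signed gate count, which need not vanish modulo $2\pi$; for the four-qubit parent it is $-15\pi/8$. The lemma must therefore be read with $i_O+i_S\ge1$, which the paper's proof also assumes tacitly.
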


Proof by induction on $i_O + i_S$ via Pascal's recurrence on
both blocks; see App.~\ref{app:lemma-asym}.

\begin{theorem}[Asymmetric master condition]
\label{thm:asymmetric-master}
A two-group circuit with gate types $\mathcal{W}$ and signs
$\{\sigma_{w_O, w_S}\} \subset \{+1, -1\}$ is a borrowed
identity if and only if
\begin{equation}
  \!\!\!\sum_{(w_O, w_S) \in \mathcal{W}}\!\!\!\!
  \sigma_{w_O, w_S}\,
  \binom{k - i_O}{w_O - i_O}\!\binom{n - k - i_S}{w_S - i_S}
  \in 2^{l - i_O - i_S + 1}\mathbb{Z}
  \label{eq:asymmetric-master}
\end{equation}
for all $(i_O, i_S)$ with $i_O + i_S \le l$. The condition is vacuous for other values of $i_O+i_S>l$ as for these values, Eq.~\eqref{eq:phase-asym} is automatically an integer multiple of $2\pi$.  
\end{theorem}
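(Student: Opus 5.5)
The plan is to reduce the borrowed-identity condition to a statement about phases on computational basis states, and then apply Lemma~\ref{lem:dicke-asym} inductively. Since every gate is diagonal, $\mathcal{C}\ket{\mathbf{x}} = e^{i\varphi(\mathbf{x})}\ket{\mathbf{x}}$. Because $\ket{+}^{\otimes n}$ has nonzero amplitude on every $\ket{\mathbf{x}}$, the condition $\mathcal{C}\ket{+}^{\otimes n}=e^{i\phi}\ket{+}^{\otimes n}$ holds if and only if $\varphi(\mathbf{x})\equiv\phi$ for all $\mathbf{x}$. Taking $\mathbf{x}=0$, where no parity is odd, forces $\phi\equiv 0$. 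So the condition is that $\varphi(\mathbf{x})\equiv 0 \pmod{2\pi}$ for every $\mathbf{x}$. By the $S_k\times S_{n-k}$ symmetry, this is equivalent to $\Phi(i_O,i_S)\equiv 0$ for every Dicke sector $(i_O,i_S)$ with $0\le i_O\le k$ and $0\le i_S\le n-k$.

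I will induct on $m=i_O+i_S$. The base case $m=0$ holds trivially. For the inductive step, assume $\Phi\equiv 0$ on every sector with $i_O'+i_S'<m$. These are exactly the hypotheses of Lemma~\ref{lem:dicke-asym}, so $\Phi(i_O,i_S)$ is given by Eq.~\eqref{eq:phase-asym}. Write $\theta=\pi/2^l$ and let $A$ be the integer signed binomial sum in that equation. Then
\[
\Phi(i_O,i_S)\equiv \pm\,2^{m-l}\pi A \pmod{2\pi}.
\]
This vanishes if and only if $2^{m-l}A\in 2\mathbb{Z}$, that is, $A\in 2^{l-m+1}\mathbb{Z}$, which is exactly Eq.~\eqref{eq:asymmetric-master}. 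The sign $(-1)^{m+1}$ does not affect membership in an ideal.

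The two directions then follow. For "if", the condition at level $m$ together with the inductive hypothesis gives vanishing at level $m$, so the induction closes. For "only if", if $\mathcal{C}$ is a borrowed identity then all lower sectors vanish, the lemma applies at every level, and vanishing of $\Phi$ forces the divisibility condition. When $m>l$, the exponent $l-m+1\le 0$ makes $2^{l-m+1}\mathbb{Z}$ contain all integers. Equivalently, $2^{m-l}\pi A$ is then a multiple of $2\pi$, so those sectors are automatic, which justifies restricting to $i_O+i_S\le l$.

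The main obstacle is the conditional form of Lemma~\ref{lem:dicke-asym}: it only gives the closed form once all lower sectors have vanishing phase. So the argument must be organized as a joint induction in which the lemma's hypothesis is supplied by the condition already verified at lower levels. For "only if", that hypothesis comes from the borrowed-identity assumption itself. A second point to check is that the $m>l$ sectors still need the lemma's hypothesis to hold. It does, because the induction proceeds through all $m$ and the lower sectors are already known to vanish, whether by the condition for $m'\le l$ or automatically for $m'>l$.
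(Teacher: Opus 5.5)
Your proposal is correct and follows the paper's route. The paper obtains the theorem from Lemma~\ref{lem:dicke-asym} by the same induction on $i_O+i_S$: $2^{i_O+i_S}\theta A\in 2\pi\mathbb{Z}$ becomes $A\in 2^{l-i_O-i_S+1}\mathbb{Z}$, and the sectors with $i_O+i_S>l$ hold automatically. You also make explicit two points the paper leaves implicit, namely that $\phi\equiv 0$ is forced by the all-zeros basis state and that the lemma's conditional hypothesis is supplied inductively in both directions.
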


Equation~\eqref{eq:asymmetric-master} is \emph{linear in $\{\sigma_{w_O, w_S}\}$ over $\mathbb{Z}_{2^{l + 1}}$}, allowing the validity of any sign assignment to be verified in
$O(k(n - k))$ time per Dicke sector. The output magic state is extracted by removing all gates of
type $(w_O, 0)$ with
$w_O \in \mathcal{W}_O \cap \mathcal{W}_\mathrm{total}$. If the borrowed identity comprises
$N_\mathrm{circuit}$ gates, the resulting input magic-state
count is
\begin{equation}
  N = N_\mathrm{circuit}\;\;
  -\!\!\!\!\sum_{\substack{w \in \mathcal{W}_O
    \cap \mathcal{W}_\mathrm{total}}}\!\!\!\!
  \binom{k}{w}.
  \label{eq:N-asym}
\end{equation}
We present our search results, based on Eqs.~[\ref{eq:phase-asym}-\ref{eq:N-asym}], in Fig.~\ref{fig:search-results}, where the outputs are classified into Clifford equivalence classes~\cite{campbell2017unified} (see App.~\ref{app:output-extraction}): factories whose outputs differ by a Clifford circuit are grouped together~\footnote{e.g.\ $[[8, 4, 2]]$
at $l = 3$ is a Clifford-padded version of the cube
$[[8, 3, 2]]$; it is retained as a distinct CCZ-class factory, since the extra qubit yields additional correlated-error detection (Fig.~\ref{fig:five-six-qubit-circuit}(b))}. Details, including the deterministic sign assignment, are in App.~\ref{app:asymmetric_search}. We fix $s_S=1$ for which no search over signs was needed. Since every gate surviving the $(w_O,0)$ removal touches a check qubit, single errors are always caught, guaranteeing distance $2$; choosing $s_O=s_S$ reproduces the distance-$2$ and $3$ symmetric families. Some multi-output factory examples are worked out in
App.~\ref{app:examples}. Reaching arbitrary distance $d$ at $k>1$ requires breaking the check-qubit symmetry so that the smallest set of gates with cancelling check-supports has size $d$; we leave this to future work.

The full sweep---runtime, scaling, and per-level catalogue---is detailed in
App.~\ref{app:asymmetric_search}. By contrast, the Nezami--Haah classification~\cite{nezami2022classification}
enumerates affine equivalence classes of Reed--Muller polynomials, a space
growing exponentially in $N$ (${\sim}5\times10^{12}$ candidates at $N\le 38$)
and specialized to $l=3$. Within one search space we recover, at $k>1$, both unentangled $T$-output
factories and entangled $\CS$/$\CCZ$ factories, families that earlier searches
could reach only one at a time. Fig.~\ref{fig:search-results} shows the density of factories over $(N,k)$ for
$N\le 100$ of the $21,920$ factories recovered; Tables~\ref{tab:asymmetric-l2}--\ref{tab:asymmetric-l4} list $N\le 30$
representatives, and App.~\ref{app:asymmetric_search} details how the families of
Fig.~\ref{fig:search-results} (Bravyi--Haah, H-code/Iceberg, generalized
triorthogonal, and distance-$2$ Nezami--Haah)
are recovered. Notably, the search finds the $[[56,6,2]]$ and $[[12,2,2]]$ factories: a
$6$-qubit state Clifford-equivalent to two $\CCZ$ states, and a $2$-qubit
state Clifford-equivalent to one $\CS$ state, respectively. These match the
$\CCZ(8)\!\to\!\CCZ(2)$~\cite{chamberland2022building} and $\CS(4)\!\to\!\CS(1)$ factories derived from
algebraic-geometry (AG) codes~\cite{anqi2026qec}.

We now turn to factories unreachable by the two-group symmetric search at $l=3$,
to show how the framework can be extended to uncover all distance-$2$ distillation circuits. The Campbell--Howard synthillation
families~\cite{campbell2017unified}---the T-to-CCZ family $[[6m+2,3m,2]]$ ($[[14,6,2]]$,
\quirk{\qfourteensixtwo}{Quirk}) and our newly found T-to-CS family $[[6m+6,2m,2]]$
($[[18,4,2]]$, \quirk{\qeighteenfourtwo}{Quirk})---both break the two-group
symmetry: their gate sets are not symmetric over the outputs. The T-to-CCZ
family is realized with pure $\pi/8$ rotations; the T-to-CS family additionally
uses  multi-weight phase rotations of type ($i
\le l$) $c\pi/2^{i}$ rotations ($c\in \mathbb{Z}_{2^i}$)---the \emph{extended}
borrowed-identity formalism which is not restricted to $i=l$---to reach its optimal $T$-count. The two Quirk circuits linked above make clear that these factories require breaking the two-group symmetry constraint. As an existence proof, we built
a symmetry-free, targeted-output search that solves the extended borrowed-identity
condition directly for a chosen output at levels $l\in\{2,3,4\}$ (see App.~\ref{app:symfree}). In addition to the above families, it also finds factories with mixed outputs, like $[[18,5,2]]$ (\quirk{\qeighteenfivetwo}{Quirk}) which yields one CS and one CCZ (better than combining $[[8,3,2]]$ and $[[12,2,2]]$ factories), not found explicitly in the synthillation literature, so the borrowed-identity picture handles mixed-output
factories too. 

The ultimate goal for this formalism is a
\emph{tractable} search that captures the smallest factory at each distance for different types of magic states. This is motivated by the growing use of the $[[8,3,2]]$ factory---which has the smallest number of input qubits---in recent resource estimates of fault-tolerant quantum computing~\cite{zhou2025resource,gidney2025factor}. Note that, $[[8,3,2]]$ cannot be converted into a non-hybrid $\CCZ$-to-$\CCZ$ factory unlike the $[[56,6,2]]$ or $[[12,2,2]]$ factories. The payoff of searching across types is that factory size depends strongly on both the output and the input: distilling input $T$ states into $\CCZ$ and $\CS$ outputs can admit smaller factories than $T$ (with non-CSS constructions smaller still), while distilling from higher-level inputs such as $\CS$-to-$\CS$ or $\CCZ$-to-$\CCZ$ costs more than the corresponding hybrid factory. At distance $2$, our two-group search already covers the smallest factory of each output type, while the larger batched synthillation variants are reached by the symmetry-free search; higher distance would require breaking the symmetry (below Eq.~\eqref{eq:N-asym}).%The ultimate goal, beyond this existence proof, is a \emph{tractable} search that captures the smallest factory at each distance for different types of magic states. At distance-$2$ these factories are already covered by our two-group symmetric search. For higher-distance we would need to break the symmetry, as discussed below Eq.~\ref{eq:N-asym}.

%===============================================================

\paragraph{Malleable circuits---}
%\section{Malleable circuits}
\label{sec:malleability}
Restricting the previous construction to $|\mathcal{S}| = 1$, we can
analytically derive that $k = n-1$ yields
$[[2^{n} - 2^{n-1}, n-1, 2]] = [[2^{n-1}, n-1, 2]]$ factories for
$s_O=s_S=s_\mathrm{total} = 1$. This is equivalent to constructing a partial borrowed
identity on the output qubits $\mathcal{O}$ — with each gate additionally
conditioned on the single check qubit in $\mathcal{S}$ — leaving
the check qubit unentangled at the end. In this section we discuss the
effects of applying this procedure \emph{sequentially}: at each step $j$, the
unentangled check qubit from the previous step is eliminated, and a new qubit from
$\mathcal{O}_{j-1}$ is promoted to the single check qubit in
$\mathcal{S}_j$. After $j = n$ steps, the complete parent circuit reduces to the symmetric borrowed identity at level $l$ with $s=s_{\mathrm{total}}={1,2}$, i.e., a
solution of Theorem~\ref{thm:main}. Specifically, after sequence step
$j$, the resulting factory is $[[2^{n-s+1}-2^{n-j-s+1}, n-j, 2]]$, with the $k=1$ endpoint ($j=n-1$) at distance $3$ for $s=2$.

For $n_{\min} = l+1$, the minimum $n$ for an $s=1$ symmetric solution, the sequential construction reveals the output magic-state type directly. After step $j$, the factory $[[2^{l+1} - 2^{l+1-j}, l+1-j, 2]]$ has an output state whose unitary preparation requires the same number of $T$ gates as the number of gates removed from the $n_{\min}$ symmetric borrowed identity at $s=1$, namely $2^{l+1-j} - 1$. These factories are already recovered by the asymmetric search; we highlight them to illustrate \emph{malleability} (for example, at $l=2$ the chain $[[4,2,2]]\to[[6,1,2]]$ gives outputs $\ket{\CZ}$ then $\ket{S}$).

\begin{proposition}[Factory malleability]
\label{prop:malleable}
A single borrowed identity circuit on $n$ qubits encodes multiple distinct distillation factories. The output magic-state type is determined by which gates are removed, not by the circuit itself.
\end{proposition}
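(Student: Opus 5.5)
The plan is to prove the proposition by construction: one parent circuit, several removal choices, and a check that the resulting outputs are genuinely different. I take the parent to be the $s=1$ symmetric borrowed identity $\mathcal{C}$ on $n=n_{\min}=l+1$ qubits, with $\mathcal{W}=\{1,\dots,n\}$. By Theorem~\ref{thm:main} this circuit satisfies $\mathcal{C}\ket{+}^{\otimes n}=e^{i\phi}\ket{+}^{\otimes n}$.

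\textbf{Step 1: removal always gives a valid factory.} For any choice of output set $\mathcal{O}$ with $|\mathcal{O}|=k$, let $\mathcal{C}_\mathcal{O}$ be the gates supported entirely on $\mathcal{O}$. Write $\mathcal{C}=\mathcal{C}_\mathcal{O}\,\mathcal{C}'$; all gates are diagonal, so they commute and the order does not matter. Then
\[
\mathcal{C}'\ket{+}^{\otimes n}=e^{i\phi}\,\mathcal{C}_\mathcal{O}^{\dagger}\ket{+}^{\otimes k}\otimes\ket{+}^{\otimes(n-k)}.
\]
Postselecting the check qubits on $\ket{+}$ therefore leaves the output state $\mathcal{C}_\mathcal{O}^{\dagger}\ket{+}^{\otimes k}$. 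Up to the sign convention of the rotations, this is the resource state for the diagonal $\mathbb{C}_l$ gate $\mathcal{C}_\mathcal{O}$. Every remaining gate touches a check qubit, so by the argument after Definition~\ref{def:borrowed} each such choice is a distance-$2$ factory.

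\textbf{Step 2: the resulting factories are distinct.} The family of factories is indexed by $\mathcal{O}$, or along the sequential chain by $j$. The output is determined by the removed set: it is the product of all $Z^{\otimes w}(\theta)$ rotations on subsets of $\mathcal{O}$. Because parity rotations compose into phase polynomials, this product equals $\exp\bigl(i\theta\sum_{\emptyset\neq A\subseteq\mathcal{O}}(1-(-1)^{|x_A|})\bigr)$. Expanding in the monomial basis, this becomes $\sum_{A}c_A\prod_{a\in A}x_a$ with $c_A=2^{|A|}\theta\cdot(\pm\text{const})$. This is the same inclusion--exclusion used in Lemma~\ref{lem:dicke}.

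For $|\mathcal{O}|=1$ this gives a single-qubit $\ket{\theta}$-type state. For larger $|\mathcal{O}|$ the polynomial contains a top monomial of degree $k$ with coefficient $\propto 2^{k}\theta$, i.e.\ a $\mathrm{C}^{k-1}$-phase of angle $2^{k-1}\cdot 2\theta$. The top-degree term is invariant under Clifford-equivalence modulo lower levels, so it distinguishes the Clifford classes. For example, at $l=2$ the case $k=2$ yields $\ket{\CZ}$, while $k=1$ yields $\ket{S}$; this is the chain $[[4,2,2]]\to[[6,1,2]]$. Hence a single parent yields at least two inequivalent factories. This proves the first sentence of the proposition, and the computation shows the output depends only on the removed gate set, which proves the second.

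\textbf{Main obstacle.} The hard step is making the claim "distinct" rigorous: showing that two outputs are not Clifford-equivalent in general, rather than in a single worked example. My first attempt would be to use the degree of the phase polynomial modulo $2\pi/2^{l-1}$ as a Clifford invariant. The idea is that diagonal Cliffords only contribute terms of degree at most $2$ with coefficient $\pi/2$ multiples, and this obstruction is enough to separate outputs of different $k$. For the proposition as stated, though, exhibiting the explicit $l=2$ and $l=3$ instances ($T$, $\CS$, $\CCZ$ outputs from the $n=4$ parent) suffices.
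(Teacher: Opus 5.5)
Your argument is essentially the paper's. The all-weights parent on $n_{\min}=l+1$ qubits, with the gates supported on the output block removed, is exactly the sequential chain $[[2^{l+1}-2^{k},k,2]]$ (e.g.\ $[[8,3,2]]\to[[12,2,2]]\to[[14,1,2]]$ at $l=3$). Your Step~1 gives a more direct justification of the second sentence of the proposition than the paper's Dicke-sector route: all gates are diagonal and commute, so $\mathcal{C}'|+\rangle^{\otimes n}=e^{i\phi}\,\mathcal{C}_{\mathcal{O}}^{\dagger}|+\rangle^{\otimes k}\otimes|+\rangle^{\otimes(n-k)}$ exactly.

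Two points in Step~2 need correcting, though neither breaks the proposition.

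First, the coefficient formula. The ``constant'' multiplying $2^{|A|}\theta$ is the number $2^{k-|A|}$ of supersets $A\subseteq A'\subseteq\mathcal{O}$, so every monomial has $c_A=(-1)^{|A|+1}2^{k}\theta$. More directly, for $x\neq 0$ exactly $2^{k-1}$ nonempty subsets of $\mathcal{O}$ meet $\mathrm{supp}(x)$ in an odd number of qubits. Hence $\mathcal{C}_{\mathcal{O}}$ applies the phase $2^{k}\theta$ to every $x\neq0$ and $0$ to $x=0$. Up to a global phase, the output is $X^{\otimes k}$ times the resource state of the $(k-1)$-controlled phase gate with angle $2^{k}\theta=\pi/2^{l-k}$. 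This settles the identification with no ``modulo lower levels'' caveat. It gives $T$, $\CS$, $\CCZ$ at $l=3$ and $\sqrt{T}$, CT, CCS, $\CkZ{3}$ at $l=4$, matching the paper's tables. The correction matters: if the lower monomials really scaled like $2^{|A|}\theta$, they would all sit at level $l$. For example, the $\CS$ output at $l=3$ would carry residual $T$ gates, and the top-degree term alone would not fix the class.

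Second, the inequivalence claim. Your $l=2$ example cannot certify Clifford-inequivalence, because $\CZ$ and $S$ are Clifford and both outputs are stabilizer states. Your proposed invariant also tracks only diagonal Cliffords, whereas CNOTs and Hadamards are allowed too. You do not need it. The factories are distinct because their $(N,k)$ differ and they teleport different gates. That is all the paper itself establishes: it rests on explicit families and the degree classification of App.~\ref{app:output-extraction}, not on an inequivalence proof.
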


 We illustrate this with two concrete families for $l=3$, for which we have provided annotated circuits in App.~\ref{app:annotated_quirk_circuits}, in addition to the active Quirk~\cite{quirk} links throughout the main text.
\paragraph*{Four-qubit malleable family.} 
For $l=3$, the closed-form solution for symmetric circuits at $s=1$ yields $n_{\min}=4$, and thus the output at step $j$ requires $2^{4-j}-1\ T$ states for unitary preparation. The single four-qubit borrowed identity (\quirk{\qparentfour}{Quirk})
yields five distinct factories spanning all three magic states T, CS and CCZ: 
\begin{itemize}
\item Remove all weight-$\le 3$ gates on the output block:
  $[[8, 3, 2]]$ with output
  $\ket{\CCZ}\equiv 7\ T$ states~(\quirk{\qaaazthreetwo}{Quirk}).
\item Sequential step $j = 2$ on the same parent: $[[12, 2, 2]]$
  with output $\ket{\CS}\equiv 3\ T$ states~(\quirk{\qtwelvetwotwo}{Quirk}). 
\item Sequential step $j = 3$: $[[14, 1, 2]]$ with output
  $\ket{T}$~(\quirk{\qparentfour}{Quirk}). 
\item Apply $\ket{\CCZ} \to \ket{T}$ catalytic conversion after step $j=1$, retaining the check qubit (performing check on the output qubits): $[[11, 1, 2]]$ yields
  output $\ket{T}$~(\quirk{\qelevenoneone}{Quirk}). 
\item Same catalytic conversion but now retaining two output qubits, and measure one output qubit to disentangle the magic states: $[[10, 2, 2]]$
  with two $\ket{T}$ outputs~(\quirk{\qtentwotwo}{Quirk}). 
\end{itemize}
Of these, $[[8,3,2]]$~\cite{bombin2007exact,jones2013composite,eastin2013distilling} and $[[10,2,2]]$~\cite{meier2012magic} are known; $[[11,1,2]]$ is introduced here. The framework recovers all five as endpoints of one parent. Moreover, the $[[10,2,2]]$ factory we present has a failure probability of $9p$ and not $10p$, given the error probability of a single $T$ state is $p$ (see Fig.~\ref{fig:four-qubit-circuit}). The two catalytic factories use Clifford gates in addition to the components of a borrowed identity circuit---hence non-CSS---and lower the input $T$-count: the catalytic conversion uses entangling gates already contained in the extended borrowed identity, while the Hadamard gates are a new addition.

\paragraph*{Five-qubit malleable family.}
Note that symmetric circuits with $k>1$ cannot have distance $d>2$ even for the $s=2$ case, yielding $[[2^{l+1}-2^{l+1-j},l+2-j,2]]$ factories (since $n_{\min}=l+2$ for $s=2$). The five-qubit parent identity (\quirk{\qparentfive}{Quirk}) extends construction to higher $k$ and to distance-$3$ endpoints:
\begin{itemize}
\item $j = 1$: $[[8, 4, 2]]$ with output
  $\equiv 8\ T$ states, that is Clifford equivalent to the CCZ state~(\quirk{\qaaazfourtwo}{Quirk}).
\item $j = 2$: $[[12, 3, 2]]$ with output
  $\equiv 4\ T$ states, that is Clifford equivalent to the CCZ state~(\quirk{\qtwelvethreetwo}{Quirk}). 
\item $j = 3$: $[[14, 2, 2]]$ with two output
  $\ket{T}$~(\quirk{\qfourteentwotwo}{Quirk}).
\item $j = 4$: $[[15, 1, 3]]$ with a single output
  $\ket{T}$ but at increased distance, as distance-3 is retained for $k=1$~(\quirk{\qsummaryfifteen}{Quirk}).
\end{itemize}
This five-qubit circuit shows that preparing a state unitarily equivalent to higher T count need not raise the output $T$-count of the factory itself: $[[8,4,2]]$ ($s{=}3,n{=}5,k{=}4$) has the same catalytic $T$-count as $[[8,3,2]]$ and an output in the same Clifford equivalence class, but its extra qubits give redundant detection of correlated errors that escape the four-qubit realization (App.~\ref{app:annotated_quirk_circuits})---quantifying when this is worth the qubit cost is left open.

At $l=4$ the $n=5$ parent gives $[[16,4,2]]\to[[24,3,2]]\to[[28,2,2]]\to[[30,1,2]]$ (CCCZ, CCS, CT, $\sqrt{T}$) and the $[[31,1,3]]$ endpoint at $s=2$. Based on this construction, we formalized a sequential search which yielded no factories beyond the asymmetric one; the formalism and search results with worked examples of four- and five-qubit parent circuit examples can be found in App.~\ref{app:sequential}. The malleability it makes explicit lets the output magic-state type within a level $l$ be chosen at compile time rather than fixed at hardware-design time---a direct consequence for fault-tolerant compilation.

\vspace{10pt}
\paragraph*{Discussion---}
%\section{Discussion}
We have presented a unified Schr\"odinger-picture framework for magic-state
distillation. A single borrowed-identity condition, holding at every Clifford
level, reaches three classes of factory within one search space
(Fig.~\ref{fig:search-results})---unentangled $T$-output, entangled-output, and non-CSS
catalytic factories---and recovers, in a single search, every distance-$2$
factory known from code constructions. The symmetric case is closed-form, giving
the Reed--Muller chain $[[2^{l+1}-1,1,3]]$ at distance $3$ for every $l$. Its
structural payoff is \emph{malleability}: one parent circuit encodes many
factories, so the output magic-state type is chosen during compilation by gate
removal rather than fixed at hardware design---with no analogue in
Heisenberg-picture frameworks
(Figs.~\ref{fig:four-qubit-circuit}--\ref{fig:five-six-qubit-circuit}).

Three open questions remain toward a most general yet tractable search. Foremost is the smallest \emph{multi-output} ($k>1$) factory at each Clifford level and distance $d>2$: malleability leads us to conjecture that the hybrid $T$-to-any-output factories are smallest for $\CCZ$ (removing $7$ magic states from the identity), largest for $T$-output, and intermediate for $\CS$ ($4$ removed). This is compelling because the $d=2$ analogues $[[8,3,2]]$ and $[[12,2,2]]$ are the smallest CSS-code factories known. At higher-distances ($d$), the borrowed-identity search requires breaking the check-qubit symmetry so that the smallest gate set with cancelling check-supports has size $d$ while retaining more than one output; whether an exhaustive borrowed-identity search can do so without sacrificing its linear, level-uniform structure is open.
Adding \emph{Hadamard gates} to the extended borrowed identity formalism, a single non-diagonal primitive, is equally important since it reaches non-CSS factories such as $[[10,2,2]]$~\cite{meier2012magic}
and the synthillation circuits through a systematic search, at higher rate and
distance, not possible in any search known yet. Finally, the close correspondence between purely non-Clifford borrowed
identities and code-based constructions hints at an \emph{automorphism} between them, a map that would place many known factories within our framework. Together
these directions point toward a single algebraic foundation for distillation across all
Clifford levels and output types, and a sharper boundary between the Schr\"odinger- and Heisenberg-picture approaches.
\paragraph*{Code and Data Availability---}
All circuits are available via the Quirk~\cite{quirk} links in the text and
appendices; the search code and complete output ($N\ge1000$) are at
Ref.~\cite{singh_shraggymsd_simulations_2026}.
\paragraph*{Acknowledgements---} SS would like to thank the Google student research 2024 program, and acknowledges discussions with Anqi Gong and Shubham Jain. While preparing this manuscript, we became aware of concurrent and independent work~\cite{jacinto2026exploring} studying compact distillation factories via a SAT-based search, establishing no-go theorems and single-output ($k=1$) protocols at distance $\geq 3$; single-output factories at higher distance are likewise within reach of the borrowed-identity framework through SAT search but the focus of our work was instead a fast exhaustive search on the multi-output ($k>1$) regime and the malleability between factories distilling different magic states at the same Clifford level.

\bibliographystyle{apsrev4-2}
\bibliography{distill}

%% ===============================================================
%%  Appendices
%% ===============================================================
\appendix

\section{Schr\"odinger vs.\ Heisenberg picture: specialization
to triorthogonality}
\label{app:automorphisms}
%% -----------------------------------------------------------
\begin{proposition}
\label{prop:schrodinger-heisenberg}
Every transversal-$T$ distillation circuit on a CSS code is, under
spacetime duality, a borrowed identity. The converse fails: the
catalytic factories $[[10, 2, 2]]$ and $[[11, 1, 2]]$ constructed
in Sec.~\ref{sec:malleability} are borrowed identities whose
construction includes Clifford+$T$ post-processing and therefore
does not arise as the transversal $T$ of any CSS code.
\end{proposition}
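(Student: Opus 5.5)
The plan has two parts: the forward direction (every CSS transversal-$T$ factory is a borrowed identity) and the failure of the converse.

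\textbf{Forward direction.} Let the CSS code have $N$ physical qubits, $X$-stabilizer generators given by the rows of an $r\times N$ binary matrix $G_X$, and $k$ logical $X$ operators given by the rows of $L_X$. Stack these into the $(r+k)\times N$ matrix $M=\begin{pmatrix}L_X\\ G_X\end{pmatrix}$ and set $n=r+k$. Write a codeword as $\sum_{y\in\mathbb{F}_2^{n}}\alpha_y\ket{y^{T}M}$, with the first $k$ coordinates of $y$ logical and the rest stabilizer labels. Transversal $T^{\otimes N}$ multiplies the term indexed by $y$ by
\[
e^{i2\theta\,|y^{T}M|}=\prod_{c=1}^{N}e^{i2\theta\,[\,y\cdot M_{\cdot c}\,]},
\]
where $M_{\cdot c}$ is column $c$ of $M$. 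The factor for column $c$ is exactly the phase of the gate $Z^{\otimes w}(\theta)$ from Eq.~\eqref{eq:gate_def}, acting on the support of $M_{\cdot c}$ inside the $n$-qubit register of coefficients $y$.

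This is the spacetime rotation described in the main text: physical qubits become gates and stabilizer and logical generators become circuit qubits. Under it, transversal $T$ becomes the circuit $\mathcal{C}=\prod_c Z^{\otimes|M_{\cdot c}|}(\theta)$ on $n$ qubits.

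Now impose the distillation (generalized triorthogonality) hypothesis. It says the phase on $y$ equals $e^{i\phi}$ times a diagonal level-$l$ phase $f(y_O)$ depending only on the logical block $y_O$. Apply $\mathcal{C}$ to $\ket{+}^{\otimes n}$ and compensate $f$ by appending the gates on the $k$ output qubits that implement $f^{-1}$. By the closure of the diagonal subgroup of $\mathbb{C}_l$~\cite{cui2017diagonal}, $f^{-1}$ can be written with parity phase gates at angle $\pm\theta$; sign freedom is allowed as in the two-group construction. The resulting circuit satisfies $\mathcal{C}'\ket{+}^{\otimes n}=e^{i\phi}\ket{+}^{\otimes n}$, so it is a borrowed identity in the sense of Definition~\ref{def:borrowed}.

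Removing exactly those output-only gates gives back the original factory, with $N$ inputs and the $n-k$ check qubits measured in $X$. It remains to check that the decomposition of $f^{-1}$ uses only parity rotations at angle $\theta$. This is the main obstacle. I would handle it by expanding $f$ in the phase-polynomial basis of odd-parity indicators, using that any function $\mathbb{F}_2^k\to\mathbb{Z}_{2^{l+1}}$ arising from a level-$l$ diagonal gate has such an expansion with integer coefficients. A coefficient $c$ is then realized as $|c|$ copies of the gate with sign $\mathrm{sgn}(c)$.

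\textbf{Failure of the converse.} It suffices to exhibit one borrowed identity not of this form. Take $[[10,2,2]]$ and $[[11,1,2]]$ from Sec.~\ref{sec:malleability}. Both are obtained from the $[[8,3,2]]$ step by catalytic $\ket{\CCZ}\to\ket{T}$ conversion, which inserts Hadamard gates and conditional Cliffords acting on the output register.

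First observation: any protocol coming from transversal $T$ on a CSS code is, after the duality, a circuit of diagonal gates on $\ket{+}^{\otimes n}$ followed by $X$ measurements. Hadamards do not commute with the diagonal gates, so they cannot be absorbed into that form.

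Second, more robust observation: compare the $T$-counts. A transversal-$T$ CSS factory with $k$ logical $T$ outputs and distance $2$ coincides with the two-group or symmetric forms. Corrected for the catalyst, $[[10,2,2]]$ with failure rate $9p$ (and likewise $[[11,1,2]]$) is not produced by the exhaustive CSS search of Theorem~\ref{thm:asymmetric-master} at these $(N,k)$. Alternatively, its post-processing measurement is non-diagonal, so no $X$-stabilizer CSS structure reproduces it.

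I would present the Hadamard/non-diagonal argument as the proof and cite the search as corroboration.
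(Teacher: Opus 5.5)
Your forward direction is the paper's own argument. Physical qubits become parity rotations on the supports of the columns of $M$, and stabilizer/logical generators become circuit qubits. Your main converse argument, that the catalytic circuits contain Hadamards and non-diagonal post-processing, is the only justification the paper gives. You are more careful than App.~\ref{app:automorphisms}. That appendix equates the $+1$ stabilizer outcomes directly with $\mathcal{C}|+\rangle^{\otimes n}=e^{i\phi}|+\rangle^{\otimes n}$, yet the $N$ code gates alone leave the $k$ outputs in the magic state. Appending output-only gates, which the factory construction later removes, is exactly the missing step.

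Three corrections.

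\textbf{The ``main obstacle'' is not one.} It needs no characterization of diagonal $\mathbb{C}_l$ gates, and closure of the diagonal subgroup alone would not supply one. The phase is independent of $y_S$, so $f(y_O)$ is, up to the constant, the phase at $(y_O,0)$, namely $\prod_c e^{i2\theta[y_O\cdot L_{\cdot c}]}$. Hence $f^{-1}$ is literally the product of $Z^{\otimes|L_{\cdot c}|}(-\theta)$ on the output supports of the columns of $L_X$. Since $Z^{\otimes w}(-\theta)=Z^{\otimes w}(\theta)^{2^l-1}$, you do not even need signs. Your phase-polynomial route is valid: diagonal level-$l$ phases have degree-$d$ coefficients divisible by $2^{d-1}$~\cite{cui2017diagonal}, which is exactly the $\mathbb{Z}$-span of parity indicators. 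But it imports an external result the direct argument does not need.

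\textbf{Drop the ``more robust observation''.} The two-group search is a restricted ansatz ($S_k\times S_{n-k}$ symmetry, $s_S=1$, two sign cells), not an exhaustive CSS search. The paper itself places pure-$\pi/8$ factories outside it, e.g.\ the Campbell--Howard $T$-to-$\CCZ$ family. So nothing guarantees that every CSS code at $(N,k)=(10,2)$ or $(11,1)$ would have appeared, and its absence there corroborates nothing.

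\textbf{Be explicit about what the converse means.} The Hadamard argument, here as in the paper, only shows that this particular construction is not of diagonal-gates-then-$X$-measurement form. It does not show that no CSS code yields a factory with the same parameters. Moreover, your duality run backwards turns any pure parity-gate borrowed identity with independent rows into a CSS code on which the transversal level-$l$ phase gate acts as a logical diagonal gate. So the catalytic circuits are not borrowed identities in the strict sense of Definition~\ref{def:borrowed}, and the converse can only fail for the Clifford-augmented framework. Say that rather than calling them borrowed identities outright.
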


The strict containment is the source of the framework's reach. The specialization of the borrowed-identity condition to
triorthogonality (transversal $T$ at $l = 3$) and
quasitransversality~\cite{campbell2017unified} via the duality is
detailed in this Appendix.
\paragraph{Heisenberg picture.}
For a CSS code with $X$-stabilizer group generated by
$\{X^{\otimes b_1}, \ldots, X^{\otimes b_n}\}$ on $N$ physical
qubits, the action of $T^{\otimes N}$ on each generator follows
from $T X T^\dagger = (X + Y)/\sqrt{2}$:
\begin{equation}
  T X^{\otimes b}T^\dagger
  = X^{\otimes b} \cdot O_b,
\end{equation}
with $O_b$ a $Z$-polynomial on the support of $b$. For
$T^{\otimes N}$ to implement a logical $T$ gate, each $O_b$
must act as
\begin{equation}
  O_b = \frac{I + i\sqrt{2}\,Z_L}{2^{|b|/2}},
\end{equation}
pinning $O_b$ on \emph{both} $\ket{\pm}_L$. This is the
triorthogonality condition of
Ref.~\cite{bravyi2012magic}; the matrix search of
Ref.~\cite{nezami2022classification} enumerates solutions.

\paragraph{Spacetime dual.}
Under spacetime rotation, the $N$ physical qubits become
$N$ multi-qubit $Z$-rotations, the $n-k$ $X$-stabilizer checks plus the $k$-qubit output
become $n$ qubits prepared in $\ket{+}$, and the transversal
$T^{\otimes N}$ becomes the application of $N$ rotations
$Z^{\otimes b_i}(\pi/4)$ in time. The +1 outcome of all $X$-stabilizer
measurements in the original picture correspond, in the dual,
to the borrowed-identity condition $\mathcal{C}\ket{+}^{\otimes
n} = e^{i\phi}\ket{+}^{\otimes n}$.

%% -----------------------------------------------------------
%% -----------------------------------------------------------
\section{Proof of Lemma~\ref{lem:dicke} (Dicke-sector phase
accumulation) and distance guarantees}
\label{app:dicke_proof}
%% -----------------------------------------------------------

We prove Eq.~\eqref{eq:dicke_phase} by induction on $i$. Direct
counting of how many gates flip a given basis element of the
Dicke sector $\ket{\binom{n}{i}}$ gives the starting form
\begin{equation}
  \Phi(i)=2\theta\!\!\!\sum_{\substack{m\in 2\mathbb{Z}+1\\m\le i}}
  \binom{i}{m}\sum_{\substack{w\in\mathcal{W}\\w\ge m}}
  \binom{n-i}{w-m},
  \label{eq:dicke_phase1}
\end{equation}
where $m$ counts the number of designated qubits in the support of a weight-$w$ gate that overlap with the $i$ flipped qubits; the parity of $m$ determines whether the gate contributes
$2\theta$ or $0$ phase.

\paragraph{Base case $i=1$.}
The only odd $m\le 1$ is $m=1$, so
\begin{equation}
\Phi(1)
=2\theta\sum_{w\in\mathcal{W}}\binom{n-1}{w-1}
=2^{1}\theta\sum_{w\in\mathcal{W}}\binom{n-1}{w-1},
\end{equation}
matching Eq.~\eqref{eq:dicke_phase} at $i=1$.

\paragraph{Inductive step.}
Assume $\Phi(j)\equiv 0\pmod{2\pi}$ for all $j<i$. Apply
Pascal's recurrence
$\binom{n-i}{w-m}=\binom{n-i+1}{w-m}-\binom{n-i}{w-m-1}$
exactly $i-m$ times, raising the bottom binomial index from
$w-m$ to $w-i$:
\begin{equation}
  \binom{n-i}{w-m}=\sum_{r=0}^{i-m}(-1)^{r}\binom{i-m}{r}
  \binom{n-m-r}{w-m-r}.
\end{equation}
Substituting into Eq.~\eqref{eq:dicke_phase1}, setting $j=m+r$,
and reordering the double sum gives
\begin{align}
  \Phi(i)
  &=2\theta\sum_{j=1}^{i}\!\!
  \sum_{\substack{m\text{ odd}\\m\le j}}\!\!
  (-1)^{j-m}\binom{i}{m}\binom{i-m}{j-m}\notag\\
  &\hspace{4em}\times\sum_{\substack{w\in\mathcal{W}\\w\ge j}}
  \binom{n-j}{w-j}.
\end{align}
The identity
$\binom{i}{m}\binom{i-m}{j-m}=\binom{i}{j}\binom{j}{m}$
collects the inner-sum coefficient at each $j$ as
\begin{equation}
  C(i,j)=2\theta\binom{i}{j}\!\!
  \sum_{\substack{m\text{ odd}\\m\le j}}
  (-1)^{j-m}\binom{j}{m}.
\end{equation}
At $j=i$, the standard odd-extraction identity
$\sum_{m\text{ odd}}\binom{j}{m}=2^{j-1}$ gives
$C(i,i)=2^{i}\theta$. For $j<i$, $C(i,j)$ is proportional to
$\Phi(j)$ with integer coefficient $\binom{i}{j}$; by the
inductive hypothesis these terms vanish modulo $2\pi$. Only the
$j=i$ term survives, yielding Eq.~\eqref{eq:dicke_phase}.\;\qed

\paragraph{Distance saturation.} For $s\ge3$ the distance saturates at $d=3$: two equal-weight gates always overlap oddly on two qubits, and adjoining a weight-$1$ gate on one of them leaves a single-qubit odd overlap---if the output lives there, a combination of three gate errors goes undetected.

%% -----------------------------------------------------------
\section{Proof of Lemma~\ref{lem:dicke-asym} (Asymmetric
Dicke-sector phase)}
\label{app:lemma-asym}
%% -----------------------------------------------------------

We prove Eq.~\eqref{eq:phase-asym} by induction on
$i_O + i_S$, extending Lemma~\ref{lem:dicke} from one Dicke index to two. The argument follows the same three-step structure as App.~\ref{app:dicke_proof} --- Pascal's recurrence to raise binomial indices, the identity
$\binom{i}{m}\binom{i - m}{j - m} = \binom{i}{j}\binom{j}{m}$
to refactor the double sum, and an odd-extraction identity to evaluate the parity-restricted inner sum --- now applied
block-by-block, with the two blocks connected only by the
parity constraint on the total overlap $m_O + m_S$.

Per the definition of weight-$w$ rotation gate in Eq.~\eqref{eq:gate_def}, a weight-$(w_O, w_S)$
gate with sign $\sigma_{w_O, w_S}$ contributes phase
$2\sigma_{w_O, w_S}\theta$ to a basis element of Dicke sector
$(i_O, i_S)$ when its overlap $m_O$ with the $i_O$ flipped
output qubits and overlap $m_S$ with the $i_S$ flipped check
qubits satisfies $m_O + m_S$ odd, and phase $0$ otherwise. The
number of weight-$(w_O, w_S)$ gates with overlap $(m_O, m_S)$
is
$\binom{i_O}{m_O}\binom{k - i_O}{w_O - m_O}
\binom{i_S}{m_S}\binom{n - k - i_S}{w_S - m_S}$.
Summing over all gate types and all odd-parity overlaps gives
the starting form
\begin{align}
  \Phi(i_O, i_S) = 2\theta\!\!\!\!\!
  &\sum_{(w_O, w_S) \in \mathcal{W}}\!\!\!\!
  \sigma_{w_O, w_S}
  \sum_{\substack{m_O + m_S \text{ odd}\\
                  m_O \le i_O,\ m_S \le i_S}}
  \binom{i_O}{m_O}\binom{i_S}{m_S}\notag\\
  &\quad \times \binom{k - i_O}{w_O - m_O}
  \binom{n - k - i_S}{w_S - m_S}.
  \label{eq:phase-asym-direct}
\end{align}

\paragraph{Base cases.} The $(i_O, i_S)=(0,0)$ case is trivial, so we verify Eq.~\eqref{eq:phase-asym} at the two smallest
sectors of $i_O+i_S=1$: $(i_O, i_S) = (1, 0)$ and $(0, 1)$.

At $(1, 0)$: the constraint ``$m_O + m_S$ odd with
$m_O \le 1, m_S \le 0$'' forces $m_O = 1, m_S = 0$, so
Eq.~\eqref{eq:phase-asym-direct} becomes
\begin{equation}
  \Phi(1, 0) = 2\theta\!\!\!\!\!
  \sum_{(w_O, w_S) \in \mathcal{W}}\!\!\!\!
  \sigma_{w_O, w_S}\,
  \binom{1}{1}\binom{0}{0}
  \binom{k - 1}{w_O - 1}\binom{n - k}{w_S},
\end{equation}
which simplifies to
$2\theta \sum \sigma_{w_O, w_S}\binom{k - 1}{w_O - 1}
\binom{n - k}{w_S}$ --- matching the
prefactor $(-1)^{1+0+1} 2^{1+0}\theta = 2\theta$ of
Eq.~\eqref{eq:phase-asym} at $(1, 0)$ with binomial factors
$\binom{k - 1}{w_O - 1}\binom{n - k}{w_S}$. By symmetry, the
analogous identity holds at $(0, 1)$.

\paragraph{Inductive step.}
Assume Eq.~\eqref{eq:phase-asym} holds, and equals zero modulo
$2\pi$ by the borrowed-identity hypothesis, for all
$(i_O', i_S')$ with $i_O' + i_S' < i_O + i_S$. We show
Eq.~\eqref{eq:phase-asym} also holds at $(i_O, i_S)$.

The strategy parallels App.~\ref{app:dicke_proof}: use Pascal's
recurrence on both blocks to reorganize
Eq.~\eqref{eq:phase-asym-direct} into a sum of contributions
indexed by sectors $(j_O, j_S) \le (i_O, i_S)$. Iterating the
recurrence
$\binom{k - i_O}{w_O - m_O} = \binom{k - i_O + 1}{w_O - m_O} -
\binom{k - i_O}{w_O - m_O - 1}$ exactly $i_O - m_O$ times on
the output block, and analogously $i_S - m_S$ times on the
check block, gives
\begin{align}
  \binom{k - i_O}{w_O - m_O} &=
  \sum_{j_O = m_O}^{i_O} (-1)^{j_O - m_O}
  \binom{i_O - m_O}{j_O - m_O}\binom{k - j_O}{w_O - j_O},\\
  \binom{n - k - i_S}{w_S - m_S} &=
  \sum_{j_S = m_S}^{i_S} (-1)^{j_S - m_S}
  \binom{i_S - m_S}{j_S - m_S}\binom{n - k - j_S}{w_S - j_S}.
\end{align}

Substituting into Eq.~\eqref{eq:phase-asym-direct} and applying
the binomial identity $\binom{i_O}{m_O}\binom{i_O - m_O}{j_O -
m_O} = \binom{i_O}{j_O}\binom{j_O}{m_O}$ (and analogously for
the check block), the contribution at fixed $(j_O, j_S)$
factorizes as a product of three pieces: a sign and prefactor $(-1)^{j_O + j_S + 1}\, 2\theta\,\binom{i_O}{j_O}\binom{i_S}{j_S}$
[the overall $-1$ from $(-1)^{m_O + m_S} = -1$ on every
odd-parity term, which can be pulled out of the $(m_O, m_S)$
sum], a combinatorial weight
\begin{equation}
\sum_{\substack{m_O + m_S \text{ odd}\\m_O \le j_O,\ m_S \le j_S}}
\binom{j_O}{m_O}\binom{j_S}{m_S}
=
\begin{cases}
2^{j_O + j_S - 1} & j_O + j_S \ge 1,\\
0 & j_O + j_S = 0,
\end{cases}
\label{eq:two-var-odd}
\end{equation}
[obtained by splitting odd parity into ``$m_O$ even, $m_S$
odd'' and ``$m_O$ odd, $m_S$ even'' and applying
$\sum_{m\text{ even}}\binom{j}{m} = \sum_{m\text{ odd}}
\binom{j}{m} = 2^{j-1}$ for $j \ge 1$], and a gate-type sum
$\sum_{(w_O, w_S)\in\mathcal{W}} \sigma_{w_O, w_S}
\binom{k - j_O}{w_O - j_O}\binom{n - k - j_S}{w_S - j_S}$.

The full expression at $(j_O, j_S)$ therefore reduces to
$\binom{i_O}{j_O}\binom{i_S}{j_S}$ times the closed-form right-hand
side of Eq.~\eqref{eq:phase-asym} evaluated at $(j_O, j_S)$ ---
i.e., to $\binom{i_O}{j_O}\binom{i_S}{j_S} \cdot \Phi(j_O, j_S)$.
For $(j_O, j_S) \ne (i_O, i_S)$ with $j_O + j_S \ge 1$, this is an
integer multiple of $\Phi(j_O, j_S)$, which vanishes modulo
$2\pi$ by the inductive hypothesis. The $(0, 0)$ contribution
vanishes from Eq.~\eqref{eq:two-var-odd}. Only the top-sector
contribution at $(j_O, j_S) = (i_O, i_S)$ survives, yielding
Eq.~\eqref{eq:phase-asym}. \qed
%--------------------------------------
\section{Output-state classification via residual phase}
\label{app:output-extraction}
Setting $i_S=0$ in Lemma~\ref{lem:dicke-asym}, the residual phase
$\Phi'(i_O,0)$ depends only on the Hamming weight $i_O=|x|$ of the output
qubits, so the output state is
\begin{equation}
  |\Psi_{\mathrm{out}}\rangle
  = \frac{1}{\sqrt{2^k}}
    \sum_{x\in\{0,1\}^k} e^{i\Phi'(|x|,\,0)}\,|x\rangle,
\end{equation}
with $\Phi'(0,0)=0$.  Define the $d$-th iterated finite difference of
$\Phi'(\cdot,0)/\theta$ at $0$ as
\begin{equation}
  \Delta^d\!\left[\frac{\Phi'(\cdot,0)}{\theta}\right](0)
  \;:=\;
  \sum_{j=0}^{d}(-1)^{d-j}\binom{d}{j}\frac{\Phi'(j,0)}{\theta}.
  \label{eq:finite-diff}
\end{equation}

\begin{proposition}
$|\Psi_{\mathrm{out}}\rangle$ contains, as its dominant entangling content, a
$d$-qubit level-$l$ Clifford-hierarchy resource---non-Clifford (magic)
precisely for $l\ge3$---where $d$ is the largest index in $\{1,\dots,k\}$
for which \eqref{eq:finite-diff} is non-zero modulo $2^{l+1}$.
\end{proposition}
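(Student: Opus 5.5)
We expand the residual phase in the binomial (Newton) basis of the Hamming weight, read each coefficient as a symmetric diagonal gate of definite degree, and use the Clifford-hierarchy level of that gate to classify the output. Any function $f$ on $\{0,\dots,k\}$ has the Newton forward-difference expansion
\begin{equation}
  \frac{\Phi'(|x|,0)}{\theta}=\sum_{j=0}^{k}\Delta^j\!\left[\frac{\Phi'(\cdot,0)}{\theta}\right](0)\binom{|x|}{j}.
\end{equation}
The term $\binom{|x|}{j}$ counts the weight-$j$ subsets of the support of $x$. It is therefore $\sum_{|A|=j}\prod_{a\in A}x_a$, a sum of degree-$j$ monomials. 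So the output is $U\ket{+}^{\otimes k}$ with $U=\prod_j\prod_{|A|=j}\mathrm{C}^{j-1}\mathrm{Z}(c_j\theta)$, where $c_j:=\Delta^j[\Phi'/\theta](0)$. Here $\mathrm{C}^{j-1}\mathrm{Z}(\alpha)$ multiplies $\ket{1\cdots1}_A$ by $e^{i\alpha}$. Since $\Phi'$ is only defined mod $2\pi$ and $\theta=\pi/2^l$, each $c_j$ is meaningful modulo $2^{l+1}$. The $j=0$ term is a global phase and vanishes because $\Phi'(0,0)=0$.

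Next we locate each factor in the hierarchy. A controlled phase $\mathrm{C}^{j-1}\mathrm{Z}(c\pi/2^l)$ with $c$ odd lies in $\mathbb{C}_{l+j-1}$; with $c=2^t c'$ it drops $t$ levels. To tie this to the statement's uniform level $l$, we use Lemma~\ref{lem:dicke-asym}. Each surviving gate after $(w_O,0)$ removal contributes an integer phase multiple of $\theta$ with parity structure. By the Pascal/odd-extraction argument of App.~\ref{app:lemma-asym}, the finite difference picks up a factor $2^{j-1}$, so $c_j\in 2^{j-1}\mathbb{Z}$. The degree-$j$ factor is thus $\mathrm{C}^{j-1}\mathrm{Z}(c_j'\pi/2^{l-j+1})$, which sits in $\mathbb{C}_l$ for every $j$. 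This degree–angle trade-off is the diagonal-$\mathbb{C}_l$ closure cited before Definition~\ref{def:borrowed}. A degree-$j$ factor acts nontrivially exactly when $c_j\not\equiv 0\pmod{2^{l+1}}$.

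Let $d$ be the largest such $j$. The degree-$d$ factors are the highest-arity entangling content: all degree-$<d$ factors can be peeled off by controlled operations from $\mathbb{C}_{l-1}$, as in the teleportation remark of Definition~\ref{def:borrowed}, leaving $\mathrm{C}^{d-1}\mathrm{Z}$-type phases. Up to Clifford equivalence, the output therefore contains a $d$-qubit level-$l$ resource. That resource is magic iff it lies outside $\mathbb{C}_2$, i.e.\ iff $l\ge3$: at $l\le2$ all diagonal $\mathbb{C}_l$ states are stabilizer states. Maximality of $d$ gives that no higher-degree term exists.

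The main obstacle is the divisibility $c_j\in2^{j-1}\mathbb{Z}$ modulo $2^{l+1}$ for the residual phase after gate removal, together with making ``dominant entangling content'' precise. For the former, I would redo the inductive computation of App.~\ref{app:lemma-asym} at $i_S=0$ without assuming vanishing of lower sectors. This yields $\Phi'(i_O,0)=\sum_{j}\binom{i_O}{j}(-1)^{j+1}2^{j}\theta S_j$ with integer gate sums $S_j$, so $c_j=\pm2^jS_j$, which is even stronger than needed. For the latter, I would define the dominant content as the top-degree homogeneous part of the phase polynomial, which is invariant under diagonal $\mathbb{C}_{l-1}$ corrections of lower degree.
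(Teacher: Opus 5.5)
Your outline is the paper's: expand $\Phi'(\cdot,0)/\theta$ in the Newton basis $\binom{i_O}{j}$, read each term as a symmetric layer of $j$-qubit diagonal phases, and classify by the top degree. Getting the expansion from the general forward-difference formula and $\binom{|x|}{j}=\sum_{|A|=j}\prod_{a\in A}x_a$ is cleaner than the paper's reading of it off the induction of Lemma~\ref{lem:dicke-asym}. The problems lie in what you add beyond the paper, an explicit level count in place of its appeal to Campbell--Howard.

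Your level formula is off by one. $\mathrm{C}^{j-1}\mathrm{Z}(c\pi/2^l)$ with $c$ odd lies in $\mathbb{C}_{l+j}$, not $\mathbb{C}_{l+j-1}$. At $j=1$, $l=3$ it is $\mathrm{diag}(1,e^{i\pi/8})=\sqrt{T}\in\mathbb{C}_4$; your $\mathrm{C}^{j-1}\mathrm{Z}(\alpha)$ applies $e^{i\alpha}$, whereas the paper's level-$l$ gate $Z(\theta)$ applies $e^{2i\theta}$. In general a nonzero degree-$j$ layer sits at level $l+j-v_2(c_j)$, with $v_2$ the $2$-adic valuation. So $c_j\in2^{j-1}\mathbb{Z}$ places the layer only in $\mathbb{C}_{l+1}$; for example, $c_2=2$ at $l=3$ is a layer of controlled-$T$ gates. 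Your sentence ``sits in $\mathbb{C}_l$ for every $j$'' comes out right only because the off-by-one is offset by quoting the weaker divisibility. What the argument needs is exactly your last-paragraph identity $c_j=(-1)^{j+1}2^jS_j$: a factor $2^{j-1}$ from odd extraction times the $2$ in each gate's $2\theta$. This is precisely what is needed, not more. With it the degree-$j$ layer is $\mathrm{C}^{j-1}\mathrm{Z}(\pm S_j\pi/2^{l-j})\in\mathbb{C}_l$.

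Two further steps fail as written.

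First, the peeling claim is false. Take the paper's own $[[12,3,2]]$ at $l=3$, whose removed gates are $Z(\theta)$ on each of the three outputs and $Z^{\otimes 3}(\theta)$ on all three. There $\Phi'(i_O,0)/\theta=0,-4,-4,-8$, hence $c_1=-4$, $c_2=4$, $c_3=-8$. Beneath the $\CCZ$ layer sits a full $\CS$ layer on every pair, which lies in $\mathbb{C}_3\setminus\mathbb{C}_2$. It cannot be stripped by diagonal $\mathbb{C}_2$ corrections: in units of $\theta$ these shift each $x_ax_b$ coefficient only by multiples of $8$, while the layer has coefficient $4$. Lower-degree layers need not have lower level. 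So ``dominant'' can only mean the top-degree homogeneous part, which is your fallback definition and is invariant under any lower-degree diagonal gate of any level. That is also all the paper's proof asserts.

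Second, ``magic iff $l\ge3$'' does not follow from $c_d\not\equiv0\pmod{2^{l+1}}$. Together with $2^d\mid c_d$ this gives only $v_2(c_d)\in[d,l]$. The top layer then has level $l+d-v_2(c_d)\in[d,l]$, equal to $l$ iff $S_d$ is odd. For $d\le2$ it can be Clifford at $l\ge3$. Nothing in your argument excludes, for instance, $c_1=2^{l-1}$ (i.e.\ $S_1=2^{l-2}$) with $c_{j\ge2}\equiv0$. That profile gives the stabilizer state $S^{\otimes k}|+\rangle^{\otimes k}$, yet has $d=1$ under the stated criterion. You must do one of the following:
\begin{itemize}
\item show such profiles never arise from two-group borrowed identities;
\item add the hypothesis that $S_d$ is odd (equivalently $c_d\not\equiv0\pmod{2^{d+1}}$);
\item weaken the conclusion to membership in $\mathbb{C}_l$.
\end{itemize}
The paper's proof leaves this point inside its citation.
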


\begin{proof}
The inductive step of Lemma~\ref{lem:dicke-asym} shows that $\Phi'(i_O,0)/\theta$
receives a non-vanishing contribution only from the top sector
$(j_O,j_S)=(i_O,0)$, with all lower sectors cancelling by the
borrowed-identity condition.  The binomial factor
$\binom{i_O}{j_O}\binom{i_S}{j_S}\big|_{i_S=j_S=0}=\binom{i_O}{d}$
at $j_O=d$ is precisely the degree-$d$ Newton basis element, so
Eq.~\eqref{eq:phase-asym} at $i_S=0$ is already the Newton expansion
\begin{equation}
  \frac{\Phi'(i_O,0)}{\theta}
  = \sum_{d=1}^{k}
    \Delta^d\!\left[\frac{\Phi'(\cdot,0)}{\theta}\right](0)
    \binom{i_O}{d}.
\end{equation}
Each degree-$d$ term corresponds to a symmetric layer of $d$-qubit diagonal
gates applied to all size-$d$ subsets of the output qubits.  Following
Ref.~\cite{campbell2017unified}, such a layer realizes a level-$l$ diagonal
gate---non-Clifford (magic) precisely for $l\ge3$---if and only if its
Newton coefficient is non-zero modulo
$2^{l+1}$.  The largest such $d$ identifies the dominant resource;
and is used to define the Clifford-equivalence classes, based on which the outputs are classified.
\end{proof}
\section{Example verifications}
\label{app:examples}
%% -----------------------------------------------------------

We verify the borrowed-identity and output-extraction
conditions for two canonical factories: the sequential
$[[8, 3, 2]]$ and the asymmetric $[[20, 4, 2]]$.

\subsection{The $[[8, 3, 2]]$ factory ($n = 4$, $k = 3$,
$l = 3$)}
\label{app:eight32}

The $[[8, 3, 2]]$ factory (\quirk{\qaaazthreetwo}{Quirk})
arises from the sequential construction at $j = 1$ with
$\mathcal{W}_1 = \{1, 2, 3, 4\}$, $n = 4$, $\theta = \pi/8$.

\paragraph{Borrowed-identity check.}
The Dicke-sector condition (Lemma~\ref{lem:dicke}) at $i = 1$
requires
$\sum_{w \in \mathcal{W}} \binom{3}{w - 1}
\in 2^{l - i + 1}\mathbb{Z} = 8\mathbb{Z}$:
\begin{equation}
  \tbinom{3}{0} + \tbinom{3}{1} + \tbinom{3}{2} + \tbinom{3}{3}
  = 1 + 3 + 3 + 1
  = 8 \equiv 0 \pmod{8}.\;\checkmark
\end{equation}
Higher Dicke sectors $i \in \{2, 3\}$ give modulus exponents
$l - i + 1 \in \{2, 1\}$, with the sum $\sum_w \binom{3}{w-i}
\in \{4, 2\}$ respectively, both satisfying the relevant
divisibility. The borrowed-identity gate count is
$\binom{4}{1} + \binom{4}{2} + \binom{4}{3} + \binom{4}{4}
= 15$.

\paragraph{Output extraction.}
Removing all gates of weight $w \le k = 3$ whose support lies
entirely on the three designated output qubits leaves
$N = 15 - (\binom{3}{1} + \binom{3}{2} + \binom{3}{3})
= 15 - 7 = 8$ gates of types
$(w_O, w_S) \in \{(0, 1), (1, 1), (2, 1), (3, 1)\}$.
The residual phase on the output Dicke sectors,
normalized so that $\Phi'(0)/\theta = 0$, evaluates to
\begin{equation}
  \Phi'(i_O)/\theta  \equiv
  \begin{cases}
    0 & i_O = 0,\\
    -8 & i_O = 1, 2, 3,
  \end{cases}
  \pmod{16}.
\end{equation}
The third finite difference in $i_O$ is constant and nonzero,
identifying the residual as a degree-$3$ polynomial---the
signature of $\ket{\CCZ}$ output up to Clifford
correction.\;\checkmark

\subsection{The $[[20, 4, 2]]$ factory ($n = 7$, $k = 4$,
$l = 3$)}
\label{app:twenty42}

We verify the asymmetric borrowed-identity condition at
$(l, n, k) = (3, 7, 4)$, $\theta = \pi/8$,
$(s_\mathrm{total}, s_O, s_S) = (2, 3, 1)$.

\paragraph{Gate set.}
$\mathcal{W}_\mathrm{total} = \{1, 3, 5, 7\}$,
$\mathcal{W}_O = \{0, 1, 4\}$,
$\mathcal{W}_S = \{0, 1, 2, 3\}$. The allowed pairs are
$\mathcal{W} = \{(0, 1), (0, 3), (1, 0), (1, 2), (4, 1),
(4, 3)\}$. The borrowed-identity gate count is
\begin{align}
  N_\mathrm{circuit}
  &= \tbinom{4}{0}\tbinom{3}{1} + \tbinom{4}{0}\tbinom{3}{3}
   + \tbinom{4}{1}\tbinom{3}{0}\notag\\
  &\quad + \tbinom{4}{1}\tbinom{3}{2}
   + \tbinom{4}{4}\tbinom{3}{1} + \tbinom{4}{4}\tbinom{3}{3}\notag\\
  &= 3 + 1 + 4 + 12 + 3 + 1 = 24.
\end{align}

\paragraph{Sign assignment.}
The two-cell deterministic check of asymmetric circuits fails at the all-positive
assignment. The second cell, with
$\sigma_{0, 1} = \sigma_{0, 3} = -1$ on the check-only gates
and $\sigma = +1$ on every gate touching at least one output
qubit, satisfies Eq.~\eqref{eq:asymmetric-master}; we verify
this explicitly below.

\paragraph{Phase verification.}
By Lemma~\ref{lem:dicke-asym}, the closed-form phase at sector
$(i_O, i_S)$ is
\begin{align}
  \Phi(i_O, i_S)/\theta &\equiv
  (-1)^{i_O + i_S + 1}\, 2^{i_O + i_S}\nonumber\\&\quad\times
  \!\!\!\!\sum_{(w_O, w_S) \in \mathcal{W}}\!\!\!\!
  \sigma_{w_O, w_S}\,T_{w_O, w_S}^{(i_O, i_S)}
  \pmod{16},
\end{align}
where $T_{w_O, w_S}^{(i_O, i_S)} \coloneqq
\binom{k - i_O}{w_O - i_O}\binom{n - k - i_S}{w_S - i_S}$ is
the gate-overlap binomial product, with $k = 4$, $n - k = 3$.
For each of the six allowed pairs $(w_O, w_S)$, we tabulate
$T_{w_O, w_S}^{(i_O, i_S)}$ at the relevant non-trivial Dicke
sectors:
\begin{center}
\begin{tabular}{c|cccccc}
$(i_O, i_S)$
& $(0, 1)$ & $(0, 3)$ & $(1, 0)$ & $(1, 2)$ & $(4, 1)$ & $(4, 3)$\\
\hline
$(0, 1)$ & 1 & 1 & 0 & 8 & 1 & 1\\
$(1, 0)$ & 0 & 0 & 1 & 3 & 3 & 1\\
$(0, 2)$ & 0 & 1 & 0 & 4 & 0 & 1\\
$(1, 1)$ & 0 & 0 & 0 & 2 & 1 & 1\\
$(1, 2)$ & 0 & 0 & 0 & 1 & 0 & 1\\
$(2, 1)$ & 0 & 0 & 0 & 0 & 1 & 1\\
\vdots & \multicolumn{6}{c}{(remaining sectors analogously)}
\end{tabular}
\end{center}
Multiplying each $T_{w_O, w_S}^{(i_O, i_S)}$ by the
corresponding sign $\sigma_{w_O, w_S}$ and the prefactor
$(-1)^{i_O + i_S + 1} 2^{i_O + i_S}$, the result at every
$(i_O, i_S)$ vanishes modulo $16$.\;\checkmark

\paragraph{Output extraction.}
Removing the four $(1, 0)$ gates leaves $N = 20$ gates. The
residual output phase satisfies
\begin{equation}
  \Phi'(i_O, 0)/\theta = -2 i_O \pmod{16},
\end{equation}
a degree-$1$ polynomial in $i_O$, identifying the output as
$\ket{T}^{\otimes 4}$ up to Clifford correction.\;\checkmark
%% -----------------------------------------------------------
\section{Asymmetric search: methods and catalogue}
\label{app:asymmetric_search}
%% -----------------------------------------------------------

\subsection{Search procedure}
\label{app:asymmetric_procedure}

The asymmetric search enumerates over the skip parameters
$(s_\mathrm{total}, s_O)$ with $s_S = 1$ fixed. For each parameter
tuple $(l, n, k, s_\mathrm{total}, s_O)$, the gate set $\mathcal{W}$
is determined by the three weight-separation sets
$\mathcal{W}_\mathrm{total}$, $\mathcal{W}_O$, $\mathcal{W}_S$ as
described for asymmetric circuits. Validity is checked by
verifying Eq.~\eqref{eq:asymmetric-master} at each Dicke sector
$(i_O, i_S)$, with $\Phi(i_O, i_S)$ evaluated via
Lemma~\ref{lem:dicke-asym} in $O(k(n - k))$ phase computations per
sign assignment. 

\paragraph{Sign assignment.}
At each parameter tuple, two deterministic sign assignments are
checked in sequence. Cell 1 sets $\sigma_{w_O, w_S} = +1$ for
every gate type. Cell 2 sets $\sigma_{0, w_S} = -1$ on every
check-only gate type ($w_O = 0$) and $\sigma_{w_O, w_S} = +1$
on every gate touching at least one output qubit ($w_O \ge 1$).
The parameter tuple is valid if
Eq.~\eqref{eq:asymmetric-master} holds for either cell and
invalid otherwise. Empirically, mixed sign patterns on check-only gates are never required to recover the factories
listed in
Tables~\ref{tab:asymmetric-l2}--\ref{tab:asymmetric-l4}, so the sign-enumeration cost is bounded by $2$ verifications
per tuple rather than $2^{|\mathcal{W}|}$. We check only whether one of the two cells succeeds, because under the two-group symmetry considered here, with $s_S=1$, the sign assignment affects only validity and not the depth of the circuit. However, circuits with $s_S\neq 1$ or different permutation symmetries may yield more efficient factories from a search over mixed-sign types--this scope is left for future work. 
\paragraph{Output classification.}
After extraction via Eq.~\eqref{eq:N-asym}, the residual phase
polynomial $\Phi'(i_O, 0)/\theta$ is computed by finite differences
in $i_O$. A polynomial of degree $1$ in $i_O$ corresponds to
unentangled $\ket{T}^{\otimes k}$ outputs; a polynomial of degree
$2, 3, \ldots, l$ indicates an entangled $\mathbb{C}_{l}$ magic
state, with degree-$2$ giving $\ket{\CS}$, degree-$3$ giving
$\ket{\CCZ}$, and degree-$l$ at $l \ge 4$ giving the $\CkZ{l-1}$
gate of the $l$-th level. 
Solutions whose extracted output has vanishing residual phase are excluded as trivial stabilizer-state outputs. The rest are classified by residual polynomial degree (see App.~\ref{app:output-extraction}); for example, $[[8, 4, 2]]$ at $l = 3$ is CCZ-equivalent despite its four-qubit output and is classified as such in Fig.~\ref{fig:search-results}.

\subsection{Search complexity and runtime}
\label{app:asymmetric_runtime}
\begin{figure}[h]\centering
\includegraphics[width=\columnwidth]{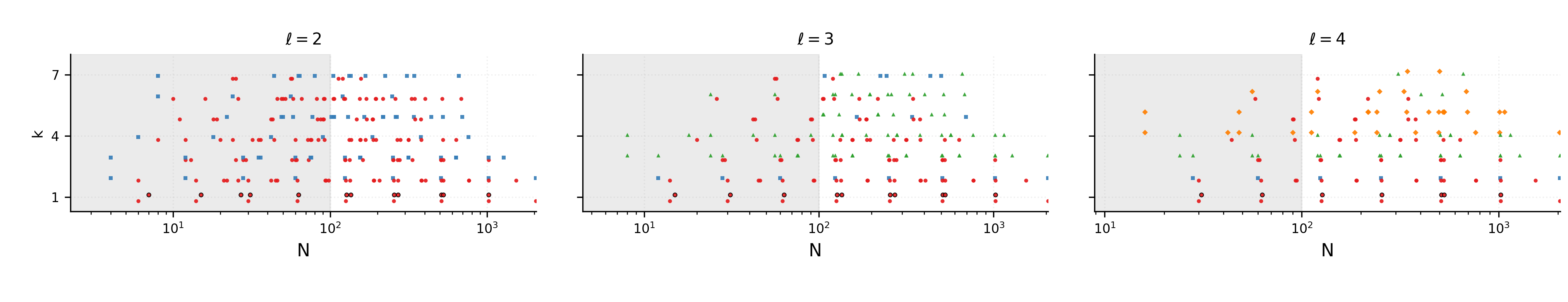}
\caption{Full extent of the two-group catalogue ($N$ up to ${\sim}10^3$); the shaded band marks the $N\le100$ region shown in Fig.~\ref{fig:search-results}.}
\label{fig:search-ext}
\end{figure}
For each $(l, n, k)$ triple, the number of
$(s_\mathrm{total}, s_O)$ parameter tuples is
$|S_\mathrm{total}| \times |S_O|$; per tuple, the deterministic two-cell check requires at most $2$ verifications. Each
verification costs $O(k(n-k))$ phase computations. The total cost at given $(l, n, k)$ is thus
\begin{equation}
O\bigl(|S_\mathrm{total}| \cdot |S_O|  \cdot k(n-k)\bigr),
\label{eq:asymmetric-cost}
\end{equation}
polynomial in $n$ and $k$ at fixed $|S_\mathrm{total}|$, $|S_O|$.

To make this scaling concrete, we ran four sweeps of progressively
wider parameter envelope: an $l \in \{2, 3\}$, $k\le 4$,
$n \le 9$, $s_\mathrm{total}, s_O \le 4$ proof-of-concept range
completes in $1.1$ seconds; extending to $l \in \{2, 3, 4\}$
with the same other bounds gives $1.3$ seconds; widening to
$l \in \{2, 3, 4\}$, $k \le 7$, $n \le 11$,
$s_\mathrm{total}, s_O \le 6$ takes $\sim 7$ seconds; and the full
sweep used throughout this paper, $l \in \{2, 3, 4\}$, $k \le 7$,
$n \le 11$, $s_\mathrm{total}, s_O \le 7$, completes in $\sim 9$ seconds
on a standard laptop (Apple M4, 16 GB RAM) and produces $21{,}920$
valid factories. 

The sub-linear growth from $\sim 7$ to $\sim 9$ seconds across the
$s_\mathrm{total}, s_O \in \{1, \ldots, 6\} \to \{1, \ldots, 7\}$
extension reflects the early-termination structure: at fixed
$(l, n, k)$, larger skip parameters admit progressively fewer
gate types, so the verification at $(s_\mathrm{total}, s_O) = (7, 7)$
is cheaper than at $(1, 1)$. 
The parameter range was chosen to cover all known distance-$2$
factories rather than to saturate the verifier; extending to
larger $n$, larger $k$, or higher $l$ carries trivial additional
cost.
\subsection{Catalogue of factory families}
\label{app:asymmetric_catalogue}
\begin{table}[h]
\centering
\caption{Distance-$2$ family seeds at $l = 2$ recovered by the
asymmetric search (for range covered in Fig.~\ref{fig:search-results}), limited to $N\le 30$. The $k=1$ rows with $s_\mathrm{total}\ge2$ (e.g.\ $[[7,1]]$, $[[15,1]]$) are distance-$3$; all other rows are distance-$2$.}
\label{tab:asymmetric-l2}
\begin{tabular}{c c c c c}
\toprule
$[[N, k]]$ & class & $n$ & $s_\mathrm{total}$ & $s_O$ \\
\midrule
$[[6, 1]]$    & $S$   & 3 & 1 & 1 \\
$[[7, 1]]$    & $S$   & 4 & 2 & 1 \\
$[[14, 1]]$   & $S$   & 4 & 1 & 1 \\
$[[15, 1]]$   & $S$   & 5 & 2 & 1 \\
$[[27, 1]]$   & $S$   & 7 & 4 & 1 \\
$[[30, 1]]$   & $S$   & 5 & 1 & 1 \\
\midrule
$[[4, 2]]$    & CZ    & 3 & 1 & 1 \\
$[[6, 2]]$    & $S$   & 4 & 2 & 1 \\
$[[12, 2]]$   & CZ    & 4 & 1 & 1 \\
$[[14, 2]]$   & $S$   & 5 & 2 & 1 \\
$[[21, 2]]$   & $S$   & 5 & 1 & 2 \\
$[[22, 2]]$   & $S$   & 6 & 2 & 2 \\
$[[26, 2]]$   & $S$   & 7 & 4 & 1 \\
$[[28, 2]]$   & CZ    & 5 & 1 & 1 \\
$[[30, 2]]$   & $S$   & 6 & 2 & 1 \\
\midrule
$[[4, 3]]$    & CZ    & 4 & 2 & 1 \\
$[[12, 3]]$   & $S$   & 5 & 1 & 3 \\
$[[12, 3]]$   & CZ    & 5 & 2 & 1 \\
$[[13, 3]]$   & $S$   & 6 & 2 & 3 \\
$[[25, 3]]$   & $S$   & 7 & 4 & 1 \\
$[[28, 3]]$   & $S$   & 6 & 1 & 3 \\
$[[28, 3]]$   & CZ    & 6 & 2 & 1 \\
$[[29, 3]]$   & $S$   & 7 & 2 & 3 \\
\midrule
$[[6, 4]]$    & CZ    & 5 & 1 & 3 \\
$[[8, 4]]$    & $S$   & 6 & 2 & 3 \\
$[[18, 4]]$   & CZ    & 6 & 1 & 3 \\
$[[20, 4]]$   & $S$   & 7 & 2 & 3 \\
$[[24, 4]]$   & $S$   & 7 & 4 & 1 \\
\midrule
$[[18, 5]]$   & $S$   & 7 & 1 & 5 \\
$[[19, 5]]$   & $S$   & 8 & 2 & 5 \\
$[[22, 5]]$   & CZ    & 7 & 4 & 1 \\
\midrule
$[[8, 6]]$    & CZ    & 7 & 1 & 5 \\
$[[10, 6]]$   & $S$   & 8 & 2 & 5 \\
$[[16, 6]]$   & $S$   & 7 & 4 & 1 \\
$[[24, 6]]$   & CZ    & 8 & 1 & 5 \\
$[[26, 6]]$   & $S$   & 9 & 2 & 5 \\
\bottomrule
\end{tabular}
\end{table}

\begin{table}[h]
\centering
\caption{Distance-$2$ family seeds at $l = 3$ recovered by the
asymmetric search (for range covered in Fig.~\ref{fig:search-results}), limited to $N\le 30$. The $k=1$ row $[[15,1]]$ ($s_\mathrm{total}=2$) is distance-$3$; all other rows are distance-$2$. Format as in
Table~\ref{tab:asymmetric-l2}.}
\label{tab:asymmetric-l3}
\begin{tabular}{c c c c c}
\toprule
$[[N, k]]$ & class & $n$ & $s_\mathrm{total}$ & $s_O$ \\
\midrule
$[[14, 1]]$   & $T$      & 4 & 1 & 1 \\
$[[15, 1]]$   & $T$      & 5 & 2 & 1 \\
$[[30, 1]]$   & $T$      & 5 & 1 & 1 \\
\midrule
$[[12, 2]]$   & $\CS$    & 4 & 1 & 1 \\
$[[14, 2]]$   & $T$      & 5 & 2 & 1 \\
$[[28, 2]]$   & $\CS$    & 5 & 1 & 1 \\
$[[30, 2]]$   & $T$      & 6 & 2 & 1 \\
\midrule
$[[8, 3]]$    & $\CCZ$   & 4 & 1 & 1 \\
$[[12, 3]]$   & $\CCZ$   & 5 & 2 & 1 \\
$[[24, 3]]$   & $\CCZ$   & 5 & 1 & 1 \\
$[[28, 3]]$   & $T$      & 6 & 1 & 3 \\
$[[28, 3]]$   & $\CCZ$   & 6 & 2 & 1 \\
$[[29, 3]]$   & $T$      & 7 & 2 & 3 \\
\midrule
$[[8, 4]]$    & $\CCZ$   & 5 & 2 & 1 \\
$[[18, 4]]$   & $\CCZ$   & 6 & 1 & 3 \\
$[[20, 4]]$   & $T$      & 7 & 2 & 3 \\
$[[24, 4]]$   & $\CCZ$   & 6 & 2 & 1 \\
\midrule
$[[24, 6]]$   & $\CCZ$   & 8 & 1 & 5 \\
$[[26, 6]]$   & $T$      & 9 & 2 & 5 \\
\bottomrule
\end{tabular}
\end{table}

\begin{table}[h]
\centering
\caption{Family seeds at $l = 4$ recovered by the asymmetric
search (for range covered in Fig.~\ref{fig:search-results}), limited to $N\le 30$. The $\CkZ{3}$-output factories
correspond to the new $l=4$ family. Format as in
Table~\ref{tab:asymmetric-l2}.}
\label{tab:asymmetric-l4}
\begin{tabular}{c c c c c}
\toprule
$[[N, k]]$ & class & $n$ & $s_\mathrm{total}$ & $s_O$ \\
\midrule
$[[30, 1]]$   & $\sqrt{T}$ & 5 & 1 & 1 \\
\midrule
$[[28, 2]]$   & CT         & 5 & 1 & 1 \\
$[[30, 2]]$   & $\sqrt{T}$ & 6 & 2 & 1 \\
\midrule
$[[24, 3]]$   & CCS        & 5 & 1 & 1 \\
$[[28, 3]]$   & CCS        & 6 & 2 & 1 \\
\midrule
$[[16, 4]]$   & CCCZ       & 5 & 1 & 1 \\
$[[24, 4]]$   & CCS        & 6 & 2 & 1 \\
\midrule
$[[16, 5]]$   & CCCZ       & 6 & 2 & 1 \\
\bottomrule
\end{tabular}
\end{table}
Tables~\ref{tab:asymmetric-l2}, \ref{tab:asymmetric-l3}, and
\ref{tab:asymmetric-l4} list one representative per
$(k, d, \text{output type}, \text{entanglement})$ combination
recovered by the asymmetric search, at $l = 2, 3, 4$ respectively.
Each row shows the smallest $N$ realizing that combination, the
smallest-$n$ parameter tuple producing that $N$, and the output
magic-state type. The full $21{,}920$-entry catalogue, including
all members of each family at every $n \le 11$, is provided in the
accompanying code
repository~\cite{singh_shraggymsd_simulations_2026}. 

\paragraph{Coverage relative to known catalogues.}
The Bravyi--Haah family $[[3k+8, k, 2]]$ for $k = 2, 4, 6$ appears
systematically at $(s_\mathrm{total}, s_O) = (2, 2j-1)$ with
$k = 2j$: $[[14, 2, 2]]$ and $[[26, 6, 2]]$ are recovered with
all-positive signs; $[[20, 4, 2]]$ requires negative signs
$\sigma_{0, 1} = \sigma_{0, 3} = -1$ on the check-only gates,
while the $k = 2$ and $k = 6$ members are not. 
The cube factory $[[8, 3, 2]]$ and the entangled-output factory
$[[12, 2, 2]]$ are recovered with all-positive signs at
$(s_\mathrm{total}, s_O) = (1, 1)$. The Reed--Muller chain at
$k = 1$ is recovered for $d \in \{2, 3\}$ via
$s_\mathrm{total} \in \{1, 2\}$.
The extended sweep additionally captures every distance-$2$ entry
of the Nezami--Haah triorthogonal-code
catalogue~\cite{nezami2022classification} at the largest $k$ for
each $N \le 32$.

\paragraph{New factory families at $l = 4$.}
A family of $\CkZ{3}$-output factories appears at $l = 4$, $k = 4$,
all-positive signs, $(s_\mathrm{total}, s_O) = (1, 1)$:
$[[16, 4, 2]]$ at $n = 5$ extending through $[[48, 4, 2]]$,
$[[112, 4, 2]]$, $[[240, 4, 2]]$, $[[496, 4, 2]]$,
$[[1008, 4, 2]]$, and $[[2032, 4, 2]]$ at $n = 11$. The smallest
member, $[[16, 4, 2]]$, is the $l = 4$ analogue of the cube
$[[8, 3, 2]]$ at $l = 3$: in both cases, a parent borrowed
identity at $n = l + 2$ produces a degree-$l$ output polynomial,
the highest non-trivial $\mathbb{C}_l$ magic state. Larger
$\CkZ{3}$-output families also appear at $k = 5, 6, 7$ within the
swept range; see Table~\ref{tab:asymmetric-l4}.

\paragraph{Efficiency landmarks.}
Among small distance-$2$ factories with unentangled $T$ output at
$l = 3$, the $[[20, 4, 2]]$ Bravyi--Haah $k=4$ factory attains
$\gamma = \log_2(N/k) = \log_2(20/4) \approx 2.32$, second only
to the Bravyi--Haah $k=6$ factory $[[26, 6, 2]]$ ($\gamma \approx 2.12$) and tying with the catalytic factory $[[10, 2, 2]]$ ($\gamma \approx 2.32$ but with
the smallest $N$)
in the search-recovered set. 
For entangled $\CCZ$-output factories, $[[8, 3, 2]]$ at $l = 3$
attains $\gamma = \log_2(8/3) = 1.4$, the smallest $\gamma$ in the
$\CCZ$-output catalogue. 

%% -----------------------------------------------------------
\section{Sequential search: methods and catalogue}
\label{app:sequential}
%% -----------------------------------------------------------
We can formalize the malleable circuits as follows. For each step $j\in\{1,2,\ldots\}$, the $j$th qubit is disentangled from the rest of the circuit using only multi-weight phase rotations. To do so, we construct a borrowed identity on the remaining $n-j$ qubits and extend each of its gates to also act on the $j$th qubit. After this circuit, the check qubit returns to
$\ket{+}$ and is measured; what survives is an entangled magic state factory on $k=n - j$ qubits. In the main text, we have only discussed examples where $s_j$ is the same for all steps $j$. Note that this is not a necessary condition, and we cover the general case of arbitrary $s_j$ in a brute-force search over sequential factories. Here, we give the closed-form disentangling condition for the sequential search with different $s_j$ in each step. 
\subsection{Construction and phase condition}
\begin{definition}[Sequential borrowed identity]
\label{def:sequential}
Let $\theta = \pi / 2^l$ and partition $n$ qubits into check
qubits $\mathcal{S} = \{q_1, \ldots, q_{n - k}\}$ and output
qubits $\mathcal{O} = \{q_{n - k + 1}, \ldots, q_n\}$. A
\emph{sequential borrowed identity} is constructed by fixing
each $q_j$ in turn and applying all multi-weight phase rotation (Eq.~\ref{eq:gate_def}) from
$\{q_j, \ldots, q_n\}$ with support containing $q_j$ and
weights in $\mathcal{W}_j$, until $q_j$ returns to $\ket{+}$.
The borrowed-identity gate count is
\begin{equation}
  N = \sum_{j = 1}^{n - k}
  \sum_{\substack{w \in \mathcal{W}_j\\ w \le n - j + 1}}
  \binom{n - j}{w - 1}.
\end{equation}
The output qubits are left in a $k$-qubit magic state
$\ket{\Psi_k}$ for some diagonal $\mathbb{C}_l$
gate~\cite{cui2017diagonal}, with the specific gate determined
by the residual phase polynomial
(App.~\ref{app:output-extraction}).
\end{definition}

At step $j$, the active qubit $q_j$ is in superposition
$\ket{0}_j + e^{i \Delta\Phi_j(i)} \ket{1}_j$ entangled with
Dicke states $\ket{i}$ over the remaining qubits. To
disentangle $q_j$, we require $\Delta\Phi_j(i)$ to be (i)
constant in $i$ and (ii) an integer multiple of $\pi/2$, so
that the residual qubit is a stabilizer state measurable as a
check qubit.

\begin{theorem}[Sequential phase condition]
\label{thm:sequential-condition}
For a sequential borrowed identity in which each Dicke-sector
yields a local phase that is a multiple of $2\pi$ on the active
qubit, the disentangling condition at step $j$ is
\begin{equation}
  \Delta\Phi_j(i) =
  \begin{cases}
  \displaystyle 2^{i+1}\theta\,\sigma_j\sum_{w\in\mathcal{W}_j}
  \binom{n-j-i}{w-i}, & i\ge 1,\ j=1,\\[6pt]
  \displaystyle 2\theta\,\sigma_j\sum_{w\in\mathcal{W}_j}
  \binom{n-j}{w-1}\\
  \displaystyle +\,2\theta\!\sum_{q=1}^{j-1}\!\sigma_q\sum_{w\in\mathcal{W}_q}
  \binom{n-q-1}{w-1}, & i=0,\\[6pt]
  0, & i\ge 1,\ j>1,
  \end{cases}
  \label{eq:sequential-condition}
\end{equation}
where all cases are taken mod $2\pi$ and $\sigma_j \in \{\pm 1\}$
are per-step signs. For $j=1$ this recovers
Theorem~\ref{thm:main}. Generically, the construction terminates at $j_f = l$, when the residual phase prefactor reaches $2\theta$ and the output
qubits encode a $\mathbb{C}_l$ magic state.
\end{theorem}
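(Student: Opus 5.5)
The plan is to track, at step $j$, the relative phase $\Delta\Phi_j(i)$ between the $\ket{1}_j$ and $\ket{0}_j$ branches of the active qubit $q_j$. Here $i$ is the Hamming weight of the remaining qubits $\{q_{j+1},\dots,q_n\}$. The key observation is that every step-$j$ gate contains $q_j$ in its support, with the other $w-1$ designated qubits drawn from the $n-j$ remaining ones. Flipping $q_j$ therefore toggles the parity of every step-$j$ gate. The phase difference contributed by step $j$ on a basis element of weight $i$ is then $2\sigma_j\theta$ times the signed difference between the numbers of gates with even and odd overlap $m$ with the $i$ flipped qubits. Explicitly, it equals $2\sigma_j\theta\sum_{w}\sum_{m}(-1)^m\binom{i}{m}\binom{n-j-i}{w-1-m}$.

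For $j=1$ I will run the Pascal-recurrence reorganization of App.~\ref{app:dicke_proof} essentially verbatim. I raise the bottom index $i-m$ times and use $\binom{i}{m}\binom{i-m}{r-m}=\binom{i}{r}\binom{r}{m}$. I then evaluate the alternating inner sum, which is exactly where the factor $2^{i}$ (times the $2$ from the gate) appears. This expresses $\Delta\Phi_1(i)$ as $\binom{i}{r}$-weighted combinations of lower-sector phases, plus a top term $2^{i+1}\theta\sigma_1\sum_w\binom{n-1-i}{w-i}$. The hypothesis in the statement, that each lower Dicke sector contributes a multiple of $2\pi$, kills the lower terms. Requiring the result to be constant in $i$ (condition (i)) is then the same divisibility system as Eq.~\eqref{eq:master-thm} with $n\to n-1$ shifted by one. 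This identifies the $j=1$ case with Theorem~\ref{thm:main}.

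For $j>1$ I split into $i=0$ and $i\ge1$. For $i=0$ every remaining qubit is $0$, so only gates whose support meets the set of flipped qubits only in $q_j$ matter. These are all weight-$w$ step-$j$ gates, giving $2\theta\sigma_j\sum_w\binom{n-j}{w-1}$. There is also a contribution from gates of earlier steps $q<j$ that contain $q_j$ but no other $1$: the earlier qubit $q$ is already back in $\ket{+}$ but, as a control-free parity, still contributes on the $q_j$ flip. Counting these gates gives the $\binom{n-q-1}{w-1}$ terms. For $i\ge1$, $j>1$, I will argue that the inductive hypothesis, namely that the step-$(j-1)$ construction already produced a borrowed identity on the remaining register, forces the $i$-dependent part to vanish mod $2\pi$. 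The argument is the same cancellation as in Lemma~\ref{lem:dicke-asym} with $q_j$ as the single check qubit, which gives case three.

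The termination claim follows from the prefactor $2^{i+1}\theta$: once the residual degree reaches $l$, the prefactor hits $2\theta$, and the output phase polynomial has maximal degree $l$. By App.~\ref{app:output-extraction} this is a $\mathbb{C}_l$ resource. The main obstacle is the $i=0$, $j>1$ bookkeeping of earlier-step gates, in particular justifying that only gates through $q_j$ with no other active support contribute. A secondary obstacle is making the "multiple of $2\pi$ per sector" hypothesis precise enough that the $i\ge1$ terms genuinely vanish rather than only modulo lower-sector constraints. I would first check both points on the four-qubit parent.
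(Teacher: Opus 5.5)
Your $j=1$ computation and your $i=0$ bookkeeping follow the paper's own sketch: parity sums over the step-$j$ gates, plus the earlier-step gates through $q_j$. Your third case, however, has a genuine gap.

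The inductive hypothesis you invoke is false. After step $j-1$ only $q_1,\dots,q_{j-1}$ are back in $\ket{+}$. The qubits $q_j,\dots,q_n$ carry the residual phase $R_{j-1}$ of the earlier gates, which is precisely the output magic state of the $(j-1)$-step factory. For example, after step 1 of the four-qubit parent, $R_1=\pi\,[z\neq 0]$, the CCZ-class state. Lemma~\ref{lem:dicke-asym} presupposes vanishing phase in all lower sectors, including the $(i',0)$ sectors that carry this magic phase. So it cannot be invoked with $q_j$ as the single check.

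Worse, the $i\ge 1$ vanishing is not an identity. Your local term $2\sigma_j\theta\sum_w\sum_m(-1)^m\binom{i}{m}\binom{n-j-i}{w-1-m}$ vanishes identically for all $i\ge1$ only when $\mathcal{W}_j$ is the full weight set. Two examples:
\begin{itemize}
\item In the five-qubit odd-weight parent at $j=2$, $i=3$, the step-2 gates contribute $+\pi$ and the step-1 residual contributes $-\pi$. Only their sum vanishes. The paper's sketch asserts that these two pieces cancel separately, which also fails here.
\item For $n=5$, $l=3$, $\mathcal{W}_1=\{1,3,5\}$, $\mathcal{W}_2=\{1,2\}$ and all signs $+1$, one gets $\Delta\Phi_2(0)=2\pi$ but $\Delta\Phi_2(1)=\pi/2$.
\end{itemize}
So no argument that ignores the weight sets can deliver case three.

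What is missing is the Newton expansion of the relative phase of $q_j$, with the earlier qubits set to $0$:
\[
\Delta\Phi_j(i)=2\theta\sum_{t=0}^{i}\binom{i}{t}(-2)^{t}\,C_j(t),
\]
\begin{multline*}
C_j(t)=\sigma_j\sum_{w\in\mathcal{W}_j}\binom{n-j-t}{w-1-t}\\
+\sum_{q<j}\sigma_q\sum_{w\in\mathcal{W}_q}\binom{n-q-1-t}{w-2-t}.
\end{multline*}
Here $C_j(t)$ is the signed number of gates from steps $\le j$ that contain $q_j$ and a fixed $t$-subset of $\{q_{j+1},\dots,q_n\}$. Case three is then the extra condition $C_j(t)\in 2^{l-t}\mathbb{Z}$ for $t\ge1$.

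This condition holds automatically for the malleable parents, i.e.\ $\mathcal{W}_q=\mathcal{W}\cap[1,n-q+1]$ with a common sign and $\mathcal{W}$ solving Theorem~\ref{thm:main}. There the gates of steps $\le j$ through $q_j$ are exactly the parent's gates through $q_j$. Hence $C_j(t)=\sigma\sum_{w\in\mathcal{W}}\binom{n-1-t}{w-1-t}$, which Theorem~\ref{thm:main} at index $t+1$ makes divisible by $2^{l-t}$ (including $t=0$).

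Two smaller points.
\begin{itemize}
\item If you carry out your $j=1$ Pascal reorganization, its top term is $(-1)^i2^{i+1}\sigma_1\theta\sum_w\binom{n-1-i}{w-1-i}=(-1)^i\sigma_1\Phi(i+1)$, with $\Phi$ from Lemma~\ref{lem:dicke} at the same $n$. This agrees with the displayed $\binom{n-1-i}{w-i}$ only up to sign, and only after one further Pascal step that uses the vanishing of sector $i-1$. It is also what makes $j=1$ reproduce Theorem~\ref{thm:main} at the same $n$, rather than an ``$n\to n-1$'' system: sector $i$ of $q_1$ corresponds to index $i+1$, and the $i=0$ sector supplies index $1$. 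So constancy in $i$ alone is not enough.
\item A step-$q$ gate through $q_j$ must contain both $q_q$ and $q_j$, so your $i=0$ count gives $\binom{n-q-1}{w-2}$ rather than $\binom{n-q-1}{w-1}$. The two sums coincide for the all-weight and odd-weight sets of the worked examples. In general, do not expect to reproduce the displayed binomial verbatim.
\end{itemize}
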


The three cases reflect the structure of the phase accumulation.
For $j=1$ and $i\ge 1$, the standard Dicke-sector simplification
of the even-odd parity difference applies directly.
For $j>1$ and $i>0$, the even and odd parity sums of the local
gate contribution cancel identically, and the residual
contributions from earlier steps also cancel mod $2\pi$, giving
$\Delta\Phi_j(i)=0$.
For $i=0$, the odd parity sum vanishes (no odd $m\le 0$), so only the even parity sum contributes to the local phase; likewise
$\Phi(0,j-1,0)=0$ for the same reason, so the residual reduces
to $\Phi(1,j-1,0)$, giving the cumulative expression in
Eq.~\eqref{eq:sequential-condition}.
Each sequence step halves the residual phase prefactor, so after step $j$ every nonzero Dicke sector carries a phase
$2^{l+1-j}\theta$, equal to $2\theta$ at $j=l$.

\subsection{Worked examples}
 
We verify Eq.~\eqref{eq:sequential-condition} explicitly for the four- and five-qubit malleable parent circuits discussed in
the main text, working at $l=3$ ($\theta=\pi/8$,
$2\pi/\theta=16$) with skip parameter $s=1$ unless otherwise stated.
The four-qubit parent ($n=4$, $\mathcal{W}_j=\{1,\ldots,n-j+1\}$)
produces a sequential chain $[[8,3,2]]\to[[12,2,2]]\to[[14,1,2]]$,
and the five-qubit parent ($n=5$, $s=2$, odd weights
$\mathcal{W}_j=\{w\le n-j+1:w\text{ odd}\}$) extends this to
$[[8,4,2]]\to[[12,3,2]]\to[[14,2,2]]\to[[15,1,3]]$.
 
\paragraph{Step $j=1$: $[[8,3,2]]$ and $[[8,4,2]]$.}
For $\mathcal{W}_1=\{1,2,3,4\}$ at $n=4$, the $i\ge 1$ condition gives
\begin{equation*}
  \Delta\Phi_1(i)/\theta = 2^{i+1}\!\sum_{w=1}^{4}\binom{3-i}{w-i}
  = 2^{i+1}\cdot 2^{3-i} = 16\equiv 0,
\end{equation*}
and at $i=0$: $2\sum_w\binom{3}{w-1}=2\cdot 8=16\equiv 0$.
This yields $[[8,3,2]]$ (\quirk{\qaaazthreetwo}{Quirk}) with
output $\ket{\CCZ}$ (App.~\ref{app:eight32}).
At $n=5$ with $s=2$, $\mathcal{W}_1=\{1,3,5\}$:
$2\sum_{w\in\mathcal{W}_1}\binom{4}{w-1}=2(1+6+1)=16\equiv 0$ and
$2^{i+1}\sum_{w\in\mathcal{W}_1}\binom{4-i}{w-i}\equiv 0$ for $i\ge 1$,
yielding $[[8,4,2]]$ (\quirk{\qaaazfourtwo}{Quirk}) with output
Clifford-equivalent to $\ket{\CCZ}$.
 
\paragraph{Step $j=2$: $[[12,2,2]]$ and $[[12,3,2]]$.}
At $n=4$, adding $\mathcal{W}_2=\{1,2,3\}$:
\begin{align*}
  \Delta\Phi_2(0)/\theta
  &= 2\!\sum_{w\in\mathcal{W}_2}\!\binom{2}{w-1}
   + 2\!\sum_{w\in\mathcal{W}_1}\!\binom{2}{w-1}\nonumber\\
  &= 2(1+2+1)+2(1+2+1+0)=16\equiv 0,
\end{align*}
yielding $[[12,2,2]]$ (\quirk{\qtwelvetwotwo}{Quirk}) with output
$\ket{\CS}$. At $n=5$ with $s=2$ ($\mathcal{W}_1=\{1,3,5\}$,
$\mathcal{W}_2=\{1,3\}$):
\begin{align*}
\Delta\Phi_2(0)/\theta&=2\sum_{w\in\mathcal{W}_2}\binom{3}{w-1}+2\sum_{w\in\mathcal{W}_1}\binom{3}{w-1}\nonumber\\
&=8+8=16\equiv 0,
\end{align*}
yielding $[[12,3,2]]$ (\quirk{\qtwelvethreetwo}{Quirk}) with output
Clifford-equivalent to $\ket{\CCZ}$.

\paragraph{Step $j=3$: $[[14,1,2]]$ and $[[14,2,2]]$.}
At $n=4$, adding $\mathcal{W}_3=\{1,2\}$:
\begin{align*}
  \Delta\Phi_3(0)/\theta
  &= 2\!\sum_{w\in\mathcal{W}_3}\!\binom{1}{w-1}
   + 2\!\sum_{w\in\mathcal{W}_2}\!\binom{1}{w-1}\nonumber\\&\quad
   + 2\!\sum_{w\in\mathcal{W}_1}\!\binom{2}{w-1}\nonumber\\&
  = 2(1+1)+2(1+1+0)+2(1+2+1+0)\nonumber\\&=16\equiv 0,
\end{align*}
yielding $[[14,1,2]]$ (\quirk{\qparentfour}{Quirk}) with output
$\ket{T}$. At $n=5$ with $s=2$, $\mathcal{W}_1=\{1,3,5\}$,
$\mathcal{W}_2=\mathcal{W}_3=\{1,3\}$:
\begin{align*}
  \Delta\Phi_3(0)/\theta
  &= 2\!\sum_{w\in\mathcal{W}_3}\!\binom{2}{w-1}
   + 2\!\sum_{w\in\mathcal{W}_2}\!\binom{2}{w-1}\nonumber\\&\quad
   + 2\!\sum_{w\in\mathcal{W}_1}\!\binom{3}{w-1}\nonumber\\
  &= 4+4+8=16\equiv 0,
\end{align*}
yielding $[[14,2,2]]$ (\quirk{\qfourteentwotwo}{Quirk}) with output
$\ket{T}$ at distance~$2$.
 
\paragraph{Step $j=4$: $[[15,1,3]]$.}
At $n=5$ with $s=2$, adding $\mathcal{W}_4=\{1\}$:
\begin{align*}
  \Delta\Phi_4(0)/\theta
  &= 2\!\sum_{w\in\mathcal{W}_4}\!\binom{1}{w-1}
   + 2\!\sum_{w\in\mathcal{W}_3}\!\binom{1}{w-1}\nonumber\\
  &\quad+ 2\!\sum_{w\in\mathcal{W}_2}\!\binom{2}{w-1}
   + 2\!\sum_{w\in\mathcal{W}_1}\!\binom{3}{w-1}\nonumber\\
  &= 2+2+4+8=16\equiv 0,
\end{align*}
yielding $[[15,1,3]]$ (\quirk{\qsummaryfifteen}{Quirk}) with output
$\ket{T}$ at distance~$3$, completing the five-qubit parent chain.
 
\paragraph{Catalytic conversions.}
Applying a Clifford+$T$ catalytic
$\ket{\CCZ} \to \ket{T}$ conversion to the $j = 1$ output of
the four-qubit parent yields $[[11, 1, 2]]$
(\quirk{\qelevenoneone}{Quirk}) when one output qubit is retained, and $[[10, 2, 2]]$ (\quirk{\qtentwotwo}{Quirk}) when two are retained. Both are verified for the teleported-$T$ case, accounting for $Z$- and $S$-type errors on the
conversion ancilla.

\subsection{Results and qualitative features}
\begin{figure}[h]
\centering
\includegraphics[width=0.48\textwidth]{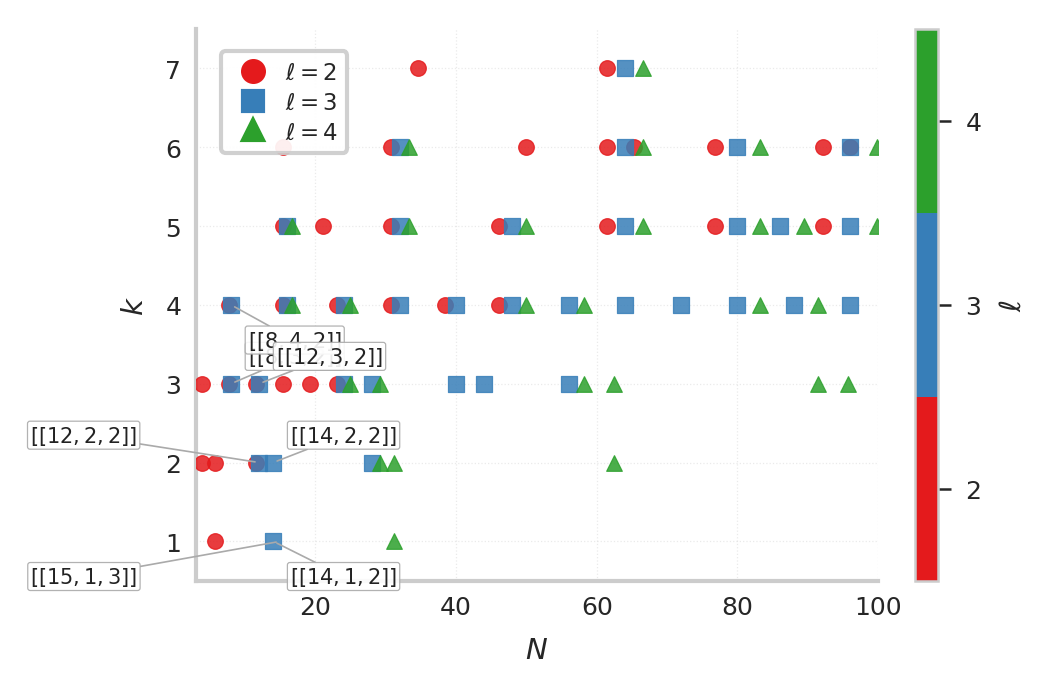}
\caption{%
\textbf{Sequential distillation factories.}
Each marker shows a factory at coordinate $(N, k)$; color and
shape both encode the Clifford-hierarchy level $l$
(red circles: $l=2$; blue squares: $l=3$; green triangles:
$l=4$). We have restricted the search to $j_\mathrm{max}=l$. Multiple markers at the same coordinate indicate
factories valid at multiple $l$ via independent solutions of
Theorem~\ref{thm:sequential-condition}. See
Table~\ref{tab:sequential_factories} for full enumeration of
sequential search output not found in the asymmetric search.%
}\label{fig:sequential_search}
\end{figure}%% ---------------

The sequential search results are shown in Fig.~\ref{fig:sequential_search}.
For each $(l, n)$ with $l \in \{2, 3, 4\}$ and $n \le 6$, we
enumerate sequences $(\mathcal{W}_1, \ldots, \mathcal{W}_J)$
with $J = n - k$ and each
$\mathcal{W}_j = \{w \equiv 1 \bmod s_j : w \le n - j + 1\}$
for $s_j \in \{1, 2, 3, 4\}$, giving $4^J$ candidate tuples
per $(n, J)$. Each candidate is verified in $O(J^2 n)$ time by
evaluating Eq.~\eqref{eq:sequential-condition} for all $i$ at
every step. The full search to $n = 6$ completes in well
under one second.

\paragraph{Non-monotonicity in $l$.}
Unlike the symmetric case, the sequential condition is
\emph{not} a monotone in $l$: the prefactor $2^{i + 1}\theta$
depends on $l$ through $\theta = \pi / 2^l$ in a way that does
not factor through $l \to l - 1$. A sequence
$(\mathcal{W}_1, \ldots, \mathcal{W}_J)$ that disentangles at one $l$ generically does not disentangle for all $l'<l$. Thus, solutions
at different $l$ are independent.

\paragraph{Sign redundancy.}
Each step $j$ carries a sign $\sigma_j \in \{+1, -1\}$.
Exhaustive search over both branches shows that sign flips relabel solutions without producing new code parameters; the sign tuple is suppressed in
Table~\ref{tab:sequential_factories}.

\begin{table}[t]
\centering
\caption{All non-trivial distance-$2$ sequential search outputs with
$l \in \{2, 3, 4\}$, $n \le 6$, $j_\mathrm{max}=l$ that are not found by the asymmetric search. For each $(l, N, k)$: smallest $n$
admitting a solution and lexicographically smallest per-step skip
tuple $\mathbf{s} = (s_1, \ldots, s_{n - k})$ satisfying
Theorem~\ref{thm:sequential-condition} with all $\sigma_j = +1$.}
\label{tab:sequential_factories}
\setlength{\tabcolsep}{5pt}
\renewcommand{\arraystretch}{1.0}
\centering
\begin{tabular}{r c c c l}
\toprule
$N$ & $k$ & $n$ & $\mathbf{s}$ & l\\
\midrule
16 & 3 & 5 & $(2, 1)$& 2\\
20 & 3 & 5 & $(1, 2)$& 2\\
40 & 4 & 6 & $(1, 2)$& 2\\
32 & 5 & 6 & $(1)$& 2\\
\midrule
40 & 3 & 6 & $(2, 1, 1)$& 3\\
44 & 3 & 6 & $(1, 2, 2)$& 3\\
40 & 4 & 6 & $(1, 2)$& 3\\
32 & 5 & 6 & $(1)$& 3\\
\midrule
32 & 5 & 6 & $(1)$& 4\\
\bottomrule
\end{tabular}
\end{table}

\section{Symmetry-free targeted search on extended borrowed identities for synthillation circuits}\label{app:symfree}
\begin{figure} [h]       
  \centering
  \includegraphics[width=0.5\textwidth]{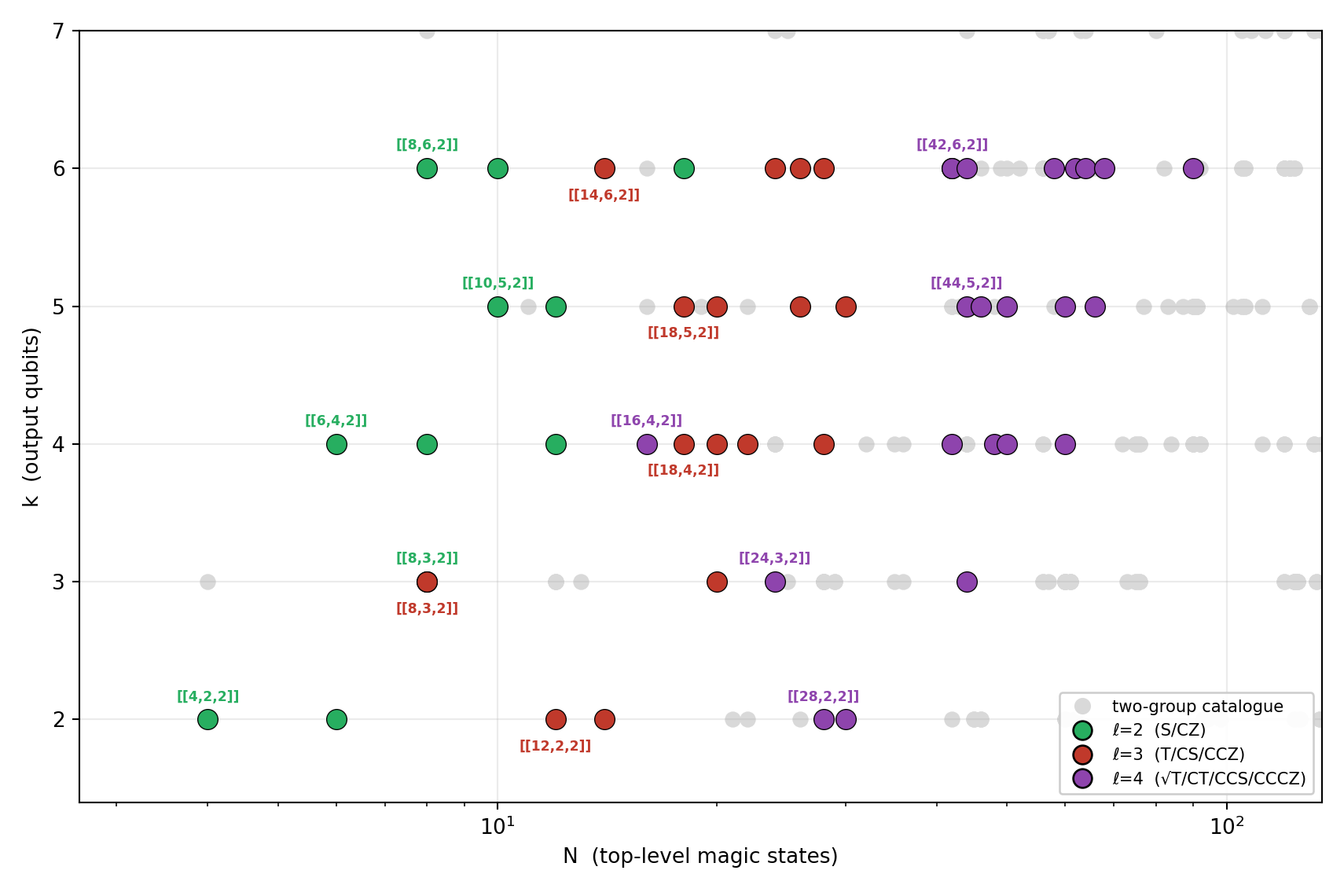}
  \caption{Factories from the symmetry-free targeted-output search (colored)
  over the two-group symmetric catalogue (gray), at $l=2,3,4$, in the
  $(N,k)$ plane ($N=$ top-level magic states, $k=$ output qubits). The
  minimal-$N$ factory at each $k$ is labeled by $[[N,k,d]]$ and its block
  composition.}
  \label{fig:symfree}
\end{figure}
We extend the borrowed identity formalism to include multi-weight phase rotations by angle $c\pi/2^i$ where $i\le l$ and $c\in\mathbb{Z}_{2^i}$.
The extended borrowed-identity condition~(\ref{def:borrowed}) is
\emph{linear} over $\mathbb{Z}_{2^{l+1}}$ in the gate coefficients $\{c\}$.
This makes a direct, symmetry-free search possible: instead of restricting the
gate set to a symmetric weight-class ansatz, we fix a \emph{target output} and
\emph{solve} for the rest of the circuit. For a chosen output structure---a
partition of the $k$ output qubits into magic blocks---the output gates are fixed and the borrowed-identity condition is
solved for the remaining check-coupling gates by $2$-adic Gaussian elimination,
taking the smallest check register that admits a solution.

Re-expressed at the common scale $\pi/2^{l}$, the coefficients are
unconstrained ($c\in\mathbb{Z}_{2^{l+1}}$), so the
solver freely places lower-level (Clifford) rotations. This extension of the borrowed identity is suited to synthillation circuits, which admit all diagonal gates of Clifford-hierarchy level $\mathbb{C}_l$ together with the lower levels $\mathbb{C}_{l'<l}\subset \mathbb{C}_l$~\cite{cui2017diagonal}. We report $N$ as the
number of top-level ($\pi/2^{l}$, odd-coefficient) magic gates---Clifford
gates being free---and compute the distance over these. The construction is
\emph{exact} (a linear solve, not heuristic enumeration); but the
final reduction of $N$ is greedy. Figure~\ref{fig:symfree} shows the
factories found by this symmetry-free search at $l=2,3,4$ that lie outside the two-group symmetric ansatz. They serve as an existence proof that synthillation circuits fall under the borrowed-identity framework with minimal changes.

This symmetry-free targeted-output search is fast but only complementary to the two-group symmetric search discussed in the main text. A symmetry-free SAT search based on our extended borrowed identity condition is inclusive of both and includes higher-distance factories, only excluding the non-CSS results like $[[10,2,2]]$. However, this search is too slow to achieve the final goal of finding the smallest $k>1$ factories at all distances. This SAT search is also general over all Clifford levels.

\section{Canonical examples}\label{app:annotated_quirk_circuits}
\begin{table}[h]
\centering
\caption{Canonical instances for borrowed-identity based distillation factories at level $l$ of the Clifford hierarchy. Clickable links to interactive Quirk
circuits~\cite{quirk} for some instances are provided.}
\label{tab:canonical}
\begin{ruledtabular}
\begin{tabular}{llll}
Approach (\S) & $l=2$&$l=3$&$l=4$\\
\hline
Symmetric (all weights,~\S\ref{sec:symmetric})    &$[[6,1,2]]$& \quirk{\qparentfour}{$[[14,1,2]]$}&$[[30,1,2]]$\\
Symmetric (odd weights,~\S\ref{sec:symmetric}) & $[[7,1,3]]$   & \quirk{\qsummaryfifteen}{$[[15,1,3]]$}&$[[31,1,3]]$\\
Two-group (\dv{deg}$=1$,~\S\ref{sec:asymmetric})  & $[[6,2,2]]$ & \quirk{\qfourteentwotwo}{$[[14,2,2]]$} & $[[30,2,2]]$\\
Two-group (\dv{deg}$=1$,~\S\ref{sec:asymmetric})  & $[[8,4,2]]$ & \quirk{\qtwentyfourtwo}{$[[20,4,2]]$} & $[[44,4,2]]$\\
Two-group (\dv{deg}$=1$,~\S\ref{sec:asymmetric})  & $[[10,6,2]]$ & $[[26,6,2]]$ & $[[58,6,2]]$\\
Two-group (\dv{deg}$=2$,~\S\ref{sec:asymmetric}) & $[[4,2,2]]$  & \quirk{\qtwelvetwotwo}{$[[12,2,2]]$} & $[[28,2,2]]$\\
Two-group (\dv{deg}$=3$,~\S\ref{sec:asymmetric}) &--  & \quirk{\qaaazthreetwo}{$[[8,3,2]]$}& $[[24,3,2]]$\\
Two-group (\dv{deg}$=3$,~\S\ref{sec:asymmetric}) &--   & \quirk{\qtwentyfoursixtwo}{$[[24,6,2]]$}& $[[403,6,2]]$\\
Two-group (\dv{deg}$=4$,~\S\ref{sec:asymmetric}) &--   & -- & $[[16,4,2]]$\\
Catalytic~(\S\ref{sec:malleability})   & --&\quirk{\qelevenoneone}{$[[11,1,2]]$} & --\\
Catalytic~(\S\ref{sec:malleability})   & --&\quirk{\qtentwotwo}{$[[10,2,2]]$} &-- \\
\end{tabular}
\end{ruledtabular}
\end{table}
\begin{figure*}[!h]
    \includegraphics[width=0.85\textwidth]{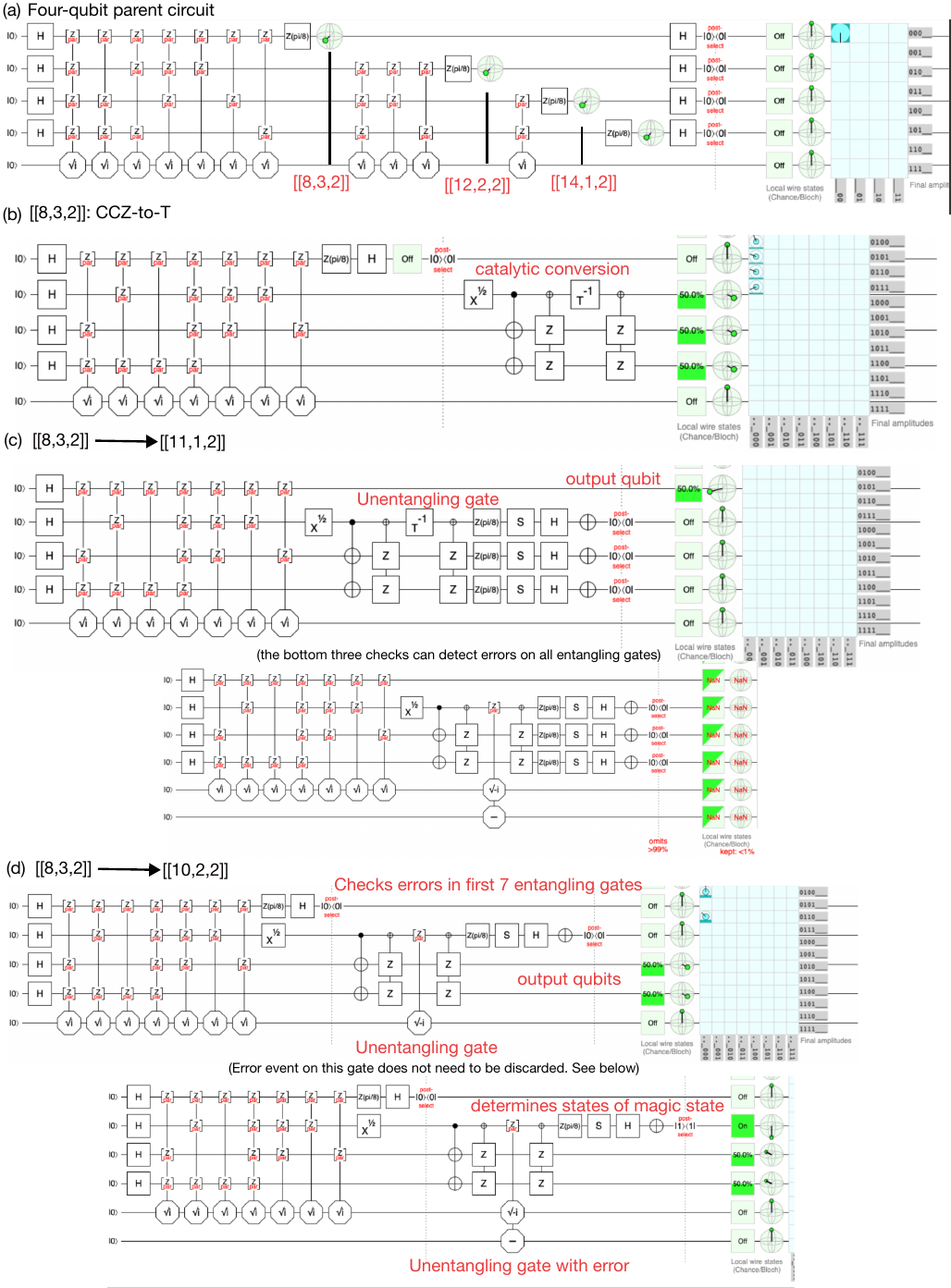}
    \caption{\textbf{The four-qubit parent circuit \quirk{\qparentfour}{[link]}.} The conditional $\sqrt{i}$ denotes multi-weight $\pi/8$ rotations~\cite{litinski2019magic,quirk}. (a) The malleable circuit showing $[[8,3,2]]\rightarrow [[12,2,2]]\rightarrow [[14,1,2]]$. Note that the $T$-to-CS factory $[[12,2,2]]$ (equivalent to the $4\,\CS\!\to\!1\,\CS$) is \emph{larger} than the
$T$-to-CCZ factory $[[8,3,2]]$, even though CS is a lower-level gate, because the number of gates removed from the identity equals the number of $T$ states the output consumes in its unitary synthesis---3 for CS versus 7 for CCZ. Another reason for this is that any pure $T$-to-$CS$ distance-$2$ factory which cannot be interpreted as $CS$-to-$CS$ circuit (analogous to $[[8,3,2]]$) requires a borrowed identity on $3$ qubits which is not possible for $l=3$ (see Sec.~\ref{sec:symmetric}). (b) The three-qubit output from the $[[8,3,2]]$ circuit obeys the catalytic CCZ-to-T conversion. (c) The catalytic conversion to the $[[11,1,2]]$ circuit (top qubit is the output) \quirk{\qelevenoneone}{[link]}. (d) The catalytic conversion to the $[[10,2,2]]$ circuit \quirk{\qtentwotwo}{[link]}, where the error on the unentangling gate does not need to be discarded. The canonical instance of each construction at every level $l$ is collected in Table~\ref{tab:canonical}.}
    \label{fig:four-qubit-circuit}
\end{figure*}
\begin{figure*}[htb]
    \includegraphics[]{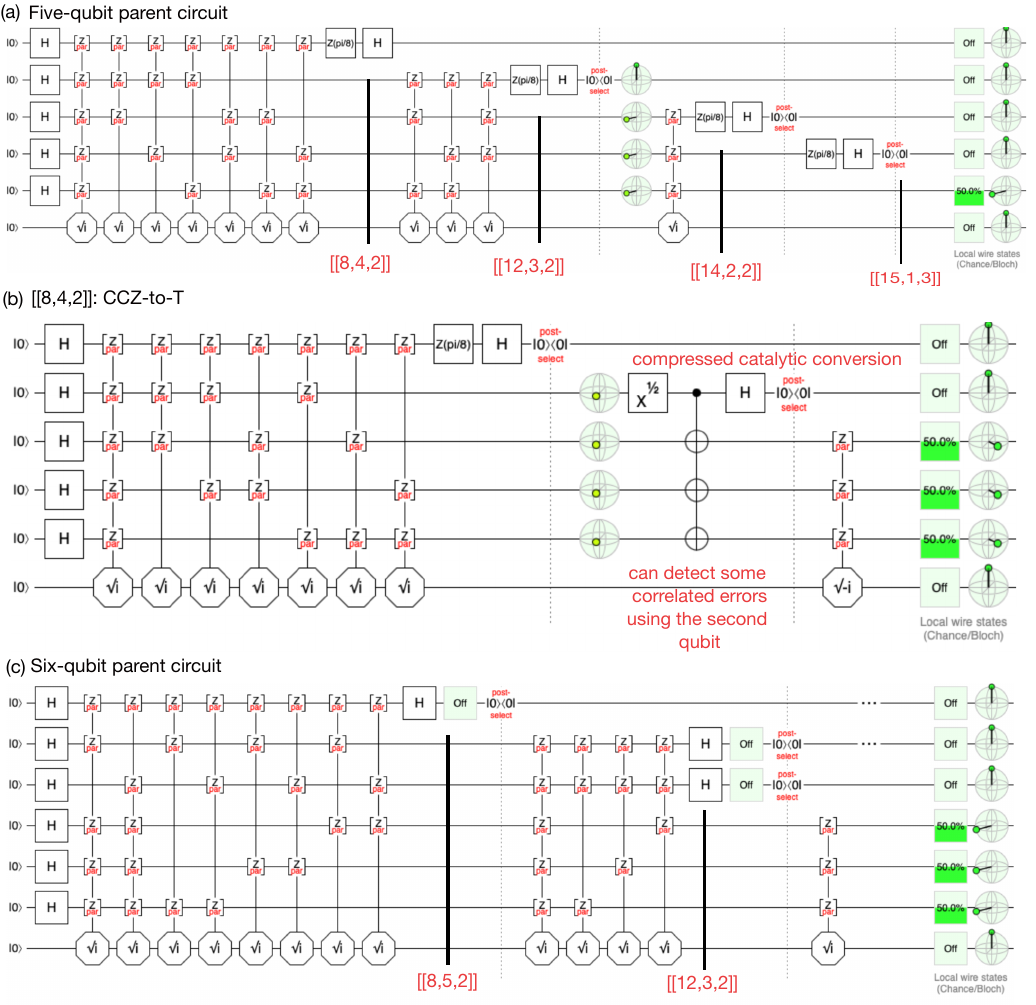}
    \caption{\textbf{The five-qubit \quirk{\qparentfive}{[link]} and six-qubit \quirk{\qparentsix}{[link]} parent circuits.} The conditional $\sqrt{i}$ denotes multi-weight $\pi/8$ rotations~\cite{litinski2019magic,quirk}. (a) The malleable circuit showing $[[8,4,2]]\rightarrow [[12,3,2]]\rightarrow [[14,2,2]]$. By reducing the output to one qubit, this becomes an example of $s=2$ (see Section~\ref{sec:symmetric}), yielding distance-3 factory $[[15,1,3]]$. Note that the $[[12,3,2]]$ factory output also requires only one `catalytic' $T$ rotation to un-entangle the outputs into $3$ T states. Therefore, the $[[12,3,2]]$ output is also CCZ-equivalent. (b) The four-qubit output from the $[[8,4,2]]$ is Clifford equivalent to a CCZ state, yielding $3$ T states via catalytic conversion. This four-qubit output can detect some correlated errors which flip the $X\otimes X\otimes X$ measurement. (c) The extension to a six-qubit parent factory yields a $5$-qubit output Clifford-equivalent to the CCZ state.}
    \label{fig:five-six-qubit-circuit}
\end{figure*}

%% -----------------------------------------------------------
%% -----------------------------------------------------------
\end{document}